\pdfoutput=1
\RequirePackage{fix-cm}
\documentclass[11pt, copyright, gdm]{GDM/google}

\uselogo{} 
\usepackage[authoryear, sort&compress, round]{natbib}
\usepackage[parfill]{parskip}

\usepackage[T1]{fontenc}
\usepackage[utf8]{inputenc}
\usepackage{helvet}
\usepackage{courier}
\usepackage{microtype}

\usepackage{amssymb,amsthm,mathtools,bm}
\allowdisplaybreaks

\usepackage{booktabs}
\usepackage{tabularx}
\usepackage{multirow}
\usepackage{caption}
\usepackage{enumitem}
\usepackage{placeins}
\usepackage{comment}
\usepackage{etoc}

\usepackage{xcolor}
\usepackage{pgfplots}
\usepackage{pgfplotstable}
\usetikzlibrary{patterns}
\pgfplotsset{compat=1.18}
\usepgfplotslibrary{groupplots}
\usepackage[most]{tcolorbox}

\usepackage{todonotes}
\setuptodonotes{inline}

\usepackage{hyperref}
\hypersetup{
  colorlinks=true,
  linkcolor=blue,
  citecolor=blue,
  urlcolor=blue
}
\usepackage{cleveref}

\newtheoremstyle{mytheoremstyle}
  {.5em}       
  {.5em}       
  {\itshape}   
  {}           
  {\bfseries}  
  {.}          
  {.5em}       
  {}           

\theoremstyle{mytheoremstyle}
\newtheorem{theorem}{Theorem}
\newtheorem{lemma}[theorem]{Lemma}
\newtheorem{corollary}[theorem]{Corollary}
\newtheorem{proposition}[theorem]{Proposition}
\newtheorem{remark}[theorem]{Remark}
\newtheorem{definition}[theorem]{Definition}

\tcbset{
  basebox/.style={
    width=\linewidth,
    top=6pt,
    bottom=0pt,
    colframe=black,
    colbacktitle=black,
    enhanced,
    center,
    attach boxed title to top left={yshift=-0.1in, xshift=0.15in},
    boxed title style={boxrule=0pt, colframe=white}
  },
  aibox/.style={basebox, colback=blue!6!white},
  graybox/.style={basebox, colback=gray!20}
}
\newtcolorbox{AIbox}[2][]{aibox, title=#2, #1}
\newtcolorbox{Graybox}[2][]{graybox, title=#2, #1}

\renewcommand{\cite}{\citep}

\newcommand{\adam}[1]{\todo[inline,color=red]{AF: #1}}

\newcommand{\OPT}{\operatorname{OPT}}

\newcommand{\E}{\mathbb E}

\newcommand{\Prb}{\mathbb P}

\title{Price of Anarchy of First-Price Auctions in Agentic Markets}

\correspondingauthor{Correspondence to \texttt{\{shubhendut, jeisenstein, alekhagarwal, imgemp, mkaisers, fisch\}@google.com.}}

\author{%
  Shubhendu Trivedi$^1$, Jacob Eisenstein$^1$, Alekh Agarwal$^2$, Ian Gemp$^1$, Michael Kaisers$^1$, Adam Fisch$^1$ \\
$^1$Google DeepMind,
$^2$Google Research
}

\begin{document}
\begin{abstract}
An agentic market is a decentralized serving-time market where AI agents bid to claim each user prompt. An agent's bid is based on its expected value for the prompt, which is the expectation of the reward for its response minus its inference costs. First-price auctions (FPAs), in which the highest bidder pays its bid and receives the prompt, are a natural way to allocate prompts in such a market, since they are simple and transparent to run, and do not require central knowledge of private capabilities and costs. However, FPAs are not truthful: participants may bid less than their true value in the hope of a larger profit, and the prompt may then be allocated to someone other than the bidder who values it the most. This welfare loss is quantified by the \emph{price-of-anarchy} (PoA). In this paper, we argue that agentic markets are likely to possess features that reduce the price-of-anarchy. The first is a \emph{competitive floor}, which occurs when multiple bidders usually value the prompt above some baseline share of the optimal welfare (e.g., because the best models are often distilled into cheaper alternatives, which perform similarly). We show that competition among them then limits the welfare loss, which can be driven to zero as the floor rises. The second feature is \emph{near-independence} of values after conditioning on the prompt, which occurs when most of the predictable variation in the expected value is determined by the difficulty of the prompt. As the value distribution approaches conditional independence, we can derive an efficiency guarantee that approaches a factor of $1 - 1/e^2$ of the optimal welfare. Thus, both features give rise to welfare lower bounds that are significantly better than the well-known general case of $1 - 1/e$. We  validate these results empirically using auctions with strategic bidders simulated on the EmbedLLM prompt routing dataset.\looseness=-1
\end{abstract}
\maketitle
\section{Introduction}

Consider a user with a task that could be serviced by any one of many AI agents, where each AI agent has a different expected cost and response quality. The user could choose among the agents directly, but that requires knowledge about every option on the market. Alternatively, aggregators and routers can make this choice on the user's behalf---e.g., by establishing a central view of which agent is good at which tasks, and directing users' prompts accordingly~\citep{ding2024hybrid, chen2024frugalgpt, ong2025routellm, aubakirova2025stateofai}. That view has to be measured and maintained, however, which becomes harder as the candidate pool grows beyond well-known providers. A third, and emerging, approach is the "agentic market," which is a serving-time market where possible responding agents bid to claim each prompt, and are rewarded according to the quality of the response they provide~\citep{qi2026economymindsemergingmultiagent, chang2020decentralized, liu2026markets}. Each responder is assumed to privately know (or estimate) its expected quality and cost on the current prompt, which the market must elicit efficiently and transparently, without assuming that responders are honest. Being decentralized, the market also scales with the number of participants rather than with the platform's ability to track them.\looseness=-1

The main design choice in such a market is how to allocate prompts based on the agents' bids, and what price to charge the agent who wins the auction. Many  mechanisms exist, and recent work on agentic markets has largely favored those that are dominant-strategy incentive compatible, such as second-price and Vickrey-Clarke-Groves (VCG) variants, under which bidding one's true value is optimal regardless of what other bidders do~\citep{chang2020decentralized, liu2026markets, cao2026secondbest}.\looseness=-1

In this paper, we analyze the viability of  the first-price auction (FPA), which is an especially simple mechanism. Here, each responder bids for the right to service the current prompt. The highest bidder wins, pays its bid to the platform, submits its response, and collects a reward based on quality~\cite{qi2026economymindsemergingmultiagent, chang2020decentralized}. FPAs have several advantages. The payment rule is transparent, since the winner pays its own bid, with no need to trust the platform to truthfully report a second price~\citep{rothkopf1990vickrey}. Bidders can be strategic, but do not need to be: a nonnegative bid below expected quality minus cost results in a nonnegative expected surplus when it wins, regardless of what others bid. FPAs are also simple to implement, which partly explains why some of the largest markets, such as display advertising auctions, use them~\citep{googleadmanager2019firstprice, paes_leme_2020}.\looseness=-1

While FPAs are successful in practice, they are less appreciated in theory. Unlike VCG-style mechanisms, first-price auctions are not truthful: strategic agents ``shade'' their bids below their true values to increase profit, and if higher-valued agents shade too aggressively, they can lose to lower-valued agents, which decreases \emph{social welfare} (the total value generated by the allocation). The ratio of the welfare at a strategic equilibrium to the best achievable welfare is called the \emph{price of anarchy}~\citep{RoughgardenPOA}, and for FPAs it can be as low as $1 - 1/e\approx 63\%$~\citep{SyrgkanisTardos2013}.\looseness=-1

A potential ${1}/{e} \approx 37\%$ loss in efficiency may seem severe, but it is a worst case bound over arbitrary value distributions. 
We empirically show that in representative agentic markets simulated on the large-scale EmbedLLM dataset~\cite{zhuang2025embedllm}, realized welfare losses are often much smaller. To understand this gap between theory and practice, we derive new bounds that show that the worst case improves substantially under structural conditions that are natural in agentic markets.
The primary condition is a \emph{competitive floor}, which occurs when several bidders usually value the query above a certain baseline. Competition among these bidders then keeps the winning bid high and limits how much welfare can be lost. Competitive floor conditions can arise when agents have comparable substitutes, e.g., due to distillation or similar training data and capabilities~\citep{cao2026secondbest, aubakirova2025stateofai}. Our bounds depend on how many bidders are competitive, how high the floor is, and how reliably it holds. As a secondary result, we show that the PoA bound can also improve through \emph{near-independence}, even when no such floor exists. When bidder values are independent, \citet{JinLu2022} show that the price of anarchy improves to $1 - 1/e^2 \approx 86.5\%$. We prove that this guarantee still holds approximately under small perturbations from independence in terms of total-variation distance. This is practically relevant because shared query features (such as prompt difficulty or type) account for much of the co-movement across models; the idea being that once this information is conditioned out, the residual dependence may be much weaker.\looseness=-1

These results show that when models (1) compete closely on capability or (2) share task difficulty that can be conditioned out, the equilibrium allocation preserves substantially more welfare than worst-case theory predicts. 
Much of the motivation for more complex mechanisms then falls away---the simple and transparent FPA  can already capture most of the available welfare under conditions that agentic markets tend to satisfy. A market platform also does not have to treat these conditions as purely exogenous. In retail, for example, operators often stock ``private labels'' alongside third-party sellers to discipline markups~\citep{SHOPOVA2023102949, ding2025marketplaceoperatorsinducecompetitive}. In the same spirit, we examine how a platform can steer market conditions, either by introducing its own model to increase competition (acting like a reserve price) or by filtering out low-performing responders before bidding begins.\looseness=-1

\textbf{Contributions of this work.} In summary, this paper provides theoretical and empirical tools to analyze first-price auctions under what practitioners may expect from typical agentic market structure:
\begin{enumerate}[leftmargin=*]
    \item \textbf{Competitive floors (\S\ref{sec:concentration}):} We derive price-of-anarchy bounds that improve when several bidders usually value the query above a certain baseline, a fixed fraction of the optimal welfare.\looseness=-1
    \item \textbf{Near-independence (\S\ref{sec:near_independence}):} We show the $1 - 1/e^2$ bound for independent values~\citep{JinLu2022} still holds approximately when values are nearly independent after conditioning on the query.\looseness=-1
    \item \textbf{Platform design (\S\ref{sec:applications}):} We show that platforms can create these conditions, by filtering out weak responders before bidding or by entering in-house models that act as reserve prices.\looseness=-1
    \item \textbf{Empirical analysis (\S\ref{sec:experiments}):} On EmbedLLM~\citep{zhuang2025embedllm}, we find that competitive floors and conditional independence emerge and yield high efficiency, even in adversarial equilibria.\looseness=-1
\end{enumerate}

\section{Preliminaries}
\label{sec:preliminaries}

We begin by briefly reviewing the first-price auction (FPA).  Consider a single user prompt that is to be serviced by one of $n$ bidders (also called ``providers'' or ``agents'') for some user-provided reward.\footnote{Where do these rewards come from? The theory that follows is agnostic to this question, but a few concrete  instantiations can be  helpful to keep in mind. The simplest is a binary reward such as a thumbs up/down reported back by the platform or user (assumed, honestly). More generally, the platform could use any agreed-upon function $\phi(q, y)$ that takes a prompt $q$ and response $y$ and outputs a reward $r$ funded by the user (e.g., a reward in escrow is released contingent upon the agent's response $y$ passing a collection of unit tests or clearing a third-party autorater / verifier).\looseness=-1} Each bidder $i$ has a value $v_i$ for the request, defined as its expected reward minus its cost of service:
\begin{equation}
v_i \;=\; \E[\mathrm{reward}(\text{response}_i) - \text{cost}_i \mid \text{query}].
\end{equation}
In a standard auction, the item for sale is the only source of randomness: each bidder knows its own value, which is random only from the perspective of others. But in agentic markets, the reward and cost of a response is typically not known until it is produced. Simply letting bidder $i$ take the expectation of the response reward minus cost reduces the setting to the standard one. If bidder $i$ wins the request and pays a price $p_i$, its expected payoff (or \emph{surplus}) is then simply $v_i - p_i$.\looseness=-1

In a sealed-bid FPA, every bidder $i$ submits a bid $b_i$. The bidder with the highest bid wins, and pays its own bid. We write $i^*(\mathbf{b})$ for the index of the winning bidder where $\mathbf{b} = (b_1, \dots, b_n)$, with ties broken by an announced measurable bid-only rule, possibly randomized. A rational bidder, however, may prefer not to bid its true value, as doing so guarantees zero expected surplus. Bidders may therefore strategically \emph{shade} their bids below their values to gain positive surplus. We model this strategic interaction as a Bayesian game of incomplete information~\cite{RoughgardenPOA}. Specifically, we model bidder values $\mathbf{v} = (v_1, \ldots, v_n)$ as draws from a joint value prior $\mu$, and a strategy for bidder $i$ is a possibly randomized function $s_i$ mapping a value $v_i$ to a bid $b_i = s_i(v_i)$. A strategy profile $\mathbf{s} = (s_1, \dots, s_n)$ is a \emph{Bayes-Nash equilibrium} (BNE) if, for every bidder $i$ and almost every value $v_i$, the bid $s_i(v_i)$ maximizes bidder $i$'s expected surplus given the strategies of the other bidders ($\mathbf{s}_{-i}$) and the value prior ($\mu$).\looseness=-1

On a technical level, we assume throughout that bidders never bid more than their value at equilibrium, i.e., $0\le s_i(v_i) \le v_i$ almost surely for each $i \in [n]$ and almost every $v_i\ge0$. This is standard in the  literature, since overbidding risks negative surplus \citep{SyrgkanisTardos2013,JinLu2022}.\looseness=-1

Throughout the rest of this paper, expectations and probabilities are taken over the joint randomness in values, private bidding randomization, and tie-breaking. We write $\E_\mu$ and $\Prb_\mu$ only when more than one value distribution is in play, as in the PoA~\eqref{eq:poa}; when the prior is fixed, we write $\E$ and $\Prb$.\looseness=-1

\subsection{Social welfare and the price of anarchy}

It is natural to measure an auction's performance by the total \emph{social welfare} (which we will just refer to as the \emph{welfare}) it generates, which is the sum of the bidder utilities, plus the revenue of the auctioneer.\footnote{We model a user's value as being equal to the price they pay for the response, so that their utility is always $0$.} In the simple single-item setting we consider for agentic markets, the welfare reduces to the value of the winning bidder, $W(\mathbf{b}, \mathbf{v}) := v_{i^*(\mathbf{b})}$. The \emph{optimal welfare} is then $\OPT(\mathbf{v}) := \max_{i} v_i$.\footnote{We do not consider externalities, such as $i^*$ getting additional private utility by placing an advertisement.}

Since FPAs allocate to the highest \emph{bidder} rather than the highest-value provider, shading can cause the auction to select a suboptimal provider. The \emph{price of anarchy} (PoA) quantifies the worst-case welfare loss due to strategic behavior. Formally, the PoA is the infimum, over all settings (number of bidders, joint value distribution $\mu$) and BNE, of the ratio of the expected welfare to the optimal welfare, i.e.,\looseness=-1
\begin{equation}
\label{eq:poa}
\text{PoA} := \inf_\mu \inf_{\mathbf{s} \in \mathrm{BNE}(\mu)} \frac{\E_{\mu}[W(\mathbf{s}(\mathbf{V}), \mathbf{V})]}{\E_{\mu}[\OPT(\mathbf{V})]}.
\end{equation}
The infimum is restricted to priors with $0<\E_\mu[\OPT(\mathbf V)]<\infty$ and a nonempty set of equilibria (that, recall, we assume are non-overbidding). In symmetric settings (where all bidders draw values from the same distribution), the BNE of a first-price auction is unique and fully efficient~\cite{chawla2013unique}, and the PoA is $1$. In asymmetric settings, which arise when bidders have different expected values, as expected in agentic markets, equilibria are generally inefficient.\looseness=-1

A foundational result establishes that every BNE of a first-price single-item auction achieves $\geq (1-1/e) \approx 63.2\%$ of optimal welfare~\cite{SyrgkanisTardos2013, RoughgardenPOA}. 
Under the stronger assumption that bidders' values are drawn independently (but not identically), \citet{JinLu2022} obtain the tight bound of $(1 - 1/e^2) \approx 86.5\%$. These bounds are important, though they do paint a pessimistic picture of first-price auctions. A potential welfare loss of $37\%$ (or even $13.5\%$) is large enough to make a platform hesitant to deploy first-price orchestration at all. These bounds, however, are worst-case lower bounds over all possible value distributions. As we will see, agentic markets often possess structural features that push their realized welfare higher.\looseness=-1
\section{First-price auctions with competitive floors}
\label{sec:concentration}

The price of anarchy fundamentally depends on what is at stake. If all bidders value an item identically, then any allocation achieves full efficiency. More generally, no matter which bidder wins, welfare is at least the lowest bidder value, which therefore acts as a "floor" on the welfare. Write $V_{(1)}\ge\cdots\ge V_{(n)}$ for the bidder values in decreasing order. Then for any given value prior $\mu$, we must have\looseness=-1
\begin{equation}
\label{eq:min_max_bound}
\inf_{\mathbf{s} \in \mathrm{BNE}(\mu)} \frac{\mathbb{E}[W(\mathbf{s}(\mathbf{V}), \mathbf{V})]}{\mathbb{E}[\OPT(\mathbf{V})]}\ge\frac{\mathbb{E}[V_{(n)}]}{\mathbb{E}[V_{(1)}]},
\end{equation}
which does not require strategic analysis. Note that  \Cref{eq:min_max_bound}, as well as the other bounds we will present here, are specific to the prior $\mu$. \Cref{sec:priors} then shows how these bounds are uniform over any family of priors that satisfy the same minimal conditions. Looking at \Cref{eq:min_max_bound}, when  $\E[V_{(n)}]$ is close to $\E[V_{(1)}]$ (e.g., when all models perform similarly on user queries), the bound is close to $1$. A single weak bidder, however, might pull it far below $1$. The rest of this section analyzes competition in the auction to obtain a bound that depends only on the strongest few bidders.\looseness=-1

Intuitively, markets only require competition at the top (that is, just between the highest-valued bidders) to be efficient. In other words, it should be enough for more than one, but not all, bidders to usually value the query highly. We formalize this via a notion of a "competitive floor" for the market.\looseness=-1

\begin{definition}[Competitive floor]
\label{def:competitive_floor}
Fix a fraction $\alpha \in [0, 1]$ of the optimal welfare, a number of bidders $m \in \{2, \ldots, n\}$, and a failure rate $\epsilon \in [0, 1]$. A market has an \emph{$(\alpha, m, \epsilon)$-competitive floor} if at least $m$ bidders value the query at $\alpha \E[V_{(1)}]$ or higher with probability at least $1-\epsilon$, i.e., $\mathbb{P}\bigl(V_{(m)} \ge \alpha \E[V_{(1)}]\bigr) \ge 1-\epsilon$.
\end{definition}

\subsection{Deterministic competitive floors}
\label{sec:deterministic_floor}

As a warmup, suppose two bidders \emph{always} clear $\alpha \E[V_{(1)}]$. The PoA bound then rises with $\alpha$.\looseness=-1

\begin{proposition}
\label[proposition]{proposition:deterministic_poa}
Suppose $0<\mathbb{E}[V_{(1)}]<\infty$ and $V_{(2)} \ge \alpha \mathbb{E}[V_{(1)}]$ almost surely for $\alpha \in [0, 1]$. Then\looseness=-1
\begin{equation}
\inf_{\mathbf{s} \in \mathrm{BNE}(\mu)} \frac{\mathbb{E}[W(\mathbf{s}(\mathbf{V}), \mathbf{V})]}{\mathbb{E}[\OPT(\mathbf{V})]}
 \ge \left(1 - \frac{1}{e}\right) + \frac{\alpha}{e}.
\label{eq:deterministic_poa_general}
\end{equation}
\end{proposition}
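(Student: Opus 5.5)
The plan is to refine the standard smoothness argument for first-price auctions \citep{SyrgkanisTardos2013,RoughgardenPOA} by shifting it above the floor $c := \alpha\,\E[V_{(1)}]$. First I rewrite the target. Since $\E[\OPT(\mathbf V)] = \E[V_{(1)}]$, the right-hand side of \eqref{eq:deterministic_poa_general} times $\E[\OPT]$ equals $(1-1/e)\,\E[V_{(1)}] + c/e$. Equivalently, it equals $c + (1-1/e)\,\E[V_{(1)} - c]$. So I need the shifted guarantee
\[
\E[W] \;\ge\; c + \left(1-\tfrac1e\right)\E\bigl[V_{(1)} - c\bigr].
\]
I will use the welfare decomposition $\E[W] = \E[U] + \E[B]$. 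Here $B = \max_j s_j(V_j)$ is the winning bid (the revenue) and $U$ is the winner's surplus. Note $U \ge 0$ by no-overbidding.

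\textbf{Step 1: shifted deviation for the top bidder.} For a bidder with value $v \ge c$, take the randomized deviation with density $1/(v-b)$ on $[c,\, c + (1-1/e)(v-c)]$. This density has total mass $1$. Against any competing maximum bid $B'$, its expected surplus is at least $(1-1/e)(v-c) - (B'-c)^+$, by the same one-line integral as in the unshifted case. This deviation depends only on $v$.

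The standard indicator trick then applies. The deviation payoff is pointwise nonnegative, so I can restrict to the event that $i$ is the top-value bidder and sum over $i$. The BNE condition then gives
\[
\E[U] \;\ge\; \left(1-\tfrac1e\right)\E[V_{(1)} - c] - \E[(B-c)^+].
\]
This uses $V_{(1)} \ge V_{(2)} \ge c$ almost surely, which holds by hypothesis.

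\textbf{Step 2: combining with revenue.} Since $B = c + (B-c)^+ - (c-B)^+$, adding Step 1 to $\E[B]$ gives
\[
\E[W] \;\ge\; c + \left(1-\tfrac1e\right)\E[V_{(1)}-c] - \E[(c-B)^+].
\]
So everything reduces to controlling the event $\{B < c\}$, where the winning bid is below the floor.

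\textbf{Step 3 (main obstacle).} I expect the event $\{B<c\}$ to be the hard part. On it, the top bidder's shifted deviation (whose bids are all at least $c$) wins outright, so its bound $(1-1/e)(V_{(1)}-c)$ holds without the $(B-c)^+$ penalty. That alone does not recover the missing $(c-B)$. My first attempt is to also invoke the second-highest-value bidder, who has value $\ge c$ almost surely. I would give it a deviation that bids in $[B, c]$, for example uniform or $1/(v-b)$-type mass just above the current bid level. This should yield equilibrium surplus at least of order $(c-B)^+$ on this event.

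Since the second bidder is a different agent from the top bidder, its surplus also counts toward $\E[U]$. That requires replacing $U$ by the sum of all bidders' equilibrium surpluses, which is still at most $\E[W]-\E[B]$ because only the winner has nonzero surplus per realization. The issue is that these two deviation arguments charge the same realized winner's surplus. So I would instead bound the sum of the two deviation benefits by $\E[U]$ via the equilibrium condition for each bidder separately. I would then need the second deviation's gain to exceed $\E[(c-B)^+]$ without inflating the $(B-c)^+$ penalty.

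If the two-bidder charging does not close cleanly, the fallback is to use only the top bidder with an unshifted-plus-shifted mixed deviation. This puts probability on bids in $[B, c]$ with density tuned so that the loss term becomes $(1/e)(c-B)^+$ rather than $(c-B)^+$. I would then check whether the resulting constants still give exactly $(1-1/e)+\alpha/e$.
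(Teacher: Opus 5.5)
Your Steps 1 and 2 are correct and match the paper's anchored-welfare lemma: the same $1/(v-b)$ density started at $c$, followed by adding revenue through $\min\{B,c\}=c-(c-B)^+$. The gap is Step 3. The entire content of the proposition beyond that lemma is that $\E[(c-B)^+]=0$, and you do not prove it; none of the routes you sketch would get there.

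A deviation that ``bids in $[B,c]$'' is not admissible. A deviation may depend only on the bidder's own value and private randomness, not on the realized winning bid.

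Folding a second-bidder deviation into the same smoothness sum also fails. A given bidder is sometimes top and sometimes second, so it must use one strategy for both roles. Summing separate per-bidder inequalities is harmless; the real constraint is mass. The $1/(v-b)$ density on $[c,c+(1-1/e)(v-c)]$ already has total mass one, so any mass diverted to bids below $c$ lowers the $(1-1/e)$ coefficient.

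The same objection defeats your top-bidder-only fallback. Moreover, that fallback never uses the hypothesis $V_{(2)}\ge c$, which is the only thing that can rule out winning bids below $c$.

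The missing idea is to invoke the equilibrium condition a second time, with a different deviation and in a separate argument. This proves a structural fact about every BNE before any welfare accounting: $B\ge c$ almost surely.

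Let $r_{\min}:=\operatorname{ess\,inf}B$ and suppose $r_{\min}<c$. Fix a constant $b_{\mathrm{dev}}\in(r_{\min},(c+r_{\min})/2)$, so that $\Prb(B<b_{\mathrm{dev}})>0$. Let each bidder with $V_i\ge c$ bid $\max\{B_i,b_{\mathrm{dev}}\}$. This depends only on its own value and own bid, and never overbids.

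On $\{B\ge b_{\mathrm{dev}}\}$ nobody loses: a raised bidder had $B_i<B$, so it was not winning, and it now earns nonnegative surplus.

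On $\{B<b_{\mathrm{dev}}\}$, the set $A:=\{i:V_i\ge c\}$ has $|A|\ge2$, and each deviator wins at price $b_{\mathrm{dev}}$. Let $a_i$ be bidder $i$'s equilibrium winning probability given the values and $B$, so $\sum_{i\in A}a_i\le1$. The summed gain is then
\[
\sum_{i\in A}\bigl[(1-a_i)V_i+a_iB-b_{\mathrm{dev}}\bigr]\;\ge\;|A|(c-b_{\mathrm{dev}})-(c-B)\;\ge\;c+r_{\min}-2b_{\mathrm{dev}}>0.
\]
So the gains sum to a positive expectation, and some bidder has a profitable unilateral deviation, which is a contradiction. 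Intuitively, with two bidders above $c$, at most one collects the equilibrium surplus, so the other's gain from jumping above a low winning bid cannot be offset.

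Given $B\ge c$ and $V_{(1)}\ge c$ almost surely, your Step 2 yields $\E[W]\ge c+(1-1/e)(\E[V_{(1)}]-c)$. Dividing by $\E[V_{(1)}]$ gives the claim.
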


All proofs are in the appendix. At a high level, we first prove that deterministic competition from the runner-up ensures that the equilibrium winning bid is at least $\alpha\E[V_{(1)}]$, which also guarantees that the welfare is at least $\alpha\E[V_{(1)}]$. Any equilibrium then  further recovers at least a $(1-1/e)$-fraction of the optimal welfare remaining \emph{above} the floor, i.e., $(1-\alpha)\E[V_{(1)}]$. As an example, suppose a rival always values the query at least half the expected top value ($\alpha = 0.50$). The bound rises from $\approx63\%$ to $\approx82\%$.\looseness=-1

\subsection{Probabilistic competitive floors}
\label{sec:probabilistic_floor}

\Cref{proposition:deterministic_poa} corresponds to a competitive floor with $m = 2$ and $\epsilon = 0$. That is, it requires at least two bidders to clear the floor on \emph{every} query. This is not realistic: on unusual queries for instance, only one model may be able to do well, even if several models usually compete closely on most queries. When bidders know that a second credible competitor is sometimes lacking, they can shade more, and the winning bid may fall below the floor. Combining a bound on how far it can fall (\Cref{app:bid_shortfall}) with a similar argument to that of \Cref{proposition:deterministic_poa} gives the following bound.\looseness=-1


\begin{figure}[t]
\centering

\pgfplotsset{
  every axis/.append style={
    width=0.35\textwidth,
    height=0.23\textwidth,
    xmode=log,
    log basis x=2,
    xtick={2,4,8,16,32,64},
    xticklabels={\llap{$m{=}\,$}2,4,8,16,32,64},
    xmin=1.6, xmax=80,
    ymin=0.60, ymax=1.0,
    ylabel near ticks,
    grid=major,
    grid style={gray!30},
    tick label style={font=\tiny},
    label style={font=\scriptsize},
    title style={font=\scriptsize, yshift=-6pt},
    cycle list name=exotic,
  },
}

\pgfplotsset{
  eps00/.style={color=blue!25!white,  line width=1pt, mark=*, mark size=2pt},
  eps01/.style={color=blue!45!white,  line width=1pt, mark=square*, mark size=2pt},
  eps05/.style={color=blue!60!black,  line width=1pt, mark=triangle*, mark size=2pt},
  eps10/.style={color=blue!35!black,  line width=1pt, mark=diamond*, mark size=2pt},
  eps15/.style={color=blue!15!black,  line width=1pt, mark=pentagon*, mark size=2pt},
  floorline/.style={color=red!75!black, densely dashed, line width=0.9pt, no marks},
}

\begin{tikzpicture}
\begin{groupplot}[
  group style={
    group size=3 by 1,
    horizontal sep=0.45cm,
    ylabels at=edge left,
    yticklabels at=edge left,
  },
  ylabel={PoA bound},
]

\nextgroupplot[title={$\alpha = 0.50$}]
\addplot[eps00] coordinates {(2,0.8161) (4,0.8161) (8,0.8161) (16,0.8161) (32,0.8161) (64,0.8161)};
  \label{plot:eps00}
\addplot[eps01] coordinates {(2,0.7455) (4,0.7937) (8,0.8045) (16,0.8081) (32,0.8096) (64,0.8103)};
  \label{plot:eps01}
\addplot[eps05] coordinates {(2,0.6599) (4,0.7420) (8,0.7692) (16,0.7803) (32,0.7855) (64,0.7881)};
  \label{plot:eps05}
\addplot[eps10] coordinates {(2,0.6321) (4,0.6927) (8,0.7306) (16,0.7478) (32,0.7564) (64,0.7608)};
  \label{plot:eps10}
\addplot[eps15] coordinates {(2,0.6321) (4,0.6506) (8,0.6949) (16,0.7166) (32,0.7278) (64,0.7338)};
  \label{plot:eps15}
\addplot[floorline] coordinates {(1.6, 0.63212) (80, 0.63212)};
  \label{plot:floor}

\nextgroupplot[title={$\alpha = 0.70$}]
\addplot[eps00] coordinates {(2,0.8896) (4,0.8896) (8,0.8896) (16,0.8896) (32,0.8896) (64,0.8896)};
\addplot[eps01] coordinates {(2,0.7909) (4,0.8584) (8,0.8735) (16,0.8785) (32,0.8806) (64,0.8816)};
\addplot[eps05] coordinates {(2,0.6711) (4,0.7860) (8,0.8240) (16,0.8396) (32,0.8469) (64,0.8505)};
\addplot[eps10] coordinates {(2,0.6321) (4,0.7170) (8,0.7700) (16,0.7941) (32,0.8061) (64,0.8123)};
\addplot[eps15] coordinates {(2,0.6321) (4,0.6580) (8,0.7200) (16,0.7504) (32,0.7661) (64,0.7745)};
\addplot[floorline] coordinates {(1.6, 0.63212) (80, 0.63212)};

\nextgroupplot[title={$\alpha = 0.90$}]
\addplot[eps00] coordinates {(2,0.9632) (4,0.9632) (8,0.9632) (16,0.9632) (32,0.9632) (64,0.9632)};
\addplot[eps01] coordinates {(2,0.8363) (4,0.9230) (8,0.9425) (16,0.9489) (32,0.9516) (64,0.9529)};
\addplot[eps05] coordinates {(2,0.6822) (4,0.8299) (8,0.8788) (16,0.8989) (32,0.9082) (64,0.9129)};
\addplot[eps10] coordinates {(2,0.6321) (4,0.7412) (8,0.8093) (16,0.8404) (32,0.8558) (64,0.8638)};
\addplot[eps15] coordinates {(2,0.6321) (4,0.6655) (8,0.7452) (16,0.7842) (32,0.8044) (64,0.8152)};
\addplot[floorline] coordinates {(1.6, 0.63212) (80, 0.63212)};

\end{groupplot}

\node[anchor=west, font=\scriptsize, draw=gray!60, rounded corners=2pt,
      inner xsep=3pt, inner ysep=2pt, fill=white, align=left]
      at ([xshift=0.25cm]group c3r1.east) {%
  \ref{plot:eps00}\,$\epsilon{=}0$\\
  \ref{plot:eps01}\,$\epsilon{=}0.01$\\
  \ref{plot:eps05}\,$\epsilon{=}0.05$\\
  \ref{plot:eps10}\,$\epsilon{=}0.10$\\
  \ref{plot:eps15}\,$\epsilon{=}0.15$\\
  \ref{plot:floor}\,$1{-}1/e$%
};

\end{tikzpicture}

\caption{Price-of-anarchy bounds under competitive floors, for various settings of $(\alpha, m, \epsilon)$.
  Each entry represents the bound for any distribution with $\mathbb{P}(V_{(m)} < \alpha \mathbb{E}[V_{(1)}]) \leq \epsilon$.  At $\epsilon = 0$, we recover the deterministic floor bound from \Cref{eq:deterministic_poa_general}; higher $\epsilon$ start to degrade the efficiency. Larger numbers of credible competitors ($m \ge 4$) stabilize the market, and preserve efficiency even under high $\epsilon$.\looseness=-1
  }
\label{fig:poa_correlated_value}
\vspace{-10pt}
\end{figure}

\begin{theorem}
\label{thm:multi_floor}
Suppose $0 < \mathbb{E}[V_{(1)}] < \infty$ and the market has an $(\alpha, m, \epsilon)$-competitive floor. Then
\begin{equation}
\inf_{\mathbf{s} \in \mathrm{BNE}(\mu)} \frac{\mathbb{E}[W(\mathbf{s}(\mathbf{V}), \mathbf{V})]}{\mathbb{E}[\OPT(\mathbf{V})]}
 \;\ge\; \left(1 - \frac{1}{e}\right) + \alpha\left[\frac{1}{e} - S_m(\epsilon)\right]_+,
\label{eq:multi_floor_bound}
\end{equation}
where $S_m(\epsilon) := I_\epsilon(m-1, 2)^{1/m}$, $I_\epsilon$ is the regularized incomplete beta function, and $[\;\cdot\;]_+ = \max(0, \;\cdot\;)$.
\end{theorem}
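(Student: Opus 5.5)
The plan is to follow the template of \Cref{proposition:deterministic_poa}. We lower bound the equilibrium welfare by the expected winning bid plus a smoothness-type term for the value above the bid. We then control the winning bid using the competitive floor, with an extra loss from the $\epsilon$ event handled by a separate bid-shortfall bound.

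\textbf{Step 1 (smoothness decomposition).} For any BNE $\mathbf s$ and any bidder $i$, the deviation to a bid $b$ drawn from the standard Syrgkanis--Tardos randomized deviation (density $1/(v_i-b)$ on $[0,(1-1/e)v_i]$) gives
\[
\E[u_i(\mathbf s)\mid v_i]\ \ge\ (1-1/e)v_i-\E[B^*\mid v_i],
\]
where $B^*=\max_j s_j(V_j)$ is the winning bid. Apply this only for the top-value bidder $i=i^*(\mathbf V)$, sum over $i$, and use $\sum_i u_i=W-B^*$. This yields the base inequality $\E[W]\ge(1-1/e)\E[V_{(1)}]+\tfrac1e\E[B^*]$, after the rebalancing used in the deterministic case. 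I would check that the deterministic proof gives exactly this form: with $\E[B^*]\ge\alpha\E[V_{(1)}]$ it reproduces $(1-1/e)+\alpha/e$, which is consistent.

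\textbf{Step 2 (bid shortfall).} Set $F=\alpha\E[V_{(1)}]$. We need $\E[B^*]\ge F\,(1-e\,S_m(\epsilon))$. Equivalently, the equilibrium winning bid falls below the floor by at most a fraction $e S_m(\epsilon)$ in expectation, i.e. $\tfrac1e\E[B^*]\ge F(\tfrac1e-S_m(\epsilon))$. This is the content of \Cref{app:bid_shortfall}. To prove it, I would fix a bid level $x<F$ and argue as follows. On the event $G=\{V_{(m)}\ge F\}$, each of the $m$ competitive bidders could deviate to bid slightly above $x$ and earn at least $F-x$ whenever $B^*<x$. Equilibrium therefore forces each such bidder's current surplus to be at least $(F-x)\Prb(B^*<x,\text{bidder in }G)$. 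At most one bidder wins, so the surpluses of the $m$ bidders together are bounded by what the winner collects. A pigeonhole/order-statistic argument over which of the $m$ competitors wins then gives an inequality of the form $\Prb(B^*<x)\le$ a function of $\epsilon$ and $x/F$.

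\textbf{Step 3 (integration and the incomplete beta).} Integrate the tail bound, $\E[B^*]\ge\int_0^F\Prb(B^*\ge x)\,dx$. The regularized incomplete beta $I_\epsilon(m-1,2)$ is the probability that at most $m-2$ of $m$ i.i.d.\ Bernoulli trials fail, i.e. that $\mathrm{Beta}(m-1,2)\le\epsilon$. I expect it to arise from optimizing the worst case in Step 2: an adversary allocates the $\epsilon$ failure mass so that the bidders see $m-1$ vs.\ $m$ competitors with posterior weight following a Beta law. The $m$-th root comes from the symmetric product across the $m$ competitors' best-response constraints, via AM--GM. Finally, combine with Step 1 and take the positive part, since the trivial bound $\E[B^*]\ge0$ always yields $1-1/e$. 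The hypothesis $0<\E[V_{(1)}]<\infty$ lets us divide.

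The main obstacle is Step 2/3: producing the exact constant $S_m(\epsilon)=I_\epsilon(m-1,2)^{1/m}$. Here my first attempt would be to reduce to the worst case in which the $m$ competitors' shortfall probabilities are equal, and then solve the resulting one-dimensional extremal problem explicitly.
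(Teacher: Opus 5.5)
There is a genuine gap, and it starts in Step 1. The inequality $\E[W]\ge(1-1/e)\E[V_{(1)}]+\tfrac1e\E[B^*]$ is false, and the summation you describe cannot produce it. With the unanchored Syrgkanis--Tardos deviation, summing gives $\sum_i\E[U_i]\ge(1-1/e)\E[V_{(1)}]-\E[B^*]$, and adding revenue cancels $\E[B^*]$ completely. Here is a counterexample. Let bidder~1 have value $1$, and let bidders~2 and~3 share a common value $V$ with $\Prb(V\le b)=1/(e(1-b))$ on $[0,1-1/e]$ (an atom of mass $1/e$ at $0$). Bidders~2 and~3 bid $V$, and bidder~1 bids $0$ with priority at ties. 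Every bid $b\le 1-1/e$ gives bidder~1 exactly $(1-b)\Prb(V\le b)=1/e$, so this is a no-overbidding BNE. It has $\E[W]=1/e+\E[V]=1-1/e$, while $\E[B^*]=\E[V]=1-2/e>0$, violating your inequality. The deterministic-floor proof never uses $\E[B^*]$: it anchors the deviation at a \emph{constant} $c$ and uses $B^*\ge c$ almost surely. What is true is the anchored inequality. If each bidder with $V_i>c$ deviates to $c+T_i$, where $T_i$ has density $1/(V_i-c-t)$ on $[0,(1-1/e)(V_i-c)]$, then $\E[W]\ge(1-1/e)\E[(V_{(1)}-c)_+]+\E[\min\{B^*,c\}]$. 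A shortfall of the winning bid below $c$ is therefore charged at rate $1$, not $1/e$.

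So with $c=\alpha\E[V_{(1)}]$, what you actually need is $\E[(c-B^*)_+]\le c\,S_m(\epsilon)$. Your target $\E[B^*]\ge c(1-eS_m(\epsilon))$ has both the wrong quantity and the wrong constant, and it does not plug into the anchored inequality. Your Step 2 sketch would not deliver the needed bound either. Lower-bounding each competitor's equilibrium surplus \emph{level} by $(F-x)\Prb(\cdot)$ leaves $\sum_i\E[U_i]=\E[W-B^*]$ on the other side, and the winner's surplus is not controlled by $F$ or $\epsilon$. You need to compare \emph{gains} pointwise, so that the winner's surplus cancels. For each $b\in(0,c)$, let every bidder with $V_i\ge c$ bid $\max\{B_i,b\}$. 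On $\{B^*<b\}$, each such bidder, deviating alone, wins at price $b$. Every non-winner then gains at least $c-b$, and the current winner loses only its overpayment $b-B^*$. Summing the unilateral gains gives at least $(m-1)(c-b)\mathbf 1\{B^*<b\}\mathbf 1_G-(b-B^*)_+$. The cost term is an expectation $\E[(b-B^*)_+]$ rather than a probability, so you get a differential inequality rather than a tail bound. With $H(b):=\E[(b-B^*)_+]$, it reads $H'(b)\le\epsilon+H(b)/((m-1)(c-b))$ with $H(0)=0$. Integrate with the factor $(c-b)^{1/(m-1)}$, use $H(c)\le H(b)+(c-b)$, and minimize over $z=(c-b)/c$. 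This gives $\E[(c-B^*)_+]/c\le[\epsilon^{m-1}(m-(m-1)\epsilon)]^{1/m}=I_\epsilon(m-1,2)^{1/m}$. The incomplete beta is just this closed form; no Beta-distributed adversary, symmetrization, or AM--GM is involved. Finally, $\E[W]\ge(1-1/e)(\E[V_{(1)}]-c)+c(1-S_m(\epsilon))$ gives the theorem.
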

\begin{remark}
\label{remark:rate}
$S_m(\epsilon)$ is a penalty that increases from $0$ to $1$ as $\epsilon$ goes from $0$ to $1$, and decreases toward $\epsilon$ as $m$ grows. It also satisfies the simple bound $S_m(\epsilon)\le m^{1/m}\epsilon^{1-1/m}$. For example, it is at most $\sqrt{2\epsilon}$ when $m=2$.\looseness=-1
\end{remark}

Compared to \Cref{proposition:deterministic_poa}, the bound in \Cref{thm:multi_floor} subtracts a penalty $\alpha S_m(\epsilon)$ that depends, in a sense, on how reliably the competition holds. The penalty is zero when $\epsilon=0$, in which case \Cref{thm:multi_floor} recovers \Cref{proposition:deterministic_poa}, and it shrinks when more bidders ($m$) clear the floor (at the same rate, $ 1- \epsilon$).\looseness=-1

To further illustrate how the bound in \Cref{eq:multi_floor_bound} behaves, in \Cref{fig:poa_correlated_value} we plot its value while varying $m$, $\alpha$, and a hypothetical failure rate $\epsilon$.
We see that even if the failure rate $\epsilon$ is high (e.g., $\approx 0.10$), we can still guarantee a stronger PoA if the floor holds for several competitors. For example, suppose $m$ models achieve $90\%$ of the mean best value at least $90\%$ of the time. If only two models clear this threshold ($m=2$), the bound stays at $\approx 63\%$. With $m=4$ it rises to $\approx74\%$, and with $m=16$ to $\approx 84\%$.\looseness=-1 

For a given $\mu$, \Cref{thm:multi_floor} holds for all $\alpha \in [0, 1]$ and $m \in \{2, \ldots, n\}$, so we can take the supremum over both.\footnote{Likewise, since the non-strategic bound in \Cref{eq:min_max_bound} holds for the same $\mu$, the larger of the two bounds also holds.} For each $(\alpha, m)$, the smallest valid failure rate is $\epsilon^*_{\alpha, m}= \mathbb{P}_\mu(V_{(m)} < \alpha \mathbb{E}_\mu[V_{(1)}])$, which gives\looseness=-1
\begin{equation}
\inf_{\mathbf{s} \in \operatorname{BNE}(\mu)} \frac{\mathbb{E}[W(\mathbf{s}(\mathbf{V}), \mathbf{V})]}{\mathbb{E}[\mathrm{OPT}(\mathbf{V})]} \;\ge\; \Bigl(1 - \frac{1}{e}\Bigr) + \sup_{\alpha, m} \alpha\Bigl[\frac{1}{e} - S_m(\epsilon^*_{\alpha, m})\Bigr]_+.
\label{eq:sup-bound}
\end{equation}

\Cref{sec:additional_theory} contains additional results on competitive floors. \Cref{sec:conditional_welfare_bounds} conditions \Cref{thm:multi_floor} on public information $X$ about the query, such as the prompt text. The floor $\alpha\,\E[V_{(1)}\mid X=x]$ then adapts to how hard each query is, and the resulting bound is never weaker than the pooled one. \Cref{sec:positive_dependence} also explores  a related setting in which positively dependent (e.g., affiliated) values always raise the bound in \Cref{eq:min_max_bound} compared to independent values with the same marginal bidder distributions.\looseness=-1

\section{First-price auctions with near-independent values}
\label{sec:near_independence}

In the previous section we showed that competitive floors on the values limit welfare loss. A separate question is how the \emph{dependence structure}  among values (or lack thereof) affects strategic behavior. More specifically, under independent private values (IPV; where $V_1 \perp \dots \perp V_n$), \citet{JinLu2022} show that the PoA improves from $1 - 1/e \approx 0.632$ to $1 - 1/e^2 \approx 0.865$. In practice, bidder values are rarely \emph{strictly} independent---yet they may be nearly independent. In agentic markets, values co-vary across queries due to shared task difficulty and similar model costs, but we conjecture that much of this common variation can be explained by observable query metadata. If so, once query metadata $X$ is observed (e.g., the prompt text, its topic, how response quality is measured, the model tiers of the other bidders, etc.), agent values may be approximately independent, where $\mu(v_1, \ldots, v_n \mid X) \;\approx\; \prod_{i=1}^n \mu_i(v_i \mid X).$\looseness=-1

Whether this holds in practice is an empirical question that we examine in \Cref{sec:approx_condt_indep}. But first we must ask whether we can approach the $1-1/e^2$ guarantee with \emph{approximate} conditional independence. We prove that we can. Specifically, for each realization of the public state $X = x$, let $\mu_x := \mu(\cdot \mid X = x)$ and $\nu_x := \prod_{i=1}^n \mu_i(\cdot \mid X = x)$, and write $ \tau(x) \;:=\; \|\mu_x - \nu_x\|_{\mathrm{TV}}$ for the total variation distance between the conditional joint distribution and the product of its marginals. We use $\|\mu-\nu\|_{\mathrm{TV}}:=\sup_A|\mu(A)-\nu(A)|$. When the conditional total variation distance $\tau(x) = 0$, values are conditionally independent given $X = x$. We then can show the following qualitative result.\looseness=-1

\begin{theorem}
\label{thm:near_independence_welfare}
For every fixed $n \ge 2$, there is a nondecreasing function $\eta_n^{\mathrm{cont}} \colon [0,1] \to [0,\infty)$ with $\eta_n^{\mathrm{cont}}(0) = 0$ and $\eta_n^{\mathrm{cont}}(t) \to 0$ as $t \downarrow 0$ such that the following holds. Suppose $0 \le V_i \le \overline{v}(X) < \infty$ almost surely. Then every $\mathbf{s}\in \mathrm{BNE}(\mu)$ satisfies, for almost every $x$,
\begin{equation}
\E[W(\mathbf{s}(\mathbf{V}),\mathbf{V}) \mid X = x]
\;\ge\;
\left(1 - \frac{1}{e^2}\right) \E[V_{(1)} \mid X = x]
\;-\; \overline{v}(x)\,\eta_n^{\mathrm{cont}}(\tau(x)).
\label{eq:continuum_near_independence_level_bound}
\end{equation}
\end{theorem}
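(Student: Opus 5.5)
The plan is to reduce to a single realization $x$ and then argue by compactness/contradiction. The function $\eta_n^{\mathrm{cont}}$ is only asserted to exist and vanish at $0$, so no explicit rate is needed. Because $X$ is public, a BNE of the full game induces, for almost every $x$, a BNE of the conditional game with prior $\mu_x$. The point is that bidders' strategies may depend on $x$, and the interim optimality conditions disintegrate over $x$. By scale invariance of first-price auctions, we may normalize $\overline v(x)=1$. The claim then becomes a statement about priors $\mu$ on $[0,1]^n$ and their non-overbidding BNE. We need
\[
\E_\mu[W]\ge (1-e^{-2})\E_\mu[V_{(1)}]-\eta_n^{\mathrm{cont}}\bigl(\|\mu-\nu\|_{\mathrm{TV}}\bigr),
\]
where $\nu=\prod_i\mu_i$. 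We would define $\eta_n^{\mathrm{cont}}(t)$ as the supremum of the welfare deficit $[(1-e^{-2})\E_\mu V_{(1)}-\E_\mu W]_+$ over all such $(\mu,\mathbf s)$ with TV distance at most $t$. By construction this is nondecreasing and bounded by $1$. It equals $0$ at $t=0$ by \citet{JinLu2022}, applied to the IPV case and rescaled. The real content is continuity at $0$.

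For continuity at $0$, suppose instead that there are $\delta>0$ and a sequence $(\mu^k,\mathbf s^k)$ with $\tau_k\to0$ and deficit at least $\delta$. We want to pass to a limit in which the values are independent and the strategies form an equilibrium, which would contradict Jin--Lu. The cleaner route is to work with the \emph{product} priors $\nu^k$ and show that $\mathbf s^k$ is an $O(\tau_k)$-approximate BNE under $\nu^k$. This holds in an ex-ante sense: each bidder's expected utility from any deviation rule changes by at most $2\tau_k$ when $\mu^k$ is replaced by $\nu^k$, because utilities lie in $[-1,1]$. Welfare and $V_{(1)}$ also change by at most $\tau_k$ in expectation. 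It then suffices to prove a robust version of the Jin--Lu bound: under independent values on $[0,1]$, every ex-ante $\varepsilon$-BNE without overbidding attains welfare at least $(1-e^{-2})\E V_{(1)}-\omega_n(\varepsilon)$ with $\omega_n(\varepsilon)\to0$.

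The main obstacle is this robust Jin--Lu step. Their proof works with the equilibrium bid distributions and interim first-order/indifference conditions, and these are not obviously stable under approximate equilibria. I would first try compactness. Each $\nu^k$ is a product measure on $[0,1]^n$, and the joint law of (values, bids) under $\mathbf s^k$ lives on $[0,1]^{2n}$. Taking a weakly convergent subsequence gives a limit whose value marginals are still a product. Independence of the bidders' private randomizations also survives the limit, since product structure is preserved under weak limits of product measures.

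Next I would show that the limit is an exact BNE, in distributional strategies, of the first-price auction under the limit prior. Approximate best responses pass to the limit if the utility is upper semicontinuous in bids. First-price utilities fail this only at ties. This is handled by the standard device of choosing a tie-breaking rule in the limit, or by arguing that atoms at the winning bid cannot occur in equilibrium with positive deficit. Welfare $v_{i^*}$ has the same tie issue, and I would treat it the same way. Since $\E V_{(1)}$ is continuous under weak convergence, the limit would be an IPV equilibrium with deficit at least $\delta$, contradicting Jin--Lu. If the tie-handling turns out to be too delicate, the fallback is to redo Jin--Lu's argument directly with $\varepsilon$-slack in each interim inequality.
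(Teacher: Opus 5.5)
Your decomposition is the same as the paper's. You fix $x$ and normalize $\overline v(x)=1$. You note that an exact equilibrium under $\mu_x$ has ex ante regret $O(\tau)$ under the product of marginals $\nu_x$, while $\E[W]$ and $\E[V_{(1)}]$ move by at most $\tau$. You then reduce to a robust Jin--Lu statement for independent values with vanishing regret, proved by compactness and contradiction. Two small differences do not matter. The paper sets $\eta_n^{\mathrm{cont}}(t)=(1+c_{\mathrm{JL}})t+\delta_n(2t)$ rather than taking a supremum over all $(\mu,\mathbf s)$ at TV distance at most $t$. Also, with no overbidding all utilities lie in $[0,1]$, so the regret bound is $2\tau$ rather than the $4\tau$ your $[-1,1]$ range would give.

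The gap is the step you flag as the main obstacle and then defer to a ``standard device''. You must turn the weak limit into an \emph{exact} equilibrium of a first-price auction with an announced \emph{bid-only} tie rule and the \emph{same} welfare. This is where nearly all the work lies, and neither of your suggestions does it.

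Top-bid atoms cannot be ruled out. Let a bidder with deterministic value $1/2$ bid $1/2$, and a bidder with deterministic value $1$ bid $1/2+1/k$. This profile has regret at most $1/k$, yet its limit has a tie at $1/2$. That limit is an equilibrium only if ties go to the high-value bidder. On the other hand, the generic endogenous-sharing-rule construction for limits of approximate equilibria can produce tie rules that depend on values. Jin--Lu, and the auction in the statement, concern bid-only rules, so the contradiction does not follow.

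Three ideas are missing:
\begin{enumerate}
\item Carry the winner's identity in the limiting law of (values, bids, winner). Then utilities and welfare converge even when ties form in the limit.
\item Lower-bound the limiting utility by $\int M_i\,dF_i$, where $M_i(v)=\sup_{t\le v}(v-t)\prod_{j\ne i}H_j(t)$. Use a single measurable deviation whose bids sit at continuity points of the limiting rival bid CDFs, where winning probabilities converge regardless of the tie rule. What is needed here is lower semicontinuity of deviation payoffs, not upper semicontinuity of utility. Matching this bound with the trivial upper bound forces any bidder tied at the top with positive surplus to be the recorded winner almost surely.
\item Use independence to show that at each positive-probability top-tie level at most one bidder has a positive-surplus atom. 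Awarding such ties to that bidder is then a bid-only rule. It makes the limit an exact BNE and preserves welfare, because every other tied bidder has value equal to the bid.
\end{enumerate}
Your fallback of redoing Jin--Lu with $\varepsilon$-slack does not substitute for this. Small ex ante regret gives no control over interim first-order conditions on small sets of types.
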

When $\tau(x) = 0$, the error term vanishes and we recover the $1 - 1/e^2$ bound exactly, even if values are arbitrarily correlated \emph{across} queries. Small departures from conditional independence incur an additive penalty scaled by the conditional value bound $\overline{v}(x)$. Converting to a ratio bound requires that $\overline{v}(x)$ is not too large relative to $\E[V_{(1)} \mid X = x]$; integrating over $X$ gives the unconditional guarantee. The proof accounts for changes in both welfare expectations and bidding incentives when the joint law is replaced by the product law. Appendix~\ref{sec:independence_proofs} gives the full argument, with additional refinements.\looseness=-1

Note that competitive floors and near-independence are not mutually exclusive, despite perhaps seeming so. For instance, suppose $V_i=a(X)+\theta Z_i$, where $a(X)\ge\theta$ and, conditional on $X$, $Z_i$ are zero-mean, independent random variables in $[-1,1]$. Then $V_i$ are conditionally independent and lie in an interval of width $2\theta$. When $\theta$ is small relative to $a(X)$, they also share a high competitive floor.\looseness=-1

\section{On implications for agentic market platform design}
\label{sec:applications}

The market structure supporting our bounds is partly under the platform's control, leading to some very simple design recommendations to protect market efficiency. In this section we explore what happens if the platform restricts which models bid on a query, or if it honestly enters its own in-house model (e.g., if it has privately developed one, or hosts an open-weight model) into the auction.

\textbf{Pre-auction filtering.}
Many platforms already run a lightweight router that picks a subset $\mathcal{S}(X) \subseteq [n]$ of models based on the public query information $X$~\cite[\textit{inter alia}]{hu2024routerbench, ong2025routellm}. Let $W_{\mathcal{S}}$ be the welfare at an equilibrium of the auction in which only models in $\mathcal{S}(X)$ bid (note that the randomness in $W_\mathcal{S}$ also depends on $\mathcal{S}(X)$). The winner of this auction is always in $\mathcal{S}(X)$ by definition, so simply applying the naive bound of \Cref{eq:min_max_bound} within it, and marginalizing over $X$, gives\looseness=-1
\begin{equation}
 \frac{\E[\min_{i \in \mathcal{S}(X)} V_i]}{\E[V_{(1)}]} \;\le\; \frac{\E[W_{\mathcal{S}}]}{\E[V_{(1)}]} \;\le\; \frac{\E[\max_{i \in \mathcal{S}(X)} V_i]}{\E[V_{(1)}]}.
\label{eq:router_welfare_bound}
\end{equation}
While this bound might be weak for a market with open participation (where a single poor model can drag the welfare bound down, as discussed at the start of \Cref{sec:concentration}), a platform that \emph{controls} who competes can make it stronger by removing weak models, which raises the lower bound. Removing the best model (i.e., if $V_{(1)} \not\in \mathcal{S}(X))$, however, also lowers the upper bound; thus filtering does incur some risk. Still, that risk can be mitigated by being more conservative in filtering, and as long as the router keeps the strongest models, the bounds of \Cref{sec:conditional_welfare_bounds} still apply to the filtered market $\mathcal{S}(X)$.\looseness=-1

\textbf{Platform participation.}
A platform can also enter its own in-house model into the auction. We study a simple version of this, in which the platform publicly announces its model's value $V_0$ for each query and then bids exactly $V_0$. Since $V_0$ is then part of the public state $X$, every model can condition on it, and the platform's model acts as a query-specific reserve price. This mirrors private-label strategies from retail economics~\cite[see, e.g.,][]{SHOPOVA2023102949, ding2025marketplaceoperatorsinducecompetitive}. Similar to the deterministic competitive floor in \Cref{sec:deterministic_floor}, the platform's bid keeps the winning bid, and hence the welfare, at least $V_0$. On top of this, any equilibrium recovers at least a $(1-1/e)$-fraction of the optimal welfare above $V_0$. Write $\mathbf{V}=(V_0,V_1,\dots,V_n)$ and $\mathbf{s}=(s_0,s_1,\dots,s_n)$ with $s_0(V_0)=V_0$, so that $V_{(1)}=\max_{i\ge0}V_i$ includes the platform's model. Then every $\mathbf{s}\in\mathrm{BNE}(\mu)$ satisfies (see \Cref{app:platform_reserve} for the derivation):\looseness=-1
\begin{equation}
\frac{\E[W(\mathbf{s}(\mathbf{V}),\mathbf{V})]}{\E[V_{(1)}]} \;\ge\; \left(1 - \frac{1}{e}\right) + \frac{1}{e}\,\frac{\E[V_0]}{\E[V_{(1)}]}.
\label{eq:platform_reserve_poa}
\end{equation}
Notice that if $\E[V_0] \ge \alpha\E[V_{(1)}]$, this bound is at least that of \Cref{proposition:deterministic_poa}. The assumptions of \Cref{proposition:deterministic_poa}, however, require a rival $V_{(2)} \geq \alpha\E[V_{(1)}]$ almost surely (i.e., on \emph{every} query), whereas the platform model's value (which is effectively a rival) only needs to exceed $\alpha\E[V_{(1)}]$ on average.
\section{Experiments}
\label{sec:experiments}

We now test how well our bounds describe welfare in a realistic LLM market. We run one auction with all models on the full EmbedLLM dataset~\cite{zhuang2025embedllm}, compare the competitive-floor bound to realized welfare over many random sub-markets, and then also study hand-picked, illustrative cohorts that fix optimal welfare while varying market structure. The appendix contains additional results that analyze, empirically, how weakly dependent values are after conditioning on the query (\Cref{sec:approx_condt_indep}), as well as results with platform-owned bidders and pre-auction filtering (\Cref{app:platform_bidder_exp,app:router_exp}).\looseness=-1

\subsection{Setting}
\label{sec:setting}
\textbf{Dataset.} We simulate auctions with strategic bidders on the EmbedLLM routing dataset~\citep{zhuang2025embedllm}, which aggregates over $30k$ prompts across math, code, question answering, and reasoning benchmarks, along with the scored responses from $112$ models of various sizes. To align with our convention that a model's value is its {expected reward} minus its cost, we set $v_i(q) = \max(0, \hat{\E}[\mathrm{acc}_i \mid q] - c_i)$, where $\hat{\E}[\mathrm{acc}_i \mid q]$ estimates the probability that model $i$ answers query $q$ correctly, computed by $k$-nearest neighbors ($k = 16$) over {Gemini} prompt embeddings~\citep{shanbhogue2026geminiembedding2native}, excluding $q$. Following \citet{jitkrittum2025universal}, we set $c_i$ to be linear in parameter count, at $1$ per trillion parameters, so that, e.g., {Qwen-1.5-7B-Chat} has a cost of $.00772$ and {Qwen-1.5-32B-Chat} has a cost of $.0325$. Two models were excluded because their parameter count could not be determined, leaving $110$ total.\looseness=-1

\textbf{Strategic bidding.} Each model submits a sealed bid $b_i(q) \le v_i(q)$ for each query $q$, and the highest bidder wins and pays its bid. We partition the prompts into $20$ $K$-means clusters using their embeddings, and bidders condition their strategies on the cluster and their own value $v_i(q)$. A bidder therefore knows the approximate type of prompt it is bidding on and its own estimated value, but knows its rivals' values only through the prior distribution, conditioned on the cluster. Within each cluster we solve for (approximate) Bayes--Nash equilibria by fictitious play using a discretized grid over both values and possible bids (see \Cref{app:solver} for details). Equilibria are not necessarily unique, and certain equilibria might have substantially worse welfare than others. We therefore solve every market twice: once for a \emph{typical} equilibrium, where at each iteration bidders update their strategy to the best (surplus maximizing) response nearest its current bid until convergence, and once for an (also approximate) \emph{adversarial} equilibrium, where at each iteration each bidder takes the welfare-\emph{minimizing} response that still retains at most $5\%$ relative regret of its best possible individual surplus until convergence.\looseness=-1

\textbf{Metrics.} Welfare within a market is pooled over all prompts and reported as a fraction of the first-best oracle, $\OPT = \E[\max_i v_i(q)]$, which we call the ``realized PoA'' to distinguish it from the worst-case bounds of \Cref{sec:concentration}.  By a ``market'' we mean a subset of the $110$ models bidding against each other, and most of our experiments draw many such subsets at random, with the sampling scheme stated in each case. Every number we report is therefore computed over prompts and over sampled markets. Since the bounds we are testing are worst-case statements, we report both the mean and the $0.1$-percentile (\texttt{p0.1}) of realized PoA across markets,\footnote{We use $99.9\%$ coverage instead of the minimum to avoid outlier markets where the solver has not converged.} and we compare both against the competitive-floor bound of \Cref{thm:multi_floor} evaluated on the same market (where like the equilibrium strategies, the bound is computed using the empirical distribution of the bidder values). We also track how much the winner shades, $1 - \E[B_\mathrm{winner}/V_\mathrm{winner}]$. Note, however, that shading (i.e., low winning bids) on its own does not cost welfare, since queries are misallocated only when bidders shade by (problematically) different amounts. Finally, as bidders condition on the cluster (see "Strategic bidding" above), each of the $20$ clusters is effectively its own auction, and the pooled welfare averages over them. As a robustness measure, we also report welfare on the \emph{worst cluster}, i.e., the cluster where a given allocation rule does worst.\looseness=-1

\subsection{Results}

\begin{table}[!t]
\centering
\small

\setlength{\tabcolsep}{4.5pt}
\begin{tabular}{llcccc|cc}
\toprule
& & Oracle & Random  & Best single & Best per cluster & FPA & FPA (adv) \\
\midrule
\multirow{2}{*}{Full dataset}       & Empirical welfare       & $0.768$ & $0.416$ & $0.600$ & $0.686$ & $0.762$ & $0.758$ \\
                               & Empirical PoA & $1.000$ & $0.542$ & $0.781$ & $0.894$ & $0.992$ & $0.988$ \\
\midrule
\multirow{1}{*}{Worst cluster} 
                               & Empirical PoA & $1.000$ & $0.083$ & $0.060$ & $0.640$ & $0.960$ & $0.930$ \\
\bottomrule
\end{tabular}
\caption{Realized welfare on the full EmbedLLM market under different allocation rules, raw and as a fraction of the optimal welfare. As baselines, \emph{Random} allocates queries randomly, while \emph{Best single} and \emph{Best per cluster} use the model with the highest average value overall or within each cluster, respectively; the latter is a router that sees only the prompt's cluster, the same public state the bidders condition on. \emph{FPA} and \emph{FPA (adv)} use a typical and an adversarially chosen (approximate) equilibrium (\S\ref{sec:setting}). Worst-cluster welfare is averaged on each rule's lowest-welfare  prompt cluster (out of $20$).\looseness=-1}
\vspace{-10pt}
\label{tab:rules}
\end{table}

\begin{figure}[t]
\centering
\input{tikz/bound_buckets}
\vspace{-5pt}
\caption{\textbf{Left:} Empirical distributions of the \Cref{thm:multi_floor} bound (dotted) and the empirical PoA (solid) across $100k$ random $n$-model subsets of EmbedLLM, for $n \in \{2, 5, 10\}$. Both distributions tighten as $n$ grows, and are well above the distribution-free baseline (dashed). \textbf{Right:}~Bottom $0.1\%$ of empirical PoA across sub-markets grouped by bound bin ($100k$ samples per bin), for both typical equilibria (solid green) and approximate, adversarial equilibria (hatched red).\looseness=-1}
\vspace{-10pt}
\label{fig:bound_buckets}
\end{figure}

\textbf{Results on the full EmbedLLM market.} \Cref{tab:rules} reports a single auction run on the full dataset with all $110$ models bidding against each other. The auction at equilibrium is nearly perfectly efficient, reaching $99.2\%$ of oracle welfare at the typical equilibrium and $98.8\%$ at the adversarial one. Both exceed the baseline of routing each cluster to its own best model ($89.4\%$), and are well above the baseline of committing to one model everywhere ($78.1\%$). The gap widens on each rule's worst cluster, where the single best model retains only $6.0\%$ of first-best and per-cluster routing $64.0\%$, while the auction holds $96.0\%$ and $93.0\%$. Bidders see their own value $v_i(q)$ before bidding, so a specialist can outbid the cluster leader on exactly the prompts where the leader is weak, which is what a fixed rule cannot do. On the full market and prompt distribution, \Cref{thm:multi_floor} guarantees a welfare ratio $\geq 79.9\%$ (and a minimum bound of $67.6\%$ on the worst cluster). Both are substantially above the naive $1 - 1/e$ bound, though still below what the market achieves. The $1 - 1/e$ bound holds for every market and so cannot indicate anything more specific about this one, whereas $79.9\%$ is computed from the empirical value prior of this market, and correctly signals that welfare will be high. Note that we do not expect it to be tight, since the bound only uses the competitive floor and treats everything else as worst case, while near-independence (\Cref{sec:near_independence}) is a second and separate reason this market can be efficient.\looseness=-1

\textbf{Results on randomly sampled EmbedLLM sub-markets.} Moving beyond the full EmbedLLM market setting, we next look at welfare in thousands of auctions among subsets of EmbedLLM models. Figure~\ref{fig:bound_buckets} (left) plots the empirical distributions of both (a) the competitive-floor bound of \Cref{thm:multi_floor}, and (b) the realized welfare ratio at a typical equilibrium over $100k$ randomly sampled model subsets of market size $n \in \{2, 5, 10\}$. Both distributions concentrate as $n$ grows, and their lower tails move up. Realized welfare is much higher than $1 - 1/e$, as are our bounds (though while the competitive-floor bounds are tighter, they are still conservative). In practice, the realized welfare ratio of the vast majority of sampled auctions falls above $95\%$. That said, unlike the worst-case $1 - 1/e$ bound, which is the same for every market, the competitive-floor bounds are \emph{sensitive} to the market they are computed for. \Cref{fig:bound_buckets} (right) groups randomly sampled model subsets for market sizes $n \in \{2, \ldots, 25\}$ into bins according to their competitive-floor bound, where the bins span $[1 - {1}/{e}, 1.00)$. We sample $100k$ subsets per bin, and report the bottom $0.1\%$ of realized welfare ratios within each bin, for both the typical and the adversarial strategy solver. The realized welfare at an adversarial equilibrium rises with the bins, from $0.82$ in the $[0.63, 0.64)$ bin to $0.96$ in the $[0.90, 1.00)$ bin. At a typical equilibrium it rises from $0.85$ in the lowest bin to $0.97$ in the $[0.80, 0.90)$ bin, and is $0.96$ in the top bin.\looseness=-1

\textbf{Results on illustrative EmbedLLM cohorts.} Random sub-markets vary in both how good their best model is and how much competition it faces. To separate the two, \Cref{tab:cohorts} shows results for four hand-picked cohorts of $n=4$ models with nearly identical first-best welfare ($\E[\max_i V_i] \approx 0.55$) but different market structure. In the \emph{similar} cohort, the four models are comparable $7$B chat models ({Starling-LM-7B-beta}, {OpenChat-3.5-0106}, {Zephyr-7B-beta}, and {Mistral-7B-Instruct-v0.1}) with average values between $0.43$ and $0.51$. In the \emph{diverse} cohort ({Qwen-1.5-7B-Chat}, {BioMistral-7B-DARE}, {ANIMA-Nectar-v2}, and {Zephyr-beta-math}), the models have an even tighter range of average values ($0.42$ to $0.46$), but they are good at different things, so on a given prompt the runner-up tends to trail the winner by more ($17.1\%$ versus $7.7\%$). For both cohorts the competitive-floor bound limits the welfare loss to well below the worst case of $1/e \approx 36.8\%$, and the limit is lower for the similar cohort ($24.2\%$ versus $27.4\%$). Actual losses are small in both, but the winner shades much more in the diverse cohort ($19.8\%$ versus $12.3\%$), so the similar cohort loses slightly less welfare ($0.4\%$ versus $1.2\%$). Nevertheless, the bidder diversity here is only mild, as every model is still a credible rival on many prompts. When a "generalist" ({Yi-34B-Chat}) is paired with three weaker, "niche" models ({Qwen-1.5-0.5B-Chat}, {DeepSeek-Coder-6.7B-Instruct}, and {Gemma-2B-it}), the runner-up trails by $35.6\%$ on average, the winner shades by $29.1\%$, the loss bound rises to $29.9\%$, and the welfare loss rises to $5.1\%$. Swapping one niche model for a second generalist is enough to bring the welfare loss back down to $0.8\%$, even though half the cohort is still weak (here the generalists are {SOLAR-10.7B-Instruct} and {Neural-Chat-7B-v3-3}, paired with {Qwen-1.5-0.5B-Chat} and {Gemma-2B-it}). These results support our core intuition that value concentration limits shading and thereby reduces overall welfare loss.\looseness=-1 

\begin{table}[t]
\centering
\small

\setlength{\tabcolsep}{4pt}
\begin{tabular}{lccc|cc}
\toprule
Cohort & $\E[\max_i V_i]$ & $1 - \E[V_{(2)}]/\E[V_{(1)}]$ & Loss bound & Shading & Welfare loss (adv) \\
\midrule
Generalists (similar)         & $0.557$ & $7.7\%$  & $24.2\%$ & $12.3\%$ & $0.4\%$ \\
Generalists (diverse)         & $0.553$ & $17.1\%$ & $27.4\%$ & $19.8\%$ & $1.2\%$ \\
\midrule
Generalists (1) + niche (3)   & $0.552$ & $35.6\%$ & $29.9\%$ & $29.1\%$ & $5.1\%$ \\
Generalists (2) + niche (2)   & $0.555$ & $15.9\%$ & $27.9\%$ & $22.3\%$ & $0.8\%$ \\
\bottomrule
\end{tabular}
\caption{Four cohorts constructed to have comparable $\E[\max_i V_i]$, but different value structure. We report welfare as a loss, i.e., as the fraction of optimal welfare that is lost. \emph{Loss bound} is one minus the competitive-floor bound of \Cref{thm:multi_floor} (worst case is $1/e \approx 36.8\%$), and \emph{Welfare loss} is one minus the realized PoA (at the adversarially-chosen equilibrium). Shading is $1 -\E[B_{\mathrm{winner}}/V_{\mathrm{winner}}]$.\looseness=-1}
\label{tab:cohorts}
\vspace{-10pt}
\end{table}

\section{Related work}

Prior work on agentic markets has focused on dominant strategy incentive-compatible (DSIC) mechanisms, where bidding one's true value (or cost, in a procurement auction) is optimal. \citet{chang2020decentralized} first used second-price auctions to solve general, sequential Markov decision problems (though not with modern AI agents), while more recent work has used Vickrey-Clarke-Groves (VCG) auctions to orchestrate LLM agents~\citep{liu2026markets} and reverse second-price auctions to allocate test-time compute~\citep{velasco2026test}. These mechanisms, however, do have significant weaknesses---such as being susceptible to collusion~\citep{robinson1985collusion} and requiring bidders to trust the auctioneer~\citep{rothkopf1990vickrey, asubel2004lovelybutlonely}. In this paper we study the viability of the first-price auction, and argue that its lack of truthfulness is less of a concern under value structures we expect agentic markets to have, such as competitive floors and weakly dependent values.\looseness=-1

The price of anarchy (PoA) of first-price auctions is well studied. Smoothness arguments~\citep{RoughgardenPOA} give a universal $1 - 1/e \approx 0.632$ bound under arbitrary correlation~\citep{SyrgkanisTardos2013, Christodoulou2016}. For independent private values, the bound was later raised to $0.743$~\citep{hoy2018tighter}, and then to $1 - 1/e^2 \approx 0.865$, which is tight~\citep{JinLu2022}. All of these bounds are worst case over the value distribution, and they remain tight even for the best equilibrium~\citep{JinLu2023PoS}. We show how these bounds \emph{can} improve under additional structure on the values. Related settings have their own beyond-worst-case results. For example, when a bidder's value depends on what others privately know, the PoA of simple auctions can be poor, but improves when information is spread evenly~\citep{EdenFeldmanTalgamCohenZviran2021}. Revenue guarantees can also be shown to be robust to small changes in the prior~\citep{BrustleCaiDaskalakis2020}, similar in spirit to our weak dependence result (\Cref{sec:near_independence}), where the $1-1/e^2$ bound degrades smoothly with distance from independence even as equilibrium strategies shift. Finally, our experiments connect to empirical approaches to the PoA~\citep{hoy2015price,hartline2026}, and to average-case analyses that weight equilibria by how likely learning dynamics are to reach them~\citep{panageas2016average}.\looseness=-1

\section{Conclusion}
This paper analyzes the welfare of first-price auctions (FPAs) among AI agents, where the main objection to  non-truthful mechanisms like the FPA is that they are vulnerable to welfare loss from bid shading. Worst-case price of anarchy bounds are pessimistic, but agentic markets often have structure that narrows the gap. We prove that the welfare guarantee rises toward full efficiency when the strongest few bidders have similar values, and that the $1-1/e^2$ bound for independent values degrades smoothly when values are only weakly dependent (e.g., given the query). Finally, we show that platforms themselves can take simple steps to protect the efficiency of their markets, such as filtering out weak bidders with routing rules, or entering their own models to set competitive reserve prices.\looseness=-1

\bibliography{references}
\bibliographystyle{abbrvnat}
\clearpage
\appendix
\setcounter{theorem}{0}
\numberwithin{theorem}{section}
\addcontentsline{toc}{chapter}{Appendices}
\etocsetnexttocdepth{2}
\localtableofcontents
\clearpage
\section{Limitations and future work}
We assume that models and users are aligned in the sense that positive utility derives only from providing value to the user; external sources of utility, such as advertising placement, would obviously require additional analysis. Second, like all price of anarchy results, our bounds only describe bidders who play equilibrium strategies. Auctions composed of agents that are still learning how to bid, or that bid by simple heuristics, could result in realized welfare that is quite a bit worse (or better!) than the welfare at any equilibrium bidding strategy. Extending our analysis to include realistic  learning dynamics for first-price auctions in agentic markets is a valuable next step. Third, we analyze competitive floors and near-independence separately, and in experiments we only compute the competitive-floor bound. Real markets like EmbedLLM appear to have both properties (\Cref{sec:experiments} and \Cref{sec:approx_condt_indep}). Taking the larger of the two guarantees is immediate, but a bound that uses both properties together could be stronger, but is non-trivial. We leave this to future work.

Finally, we focus on welfare in this paper. Revenue matters to the platform and its price of anarchy can behave quite differently~\citep{hartline2014price};  we leave revenue guarantees to future work.\looseness=-1

\section{Additional theoretical results}
\label{sec:additional_theory}

\subsection{Bounds that hold over a family of priors}
\label{sec:priors}
Each bound in \Cref{sec:concentration,sec:near_independence} holds for every equilibrium under a single prior $\mu$. Write $L(\mu)$ for that bound, where each prior can satisfy the assumptions with its own parameters (e.g., its own $\alpha$, $m$, and $\epsilon$ in \Cref{thm:multi_floor}). For a family of priors $\mathcal{M}$, the PoA is then at least the smallest of these bounds:
\begin{equation}
\inf_{\mu\in\mathcal M}\inf_{\mathbf s\in\mathrm{BNE}(\mu)}\frac{\mathbb E_\mu[W(\mathbf s(\mathbf V),\mathbf V)]}{\mathbb E_\mu[V_{(1)}]}\ge \inf_{\mu\in\mathcal M} L(\mu).
\end{equation}
For example, fix $m$, a bound $\bar v>0$, and a distribution function $G$, and let $\mathcal M$ contain every prior with $0<\E_\mu[V_{(1)}]\le\bar v$ under which $V_{(m)}$ stochastically dominates $G$, i.e., $\Prb_\mu(V_{(m)}<t)\le G(t)$ for all $t$. Then for every $\alpha\in[0,1]$ and $\mu\in\mathcal M$,
\[\Prb_\mu\bigl(V_{(m)}<\alpha\E_\mu[V_{(1)}]\bigr)\le\Prb_\mu\bigl(V_{(m)}<\alpha\bar v\bigr)\le G(\alpha\bar v),\]
so every $\mu\in\mathcal M$ has an $(\alpha,m,G(\alpha\bar v))$-competitive floor, and applying \Cref{thm:multi_floor} gives:
\begin{equation}
\inf_{\mu\in\mathcal M}\inf_{\mathbf s\in\mathrm{BNE}(\mu)}\frac{\mathbb E_\mu[W(\mathbf s(\mathbf V),\mathbf V)]}{\mathbb E_\mu[V_{(1)}]}\ge\Bigl(1-\frac1e\Bigr)+\sup_{\alpha\in[0,1]}\alpha\Bigl[\frac1e-S_m\bigl(G(\alpha\bar v)\bigr)\Bigr]_+.
\end{equation}

\subsection{Positively dependent values}
\label{sec:positive_dependence}

In auction theory, positive dependence between bidder values is usually modeled by affiliation~\citep{MilgromWeber1982}. It is interesting to consider what this implies with respect to our bounds in \Cref{sec:concentration}. Informally, values are affiliated if learning that some bidders value a query highly makes it more likely that the others do too, as when all models tend to do well on easy queries and poorly on hard ones. Affiliated values are \emph{associated}, meaning that any two functions of $\mathbf V$ that are nondecreasing in every coordinate are nonnegatively correlated. For associated values, it is straightforward to show that the floor in \Cref{eq:min_max_bound} is always at least as high as it would be if the same values were independent. Let $\mathbf V^*$ have independent coordinates with $V^*_i$ distributed as $V_i$, and write $V^*_{(1)}$ and $V^*_{(n)}$ for its largest and smallest coordinates. Then the following result holds.\looseness=-1

\begin{proposition}
\label{prop:association_min_max}
If $\mathbf V$ is associated and $0<\E[V_{(1)}]<\infty$, then $\E[V_{(n)}]\ge\E[V^*_{(n)}]$ and $\E[V_{(1)}]\le\E[V^*_{(1)}]$. In particular, $\E[V_{(n)}]/\E[V_{(1)}]\ge\E[V^*_{(n)}]/\E[V^*_{(1)}]$.
\end{proposition}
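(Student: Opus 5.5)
The plan is to write both extremes as integrals of tail (or distribution) functions and compare them pointwise in the threshold $t$, using association on indicator functions. Since the setting has $V_i\ge 0$, we have
\[
\E[V_{(n)}]=\int_0^\infty \Prb(V_{(n)}>t)\,dt=\int_0^\infty \Prb\Bigl(\textstyle\bigcap_i\{V_i>t\}\Bigr)\,dt,
\]
and similarly
\[
\E[V_{(1)}]=\int_0^\infty \bigl(1-\Prb(V_{(1)}\le t)\bigr)\,dt=\int_0^\infty \Bigl(1-\Prb\bigl(\textstyle\bigcap_i\{V_i\le t\}\bigr)\Bigr)\,dt.
\]
For the independent copy $\mathbf V^*$ the corresponding integrands are the products $\prod_i\Prb(V_i>t)$ and $1-\prod_i\Prb(V_i\le t)$, because the marginals agree. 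It therefore suffices to show, for every $t$,
\[
\Prb\Bigl(\textstyle\bigcap_i\{V_i>t\}\Bigr)\ge\prod_i\Prb(V_i>t),
\qquad
\Prb\Bigl(\textstyle\bigcap_i\{V_i\le t\}\Bigr)\ge\prod_i\Prb(V_i\le t).
\]

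Both inequalities follow from the standard consequence of association: for nonnegative functions $f_1,\dots,f_n$ of $\mathbf V$ that are all nondecreasing, or all nonincreasing, we have $\E[\prod_i f_i]\ge\prod_i\E[f_i]$. This is proved by induction on $n$. The product $f_1\cdots f_{k}$ of nonnegative nondecreasing functions is again nonnegative and nondecreasing. The pairwise covariance inequality from the definition of association then gives
\[
\E[f_1\cdots f_{k+1}]\ge\E[f_1\cdots f_k]\,\E[f_{k+1}],
\]
and the induction hypothesis finishes the step. For nonincreasing functions, apply the definition to $-f$ and $-g$, whose covariance equals that of $f,g$; equivalently, $\mathrm{Cov}(f,g)\ge0$ holds whenever $f$ and $g$ are both nonincreasing. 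The first inequality uses $f_i=\mathbf 1\{V_i>t\}$, which are nondecreasing; the second uses $f_i=\mathbf 1\{V_i\le t\}$, which are nonincreasing. Both sets of functions are bounded, so all the covariances exist.

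Integrating the first inequality over $t$ gives $\E[V_{(n)}]\ge\E[V^*_{(n)}]$. Integrating the second gives $\E[V_{(1)}]\le\E[V^*_{(1)}]$, where Tonelli lets us compare the nonnegative integrands pointwise without integrability issues. Combining these with $0<\E[V_{(1)}]\le\E[V^*_{(1)}]$ yields the ratio inequality. If $\E[V^*_{(1)}]=\infty$, the right-hand ratio is $0$ and the claim is trivial.

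The only step needing care is the inductive extension of pairwise association to $n$-fold products, in particular checking that nonnegativity is preserved so that products of monotone functions stay monotone. Everything else is bookkeeping.
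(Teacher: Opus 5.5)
Your proof is correct, but it follows a different route from the paper's. The paper argues in one step. It cites \citet{christofides2004connection}, which shows that an associated vector dominates its independent version in the supermodular order, i.e., $\E[f(\mathbf V)]\ge\E[f(\mathbf V^*)]$ for every supermodular $f$. It then only checks that $\min_i x_i$ and $-\max_i x_i$ are supermodular.

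You instead write both extremes as integrals of orthant probabilities. You then prove the two pointwise inequalities $\Prb(\bigcap_i\{V_i>t\})\ge\prod_i\Prb(V_i>t)$ and $\Prb(\bigcap_i\{V_i\le t\})\ge\prod_i\Prb(V_i\le t)$ directly from the covariance definition of association. This uses the classical induction on products of nonnegative monotone functions, and you correctly flag that nonnegativity keeps the partial products monotone and lets you multiply through by $\E[f_{k+1}]\ge 0$.

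What your route buys is self-containedness and a stronger result. You use association only through positive upper orthant dependence (for the minimum) and positive lower orthant dependence (for the maximum), and only at diagonal thresholds $t_1=\cdots=t_n=t$. Both conditions are strictly weaker than association, so your argument proves the proposition under weaker hypotheses.

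What the paper's route buys is brevity, and a comparison for every supermodular functional at once. It also avoids tail integrals, though your use of $V_i\ge0$ is fine in the paper's setting. The cost is that it rests on a considerably deeper external theorem.

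One small simplification: the case $\E[V^*_{(1)}]=\infty$ you set aside cannot occur. Indeed, $\E[V^*_{(1)}]\le\sum_i\E[V_i]\le n\,\E[V_{(1)}]<\infty$.
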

\begin{proof}
\citet{christofides2004connection} show that $\E[f(\mathbf V)]\ge\E[f(\mathbf V^*)]$ for every supermodular $f$, i.e., every $f$ with $f(\mathbf x\vee\mathbf y)+f(\mathbf x\wedge\mathbf y)\ge f(\mathbf x)+f(\mathbf y)$, where $\vee$ and $\wedge$ are the coordinatewise maximum and minimum. Both $\min_i x_i$ and $-\max_i x_i$ are supermodular, since $\min_i(x_i\wedge y_i)=\min\{\min_i x_i,\min_i y_i\}$ and $\min_i(x_i\vee y_i)\ge\max\{\min_i x_i,\min_i y_i\}$, while a symmetric argument applies for the maximum. Applying the result to both functions gives the two inequalities.
\end{proof}

Note, however, that a similar straightforward connection cannot be made to \Cref{thm:multi_floor}! In fact, the bound in \Cref{thm:multi_floor} can get worse, since positive dependence can make bidders fall below the floor on the same queries, which raises $\epsilon$ relative to independent values with the same marginals. We leave a closer study of how dependence between values affects competitive floors to future work.

\subsection{Query-conditional competitive floors}
\label{sec:conditional_welfare_bounds}
The results in the main text apply to a fixed value prior $\mu$. However, as in \Cref{sec:near_independence}, it is reasonable to assume that each query in an agentic market comes with public information $X$ that all bidders see before they bid (such as the prompt text), which induces a conditional prior $\mu_x:=\mu(\cdot\mid X=x)$. As with $\mu$, we assume each $\mu_x$ is common knowledge among the bidders. We can then apply \Cref{thm:multi_floor} conditionally to each query, with its own $\alpha_x$, $m_x$, and $\epsilon_x$, and subsequently average over $X$.

\begin{corollary}
\label{cor:conditional_floor}
Suppose $0<\E[V_{(1)}]<\infty$ and that for almost every $x$, $\mu_x$ has an $(\alpha_x,m_x,\epsilon_x)$-competitive floor, i.e., $\Prb\bigl(V_{(m_x)}\ge\alpha_x\E[V_{(1)}\mid X=x]\mid X=x\bigr)\ge1-\epsilon_x$. Then every $\mathbf s\in\mathrm{BNE}(\mu)$ satisfies
\begin{equation}
\label{eq:conditional_floor_bound}
\E[W(\mathbf s(\mathbf V),\mathbf V)]\ge\E_X\Bigl[\Bigl(\Bigl(1-\frac1e\Bigr)+\alpha_X\Bigl[\frac1e-S_{m_X}\bigl(\epsilon_X\bigr)\Bigr]_+\Bigr)\E[V_{(1)}\mid X]\Bigr].
\end{equation}
\end{corollary}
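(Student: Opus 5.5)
The plan is to reduce the statement to \Cref{thm:multi_floor}, applied separately to each conditional game $\mu_x$, and then integrate over $X$. The key observation is that, because $X$ is public and seen by all bidders before they bid, a Bayes--Nash equilibrium $\mathbf s$ of the game with prior $\mu$ is really a family of strategies $\mathbf s(\cdot; x)$, one for each realization $x$. I will show that for almost every $x$, the restriction $\mathbf s(\cdot;x)$ is a Bayes--Nash equilibrium of the first-price auction with prior $\mu_x$. The argument is a disintegration argument. Bidder $i$'s interim expected surplus given $(V_i, X)=(v,x)$ depends on the other bidders' strategies only through their restrictions to the same $x$, and it is computed under the conditional law of $\mathbf V_{-i}$ given $(V_i,X)=(v,x)$, which is exactly the conditional law derived from $\mu_x$. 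Hence the statement ``for a.e.\ $(v,x)$ the bid is a best response'' is the same as the statement ``for a.e.\ $x$, for $\mu_x$-a.e.\ $v$ the bid is a best response in the game $\mu_x$.'' The non-overbidding assumption $0\le s_i\le v_i$ also restricts to almost every $x$. Tie-breaking is a bid-only rule and independent private randomization is preserved, so nothing changes there.

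Given this, I will check the remaining hypothesis of \Cref{thm:multi_floor} for $\mu_x$, namely $0<\E[V_{(1)}\mid X=x]<\infty$. Finiteness holds for almost every $x$ because $\E[V_{(1)}]<\infty$. On the set where $\E[V_{(1)}\mid X=x]=0$, we have $V_{(1)}=0$ almost surely, so both sides of the per-$x$ inequality are zero and the inequality holds trivially. On the remaining $x$, the assumed $(\alpha_x,m_x,\epsilon_x)$-competitive floor for $\mu_x$ is exactly the hypothesis of \Cref{thm:multi_floor} for that prior. Writing $W$ for the welfare, the theorem then gives
\[
\E[W\mid X=x]\ge\Bigl(\Bigl(1-\tfrac1e\Bigr)+\alpha_x\Bigl[\tfrac1e-S_{m_x}(\epsilon_x)\Bigr]_+\Bigr)\E[V_{(1)}\mid X=x].
\]
This uses the fact that \Cref{thm:multi_floor} is a bound valid for every equilibrium of the given prior, not just an infimum over priors.

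Finally, I will take $\E_X$ of both sides and use the tower property $\E[W]=\E_X[\E[W\mid X]]$ to obtain \eqref{eq:conditional_floor_bound}. For this step I need the maps $x\mapsto\alpha_x,m_x,\epsilon_x$ to be measurable, or else I can replace the right-hand side by a measurable minorant. The integrability of the right-hand side is immediate, since the coefficient lies in $[0,1]$ and $\E[V_{(1)}]<\infty$.

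I expect the main obstacle to be the first step: making rigorous that a BNE of the pooled game induces a BNE of almost every conditional game. This requires regular conditional probabilities for $\mu$ given $X$, and care that ``almost every $v_i$'' under the joint law transfers to almost every $v_i$ under $\mu_x$ for almost every $x$. My approach is a Fubini argument on the null set of non-best-response pairs $(v_i,x)$, applied to each of the finitely many bidders. I assume standard Borel spaces so that the disintegration exists.
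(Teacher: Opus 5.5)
Your proposal is correct and follows the paper's proof: apply \Cref{thm:multi_floor} under each conditional prior $\mu_x$ with $\E[V_{(1)}\mid X=x]>0$, note that both sides vanish when this conditional mean is zero, and then integrate over $X$. The paper leaves implicit the points you make explicit: that a BNE of the pooled game with public $X$ restricts to a BNE of $\mu_x$ for almost every $x$, that $\E[V_{(1)}\mid X=x]$ is finite almost surely, and that $x\mapsto(\alpha_x,m_x,\epsilon_x)$ is measurable. Your version is therefore the same argument with those details filled in.
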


\begin{proof}
Fix $x$ with $\E[V_{(1)}\mid X=x]>0$. Applying \Cref{thm:multi_floor} under $\mu_x$ gives
\[
\E[W(\mathbf s(\mathbf V),\mathbf V)\mid X=x]\ge\Bigl(\Bigl(1-\frac1e\Bigr)+\alpha_x\Bigl[\frac1e-S_{m_x}\bigl(\epsilon_x\bigr)\Bigr]_+\Bigr)\E[V_{(1)}\mid X=x].
\]
If $\E[V_{(1)}\mid X=x]=0$, both sides are $0$. Taking expectations over $X$ proves the claim.
\end{proof}

The conditional bound improves on applying \Cref{thm:multi_floor} to $\mu$ directly because the floor $\alpha_x\E[V_{(1)}\mid X=x]$ adapts to each query. A single floor $\alpha\E[V_{(1)}]$ is too high for hard queries, which inflates $\epsilon$, and too low for easy ones.

\subsection{Example: Competitive floors from specialist models}
\label{app:specialists}

We give a stylized example that illustrates a competitive floor in a (toy) agentic market. Suppose there are $d$ "capabilities" that models in the market may posses; write model $i$ as having a capability set $A_i\subseteq[d]$. Further suppose that each query requires a set of capabilities $Y\subseteq[d]$ to solve it. A model can "handle" the query if it has every required capability, $Y\subseteq A_i$. Normalize values to $[0,1]$. Then suppose any model that can "handle" the query has value at least $\alpha$ for some $0 < \alpha \leq 1$, and a model that cannot has value $0$. The values of the models that can handle the query may be arbitrarily dependent.\looseness=-1

For a given requirement set $Y$, let $N_Y:=|\{i:Y\subseteq A_i\}|$ be the number of models that can handle it. Since $\E[V_{(1)}]\le1$, on the event $N_Y\ge m$ we have $V_{(m)}\ge \alpha \ge \alpha \,\E[V_{(1)}]$. The market therefore has an $(\alpha ,m,\epsilon)$-competitive floor with $\epsilon=\Prb(N_Y<m)$---i.e., the probability that a random query drawn from the market's query distribution requires a capability combination that fewer than $m$ models  have.\looseness=-1

To see when $\epsilon$ is small, suppose each model has each capability independently with probability $p$, and group queries by the number $T:=|Y|$ of capabilities they require. A query requiring $t$ capabilities can be handled by each model with probability $p^t$, so it has $N_Y\sim\mathrm{Bin}(n,p^t)$ capable models and fails to have $m$ of them with probability $\epsilon_t:=\Prb(\mathrm{Bin}(n,p^t)<m)$. Then $\epsilon=\sum_t\Prb(T=t)\,\epsilon_t$.

As a further simplification, we can write a loose upper bound for $\epsilon$ as follows. Pick a cutoff $t_0\ge0$, and assume that any query requiring $t>t_0$ capabilities always has fewer than $m$ models that can handle it (i.e., $\epsilon_t=1$ for $t>t_0$). Then, since $\epsilon_t$ increases in $t$, we can write
\begin{align*}
\epsilon=\sum_t\Prb(T=t)\,\epsilon_t
&\le\sum_{t=0}^{t_0}\Prb(T=t)\,\epsilon_t+\Prb(T>t_0) \\
&\le\sum_{t=0}^{t_0}\Prb(T=t)\,\epsilon_{t_0}+\Prb(T>t_0) \\
&\le\epsilon_{t_0}+\Prb(T>t_0).
\end{align*}
For example, with $n=100$ models, $m=8$, and $p=0.8$, a query requiring $t_0=9$ capabilities has $13.4$ capable models on average and $\epsilon_{9}=0.03$. So if at most $3\%$ of queries require more than $9$ capabilities, then $\epsilon\le0.06$. Finally, suppose the reward for a "handled" query is always $1$ (i.e., the response is either right or wrong), so $\alpha = 1$. Then \Cref{thm:multi_floor} guarantees at least $(1-1/e)+[1/e-S_8(0.06)]\approx0.89$ of the optimal welfare, compared with $\approx 0.63$ from the $1-1/e$ bound alone. This holds even though the values are dependent through the shared query.

A similar argument  applies when model $i$ has capability $j$ with probability $p_{ij}$, with $\mu_Y=\sum_i\prod_{j\in Y}p_{ij}$. The floor then fails on queries that require a rare capability or many capabilities, where $\mu_Y$ is small, and it holds on everything else as long as several models have high $p_{ij}$ on the required capabilities.\looseness=-1

\section{Additional experimental results}
\label{app:additional_results}
This section gives additional results for the EmbedLLM market experiments described in  \Cref{sec:experiments}. Specifically, \Cref{sec:approx_condt_indep} tests the conjecture from \Cref{sec:near_independence} that bidder values are close to independent once the prompt is known. \Cref{app:platform_bidder_exp,app:router_exp} then empirically evaluate the two platform design choices from \Cref{sec:applications}:  a platform-owned bidder and pre-auction filtering of weak models.

\subsection{Near-independence given the prompt}
\label{sec:approx_condt_indep}

\begin{figure}[t]
  \centering
  \input{tikz/prompt_vs_rival_r2}
  \caption{%
    How well the prompt and a rival predict each model's value $v_i$. \textbf{(a)}~Distribution of $R^2$ from the held-out ridge prediction $g_i(q)$ alone (one value per model), and from a rival's value $v_j$ alone and the gain from adding the rival once the prompt is known (one value per ordered model pair $(i,j)$). \textbf{(b)}~Each point is one of $2{,}000$ randomly sampled model pairs $(i, j)$, plotting the rival's $R^2$ on its own against the gain it adds once the prompt is known. The dashed line is $y = x$. For points far below it, most of what the rival's value predicts about $v_i$ is already captured by the prompt.\looseness=-1
  }
  \label{fig:cmi_summary}
\end{figure}

\Cref{sec:near_independence} conjectures that most of the dependence between bidder values in agentic markets can be eliminated by conditioning on the \emph{query}, so that values are close to independent once the query is known. Recall from \Cref{sec:experiments} that $v_i(q) = \max(0, \hat{\E}[\mathrm{acc}_i \mid q] - c_i)$, where the accuracy estimate averages over the $k = 16$ nearest neighbors of $q$. These estimates follow the intuition that a model performs similarly on similar queries. In our construction this holds exactly by design, since every model's value on a prompt is averaged over the same neighbors, which favors the prompt as a predictor here. The general idea still holds more broadly. Much of what makes models succeed or fail together is a property of the query itself, such as its difficulty, its topic, or how clearly it is specified~\citep{hofmann2025fluid,fisch2024stratified,maimon-etal-2026-benchmarks}---and routers already estimate per-query model performance from the prompt~\citep{ong2025routellm,hu2024routerbench,zhuang2025embedllm}.

To test how much of this the prompt alone can predict, we fit a ridge regression $f_i$ from the prompt embedding to $v_i$ for each model, using 5-fold cross-validation. For every ordered pair $(i, j)$ of the 110 models, we then compare how well $v_i$ is predicted by $f_i(q)$ alone, by the rival's value $v_j$ alone, and by both. We measure each fit by its $R^2$, the proportion of the variance of $v_i$ across prompts that the predictors explain in a linear regression. The gain from adding $v_j$ as a feature, i.e., taking $f_i(q) \to f'_i(q, v_j)$, is what rival $j$'s value still tells us about $v_i$ once the prompt is known.\looseness=-1

Results are in \Cref{fig:cmi_summary}. Averaged over all ordered pairs $(i,j)$, the prompt alone gives $R^2 = 0.67$, the rival's value alone gives $R^2 = 0.50$, and adding the rival once the prompt is known raises $R^2$ by only $0.05$ ($0.09$ at the 90th percentile across pairs). The prompt therefore accounts for $90\%$ of what a rival's value explains, and is already the better predictor in $84\%$ of pairs. The pairs with the most dependence left over are near-duplicate models whose values are almost perfectly correlated to begin with.\looseness=-1

To be clear, these results support, but do not imply, the condition behind \Cref{thm:near_independence_welfare}. If values were independent given the prompt, a rival's value would add no information about $v_i$ once the prompt is known, and this is roughly what we see. However, small gains in $R^2$ do not bound the total variation distance $\tau(x)$, which we do not attempt to directly estimate. That said, it is still informative. For example, if we make a simplifying assumption that the remaining dependence is roughly Gaussian, $\tau(x)$ is at most of the order of the correlations between values given $X=x$, which scale with the square root of the gains in $R^2$, up to a factor of at most $\sqrt{n(n-1)/2}$ from summing over pairs.\footnote{To see this, suppose $\mathbf V=\mathbf f(q)+\boldsymbol\varepsilon$ with $\boldsymbol\varepsilon\sim\mathcal N(0,\Sigma)$ independent of the prompt, with correlation matrix $\rho=[\rho_{ij}]$. Then
\[\Delta R^2_{ij}=\frac{\operatorname{Var}(V_i\mid X)-\operatorname{Var}(V_i\mid X,V_j)}{\operatorname{Var}(V_i)}=\rho_{ij}^2(1-R^2_{\text{prompt}}).
\]
Write $P=\mathcal N(\mathbf f(q),\Sigma)$ for this joint distribution and $Q=\mathcal N(\mathbf f(q),\Sigma_0)$ for the product of its marginals, where $\Sigma_0$ is the diagonal of $\Sigma$ (corresponding to the conditionally independent distribution).
Since $\operatorname{tr}(\Sigma_0^{-1}\Sigma)=n$ and $\det\Sigma=\det\Sigma_0\det\rho$, we have
\[
D_{\mathrm{KL}}(P\parallel Q)=\frac12\Bigl[\operatorname{tr}(\Sigma_0^{-1}\Sigma)-n+\ln\frac{\det\Sigma_0}{\det\Sigma}\Bigr]=-\frac12\ln\det\rho\approx\frac12\sum_{i<j}\rho_{ij}^2
\]
for small correlations, so Pinsker's inequality bounds their total variation distance by about $\tfrac12(\sum_{i<j}\rho_{ij}^2)^{1/2}$. Hence $\tau(x)$ is at most of order $(\sum_{i<j}\rho_{ij}^2)^{1/2}\le\sqrt{n(n-1)/2}\,\max_{i<j}|\rho_{ij}|$, where $|\rho_{ij}|=\sqrt{\Delta R^2_{ij}/(1-R^2_{\text{prompt}})}$.}

\subsection{Platform-owned bidder}
\label{app:platform_bidder_exp}

Next, we test the platform participation strategy of \Cref{sec:applications} on randomly sampled markets from EmbedLLM. Each market contains the strongest EmbedLLM model, $n$ rivals drawn at random, and a platform model that bids its own value on every prompt. The platform model is the EmbedLLM model at a given percentile of the ranking by average value, where $0$ is the weakest model and $1$ is the strongest (so at $1$ the platform runs a copy of the best model). Rivals are never drawn from the platform models used at any percentile. This lets us reuse the same random draws of rivals at every percentile, so each curve compares the same markets. For each percentile and $n \in \{5, 10\}$ we draw $10{,}000$ random markets and solve for typical and adversarial equilibria as in \Cref{sec:experiments}.\looseness=-1

Results are shown in~\Cref{fig:platform_bidder_poa}. These markets always include the strongest model and are already efficient, with a mean PoA of about $0.98$, so a weak or average platform model changes welfare very little. The adversarial lower tail, however, improves noticeably as the platform model moves into the top half, from $0.94$ to $0.96$ at $n=5$, and a platform model that matches the best model lifts every curve to about $0.98$--$0.99$ (note that the best \emph{single} model is not the same as the optimal composite model that yields $\E[\max_i V_i]$ welfare). As expected, the larger markets at $n =10$ are already inherently more efficient, and depend less on the platform model to keep bids competitive.\looseness=-1

\begin{figure}[t]
  \centering
\input{tikz/poa_style}
\begin{tikzpicture}[]
\pgfkeys{/pgf/number format/assume math mode=true}
\begin{axis}[
    name=left,
    width=5.3cm,
    height=3.05cm,
    scale only axis,
    poaaxis,
    tick pos=left,
    xmin=-0.03, xmax=1.03,
    xtick={0, 0.2, 0.4, 0.6, 0.8, 1.0},
    x tick label style={/pgf/number format/.cd, fixed, fixed zerofill, precision=1},
    ymin=0.9, ymax=1.0,
    ytick={0.90, 0.92, 0.94, 0.96, 0.98, 1.00},
    y tick label style={/pgf/number format/.cd, fixed, fixed zerofill, precision=2},
    xlabel={Platform model percentile},
    ylabel={Empirical PoA},
    legend style={at={(0.02,0.03)}, anchor=south west},
]
\addplot[poaline, color=poablue, solid, mark=*, mark options={solid}] coordinates {(0.0,0.98427) (0.1,0.98486) (0.2,0.98520) (0.3,0.98504) (0.4,0.98470) (0.5,0.98503) (0.6,0.98675) (0.7,0.98611) (0.8,0.98710) (0.9,0.98703) (1.0,0.99663)};
\addplot[poaline, color=poablue, dashed, mark=square*, mark options={solid}] coordinates {(0.0,0.96512) (0.1,0.96568) (0.2,0.96773) (0.3,0.96716) (0.4,0.96970) (0.5,0.97134) (0.6,0.97679) (0.7,0.97452) (0.8,0.97686) (0.9,0.97880) (1.0,0.99073)};
\addplot[poaline, color=poaverm, solid, mark=triangle*, mark options={solid}] coordinates {(0.0,0.97591) (0.1,0.97643) (0.2,0.97680) (0.3,0.97638) (0.4,0.97602) (0.5,0.97639) (0.6,0.97835) (0.7,0.97786) (0.8,0.97910) (0.9,0.97954) (1.0,0.99471)};
\addplot[poaline, color=poaverm, dashed, mark=diamond*, mark options={solid}] coordinates {(0.0,0.94368) (0.1,0.94478) (0.2,0.94708) (0.3,0.94690) (0.4,0.94888) (0.5,0.95311) (0.6,0.96135) (0.7,0.96051) (0.8,0.96220) (0.9,0.96548) (1.0,0.98453)};
\addplot[poaline, color=black!40, solid, mark=none, forget plot] coordinates {(0.0,0.94908) (0.1,0.94794) (0.2,0.94505) (0.3,0.94773) (0.4,0.94684) (0.5,0.94512) (0.6,0.94473) (0.7,0.94377) (0.8,0.94390) (0.9,0.94186) (1.0,0.94933)};
\addplot[poaline, color=black!40, dashdotted, mark=none, forget plot] coordinates {(0.0,0.93248) (0.1,0.93166) (0.2,0.92978) (0.3,0.93146) (0.4,0.93098) (0.5,0.92982) (0.6,0.92938) (0.7,0.92871) (0.8,0.92885) (0.9,0.92737) (1.0,0.93272)};
\node[font=\tiny, text=black!60, anchor=south, inner sep=1pt] at (axis cs:0.75,0.94377) {cluster router (mean)};
\node[font=\tiny, text=black!60, anchor=south, inner sep=1pt] at (axis cs:0.75,0.92871) {cluster router (p0.1)};
\end{axis}
\node[anchor=north, yshift=-1.95cm] at (left.south) {(a) $n=5$};
\begin{axis}[
    name=right, at={(left.east)}, anchor=west, xshift=1.2cm,
    width=5.3cm,
    height=3.05cm,
    scale only axis,
    poaaxis,
    tick pos=left,
    xmin=-0.03, xmax=1.03,
    xtick={0, 0.2, 0.4, 0.6, 0.8, 1.0},
    x tick label style={/pgf/number format/.cd, fixed, fixed zerofill, precision=1},
    ymin=0.9, ymax=1.0,
    ytick={0.90, 0.92, 0.94, 0.96, 0.98, 1.00},
    y tick label style={/pgf/number format/.cd, fixed, fixed zerofill, precision=2},
    xlabel={Platform model percentile},
    legend style={at={(-0.1132,-0.42)}, anchor=north, legend columns=-1, /tikz/every even column/.append style={column sep=5pt}},
]
\addplot[poaline, color=poablue, solid, mark=*, mark options={solid}] coordinates {(0.0,0.98779) (0.1,0.98806) (0.2,0.98819) (0.3,0.98811) (0.4,0.98790) (0.5,0.98800) (0.6,0.98888) (0.7,0.98848) (0.8,0.98917) (0.9,0.98867) (1.0,0.99644)};
\addlegendentry{Typical (mean)}
\addplot[poaline, color=poablue, dashed, mark=square*, mark options={solid}] coordinates {(0.0,0.97469) (0.1,0.97518) (0.2,0.97565) (0.3,0.97614) (0.4,0.97563) (0.5,0.97651) (0.6,0.97864) (0.7,0.97708) (0.8,0.97804) (0.9,0.97968) (1.0,0.99081)};
\addlegendentry{Typical (p0.1)}
\addplot[poaline, color=poaverm, solid, mark=triangle*, mark options={solid}] coordinates {(0.0,0.98143) (0.1,0.98168) (0.2,0.98186) (0.3,0.98165) (0.4,0.98150) (0.5,0.98162) (0.6,0.98249) (0.7,0.98215) (0.8,0.98301) (0.9,0.98278) (1.0,0.99436)};
\addlegendentry{Adversarial (mean)}
\addplot[poaline, color=poaverm, dashed, mark=diamond*, mark options={solid}] coordinates {(0.0,0.96085) (0.1,0.96125) (0.2,0.96166) (0.3,0.96137) (0.4,0.96122) (0.5,0.96236) (0.6,0.96432) (0.7,0.96326) (0.8,0.96495) (0.9,0.96678) (1.0,0.98414)};
\addlegendentry{Adversarial (p0.1)}
\addplot[poaline, color=black!40, solid, mark=none, forget plot] coordinates {(0.0,0.93375) (0.1,0.93325) (0.2,0.93178) (0.3,0.93316) (0.4,0.93270) (0.5,0.93208) (0.6,0.93140) (0.7,0.93106) (0.8,0.93127) (0.9,0.92958) (1.0,0.93384)};
\addplot[poaline, color=black!40, dashdotted, mark=none, forget plot] coordinates {(0.0,0.91734) (0.1,0.91681) (0.2,0.91586) (0.3,0.91689) (0.4,0.91660) (0.5,0.91592) (0.6,0.91566) (0.7,0.91561) (0.8,0.91608) (0.9,0.91465) (1.0,0.91738)};
\node[font=\tiny, text=black!60, anchor=south, inner sep=1pt] at (axis cs:0.75,0.93106) {cluster router (mean)};
\node[font=\tiny, text=black!60, anchor=south, inner sep=1pt] at (axis cs:0.75,0.91561) {cluster router (p0.1)};
\end{axis}
\node[anchor=north, yshift=-1.95cm] at (right.south) {(b) $n=10$};
\end{tikzpicture}
  \caption{%
    Empirical PoA with a platform-owned bidder, as a function of where the platform model sits in the ranking of EmbedLLM models by average value. Each point averages $10{,}000$ markets made of the strongest model, $n$ random rivals, and the platform model. Solid lines are means and dashed lines are the $0.1$ percentile across markets, for typical (blue) and adversarial (orange) equilibria. Gray lines mark the cluster router on the same markets (mean solid, $0.1$ percentile dash-dotted).
  }
  \label{fig:platform_bidder_poa}
\end{figure}

\subsection{Pre-auction filtering}
\label{app:router_exp}

Finally, we test pre-auction filtering with a router that admits a model to a prompt's auction only if the model is close to the strongest model on the prompt's cluster. Specifically, model $i$ may bid on cluster $c$ if $\E[V_i \mid c] \ge \beta \max_j \E[V_j \mid c]$, where the maximum runs over the models in the market. Setting $\beta = 0$ admits everyone, and $\beta = 1$ admits only the cluster leader. Because the rule works at the cluster level, it can drop the model with the highest value on a particular prompt. We therefore measure welfare against the first-best of the unfiltered market, and also report the filtered oracle, which is the upper bound of \Cref{eq:router_welfare_bound}. For each $\beta$ and $n \in \{5, 10\}$ we draw $10{,}000$ random markets.\looseness=-1

Results are shown in \Cref{fig:router_poa}. For $\beta \le 0.6$ the router removes only weak models, and the filtered oracle stays near $1$ while mean welfare is unchanged at about $0.99$. The adversarial lower tail, however, \emph{does} improve slightly over this range, from $0.947$ to $0.960$ at $n=5$ and from $0.964$ to $0.968$ at $n=10$. Past $\beta = 0.7$ the router starts dropping models that would have been best on some prompts. Consequently, the filtered oracle falls, and realized welfare falls with it, down to about $0.91$ when only the cluster leader bids. As expected, filtering with a router helps only while it keeps the strongest models. At $\beta = 1$ the auction reduces to a cluster router that sends every prompt to its cluster's best model, and this baseline (gray lines) is below the auction at every $\beta < 1$, in both the mean and the $0.1$ percentile.\looseness=-1

\begin{figure}[t]
  \centering
\input{tikz/poa_style}
\begin{tikzpicture}[]
\pgfkeys{/pgf/number format/assume math mode=true}
\begin{axis}[
    name=left,
    width=5.3cm,
    height=3.05cm,
    scale only axis,
    poaaxis,
    tick pos=left,
    xmin=-0.03, xmax=1.03,
    xtick={0, 0.2, 0.4, 0.6, 0.8, 1.0},
    x tick label style={/pgf/number format/.cd, fixed, fixed zerofill, precision=1},
    ymin=0.83, ymax=1.0,
    ytick={0.84, 0.88, 0.92, 0.96, 1.00},
    y tick label style={/pgf/number format/.cd, fixed, fixed zerofill, precision=2},
    xlabel={Eligibility cutoff $\beta$},
    ylabel={Empirical PoA},
    legend style={at={(1.1132,-0.42)}, anchor=north, legend columns=-1, /tikz/every even column/.append style={column sep=5pt}},
]
\addplot[poaline, color=poablue, solid, mark=*, mark options={solid}] coordinates {(0.0,0.99083) (0.1,0.99091) (0.2,0.99108) (0.3,0.99121) (0.4,0.99104) (0.5,0.99075) (0.6,0.98973) (0.7,0.98677) (0.8,0.97915) (0.9,0.96266) (1.0,0.92110)};
\addlegendentry{Typical (mean)}
\addplot[poaline, color=poablue, dashed, mark=square*, mark options={solid}] coordinates {(0.0,0.96711) (0.1,0.96744) (0.2,0.96607) (0.3,0.96832) (0.4,0.96868) (0.5,0.96950) (0.6,0.96737) (0.7,0.96562) (0.8,0.94980) (0.9,0.92057) (1.0,0.85186)};
\addlegendentry{Typical (p0.1)}
\addplot[poaline, color=poaverm, solid, mark=triangle*, mark options={solid}] coordinates {(0.0,0.98536) (0.1,0.98555) (0.2,0.98588) (0.3,0.98618) (0.4,0.98620) (0.5,0.98615) (0.6,0.98552) (0.7,0.98324) (0.8,0.97686) (0.9,0.96176) (1.0,0.92110)};
\addlegendentry{Adversarial (mean)}
\addplot[poaline, color=poaverm, dashed, mark=diamond*, mark options={solid}] coordinates {(0.0,0.94632) (0.1,0.94673) (0.2,0.94862) (0.3,0.95020) (0.4,0.95261) (0.5,0.95406) (0.6,0.95614) (0.7,0.95979) (0.8,0.94682) (0.9,0.91989) (1.0,0.85186)};
\addlegendentry{Adversarial (p0.1)}
\addplot[poaline, color=black!55, densely dotted, mark=none] coordinates {(0.0,1.00000) (0.1,0.99996) (0.2,0.99984) (0.3,0.99965) (0.4,0.99924) (0.5,0.99857) (0.6,0.99695) (0.7,0.99295) (0.8,0.98355) (0.9,0.96473) (1.0,0.92110)};
\addlegendentry{Filtered oracle}
\addplot[poaline, color=black!40, solid, mark=none, forget plot] coordinates {(-0.03,0.92110) (1.03,0.92110)};
\addplot[poaline, color=black!40, dashdotted, mark=none, forget plot] coordinates {(-0.03,0.85186) (1.03,0.85186)};
\node[font=\tiny, text=black!60, anchor=south, inner sep=1pt] at (axis cs:0.4,0.92110) {cluster router (mean)};
\node[font=\tiny, text=black!60, anchor=south, inner sep=1pt] at (axis cs:0.4,0.85186) {cluster router (p0.1)};
\end{axis}
\node[anchor=north, yshift=-1.95cm] at (left.south) {(a) $n=5$};
\begin{axis}[
    name=right, at={(left.east)}, anchor=west, xshift=1.2cm,
    width=5.3cm,
    height=3.05cm,
    scale only axis,
    poaaxis,
    tick pos=left,
    xmin=-0.03, xmax=1.03,
    xtick={0, 0.2, 0.4, 0.6, 0.8, 1.0},
    x tick label style={/pgf/number format/.cd, fixed, fixed zerofill, precision=1},
    ymin=0.83, ymax=1.0,
    ytick={0.84, 0.88, 0.92, 0.96, 1.00},
    y tick label style={/pgf/number format/.cd, fixed, fixed zerofill, precision=2},
    xlabel={Eligibility cutoff $\beta$},
    legend style={at={(0.02,0.03)}, anchor=south west},
]
\addplot[poaline, color=poablue, solid, mark=*, mark options={solid}] coordinates {(0.0,0.99272) (0.1,0.99272) (0.2,0.99270) (0.3,0.99268) (0.4,0.99242) (0.5,0.99180) (0.6,0.99017) (0.7,0.98641) (0.8,0.97707) (0.9,0.95647) (1.0,0.90485)};
\addplot[poaline, color=poablue, dashed, mark=square*, mark options={solid}] coordinates {(0.0,0.97743) (0.1,0.97742) (0.2,0.97752) (0.3,0.97770) (0.4,0.97858) (0.5,0.97590) (0.6,0.97433) (0.7,0.96725) (0.8,0.95363) (0.9,0.92051) (1.0,0.84663)};
\addplot[poaline, color=poaverm, solid, mark=triangle*, mark options={solid}] coordinates {(0.0,0.98862) (0.1,0.98868) (0.2,0.98877) (0.3,0.98893) (0.4,0.98884) (0.5,0.98836) (0.6,0.98694) (0.7,0.98352) (0.8,0.97488) (0.9,0.95548) (1.0,0.90485)};
\addplot[poaline, color=poaverm, dashed, mark=diamond*, mark options={solid}] coordinates {(0.0,0.96405) (0.1,0.96368) (0.2,0.96440) (0.3,0.96624) (0.4,0.96830) (0.5,0.96716) (0.6,0.96622) (0.7,0.96155) (0.8,0.95099) (0.9,0.91969) (1.0,0.84663)};
\addplot[poaline, color=black!55, densely dotted, mark=none] coordinates {(0.0,1.00000) (0.1,0.99994) (0.2,0.99972) (0.3,0.99943) (0.4,0.99894) (0.5,0.99816) (0.6,0.99626) (0.7,0.99200) (0.8,0.98170) (0.9,0.95892) (1.0,0.90485)};
\addplot[poaline, color=black!40, solid, mark=none, forget plot] coordinates {(-0.03,0.90485) (1.03,0.90485)};
\addplot[poaline, color=black!40, dashdotted, mark=none, forget plot] coordinates {(-0.03,0.84663) (1.03,0.84663)};
\node[font=\tiny, text=black!60, anchor=south, inner sep=1pt] at (axis cs:0.4,0.90485) {cluster router (mean)};
\node[font=\tiny, text=black!60, anchor=south, inner sep=1pt] at (axis cs:0.4,0.84663) {cluster router (p0.1)};
\end{axis}
\node[anchor=north, yshift=-1.95cm] at (right.south) {(b) $n=10$};
\end{tikzpicture}
  \caption{%
    Empirical PoA under pre-auction filtering, as a function of the eligibility cutoff $\beta$. Each point averages $10{,}000$ random markets of $n$ models. Solid lines are means and dashed lines are the $0.1$ percentile across markets, for typical (blue) and adversarial (orange) equilibria. The dotted line is the filtered oracle, the welfare of always allocating to the best \emph{admitted} model. Gray lines mark the cluster router, which sends every prompt to its cluster's best model (mean solid, $0.1$ percentile dash-dotted).\looseness=-1
}
  \label{fig:router_poa}
\end{figure}

\section{Equilibrium solver}
\label{app:solver}

We compute approximate Bayes--Nash equilibria with a damped form of fictitious play~\citep{RABINOVICH2013106, heymann2026empiricalstudyfictitiousplay}, implemented directly on the empirical distribution of values. Fictitious play goes back to \citet{brown1951iterative}, and approximating Bayes--Nash equilibria numerically by iterating best responses over a restricted family of strategies is standard for auctions without closed-form solutions~\citep{armantier2008approximation, bosshard2020computing}. Each cluster of prompts is solved on its own. Within a cluster, every prompt is one draw from the joint distribution of the bidders' values, so the solver keeps whatever correlation the data has and never replaces it with a product of marginals or a smoothed prior. Fictitious play has no general convergence guarantee in first-price auctions, but \citet{heymann2026empiricalstudyfictitiousplay} find that it converges numerically to approximate equilibria across a wide range of first-price auctions with correlated values, which is our setting.\looseness=-1

\textbf{Strategies.} Let $\mathcal{Q}_c$ be the set of prompts in cluster $c$. For each bidder $i$, we split its values on $\mathcal{Q}_c$ into $K = 5$ bins of equal mass, and write $k_i(q) \in \{1, \ldots, K\}$ for the bin that contains $v_i(q)$. We define a bidder's strategy to be one shading factor per bin, so on prompt $q \in \mathcal{Q}_c$ bidder $i$ bids $b_i(q) = \gamma_{i,c,k_i(q)}\, v_i(q)$. Such bids never result in over-bidding. The bidder knows the prompt's cluster and its own value, and it knows its rivals only through the empirical distribution of their values and bids within the cluster.\looseness=-1

\textbf{Best responses.} Fix bidder $i$, cluster $c$, and bin $k$, and let $\mathcal{Q}_c^k = \{q \in \mathcal{Q}_c : k_i(q) = k\}$. For every prompt $q \in \mathcal{Q}_c^k$ we record the highest rival bid $r(q)$, the number $t(q)$ of rivals that bid it, and their average value. When the platform enters its own model (\Cref{app:platform_bidder_exp}), it bids its value $v_0(q)$, and this bid counts as one more rival bid. The platform's bid is public (\Cref{sec:applications}), and any bid below it loses, so a bidder that values the prompt above the platform's bid never bids below it. Writing $\underline{v}_0(q) = v_0(q) + 2 \cdot 10^{-6}$, the bid of factor $\gamma$ on prompt $q$ is
\begin{equation*}
b_i^\gamma(q) = \begin{cases} \max\{\gamma\, v_i(q),\, \underline{v}_0(q)\} & \text{if } v_i(q) > \underline{v}_0(q),\\ \gamma\, v_i(q) & \text{otherwise,}\end{cases}
\end{equation*}
which never exceeds $v_i(q)$. Without the platform, $b_i^\gamma(q) = \gamma\, v_i(q)$. If bidder $i$ bids $b$ on prompt $q$, it wins when $b > r(q)$. Ties are split randomly among the $t(q) + 1$ tied bidders. As the winner pays its bid,  bidder $i$'s utility is
\begin{equation}
u_q(b) = \bigl(v_i(q) - b\bigr) \left( \mathbf{1}[b > r(q)] + \frac{\mathbf{1}[b = r(q)]}{t(q) + 1} \right).
\end{equation}
Bidder $i$'s average utility over $\mathcal{Q}_c^k$ from factor $\gamma$ is $$U(\gamma) = \frac{1}{|\mathcal{Q}_c^k|} \sum_{q \in \mathcal{Q}_c^k} u_q(b_i^\gamma(q)).$$ The welfare $W(\gamma)$, meanwhile, is the average over $\mathcal{Q}_c^k$ of the winner's expected value, which is bidder $i$'s value when it wins and the average value of the top rivals when it loses. We search over the grid $\gamma \in \{0, 0.01, \ldots, 1\}$ together with the factors $\gamma = (r(q) + 10^{-6}) / v_i(q) \le 1$ that just outbid the rivals on some prompt $q \in \mathcal{Q}_c^k$. For the \emph{typical} equilibrium, the best response is the factor with the highest utility, and ties go to the factor closest to the current one. For the \emph{adversarial} equilibrium, the best response is the factor with the lowest welfare among those whose utility is positive and at least $95\%$ of the highest utility. If no factor meets both conditions, the bidder keeps its current bids.

\textbf{Updates.} All bidders start by bidding their values. In each round, we visit the bidders in turn, and for each of its bins a bidder moves its bids toward its best response to the current bids of the others,
\begin{equation}
b_i(q) \leftarrow (1 - \eta)\, b_i(q) + \eta\, b_i^{\hat\gamma}(q), \qquad q \in \mathcal{Q}_c^k,
\end{equation}
where $\hat\gamma$ is the best response. Bids start at values, so when the platform's floor applies both terms are at least $\underline{v}_0(q)$ and the damped bid stays above the platform's bid. We use $\eta = 1$ in the first round and $\eta = 0.25$ afterwards; bids are then an exponentially weighted average of the bidder's past best responses. This averaging is what makes the procedure a form of fictitious play, and it helps avoid the cycling that undamped best responses can produce. The solver stops once no bid has moved by more than $10^{-3}$ for $10$ consecutive rounds, or after $100$ rounds. Note that starting from truthful bids means the dynamics mostly shade down, so the solver finds equilibria reachable by lowering bids from truthful bidding and may miss others.\looseness-1

\textbf{Checks.} The typical and adversarial runs share everything except the choice among best responses, so an adversarial equilibrium should never have higher welfare than a typical one on the same market. When it does, the solver has not settled, and we drop the market (for example, this removes at most $9\%$ of markets in the filtering experiment of \Cref{app:router_exp}). The check compares welfare over all of a market's prompts, not cluster by cluster. We also report the $0.1$ percentile of welfare across markets in place of the minimum, so that a few unsettled markets do not drive the results.

\section{Proofs for competitive floors}
\label{app:concentration_proofs}

In this section, we prove results for general mixed (randomized) equilibria $\mathbf{s}$, representing each bid as $B_i = s_i(V_i, Z_i)$ for private seeds $Z_i$ that are independent across bidders and of the values.

Our first lemma establishes a lower bound on welfare anchored at an arbitrary threshold $c \ge 0$. Specifically, let $R = \max_i B_i$, which is the amount the winning bidder pays and is the revenue of the platform. For any constant $c\ge0$, we can bound welfare by the sum of $\mathbb E[\min\{R,c\}]$ and $(1-1/e)\mathbb E[(V_{(1)}-c)_+]$. The first is the expected revenue, capped at $c$. The second is a $(1-1/e)$-fraction of the expected amount by which the top value ($V_{(1)}$) exceeds $c$. When $c=0$, the first term is zero and the lemma recovers the standard $1-1/e$ guarantee~\citep{RoughgardenPOA}.

\begin{lemma}
\label{lem:anchored_welfare}
For every constant $c\ge0$ and any prior $\mu$, every $\mathbf{s} \in \mathrm{BNE}(\mu)$ satisfies
\begin{equation}
\mathbb E[W(\mathbf{s}(\mathbf{V}), \mathbf{V})]
\ge
(1-1/e)\mathbb E[(V_{(1)}-c)_+]
+\mathbb E[\min\{R,c\}].
\label{eq:anchored_welfare}
\end{equation}
\end{lemma}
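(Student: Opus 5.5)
This is a smoothness-style argument with deviations restricted to bids at least $c$. The starting point is the standard decomposition of welfare in a first-price auction: $W = \sum_i u_i + R$, where $u_i$ is bidder $i$'s realized surplus ($V_i - B_i$ if $i$ wins, else $0$). Non-overbidding at equilibrium gives $u_i \ge 0$. Fix a bidder $i$ and a value $v_i$. We consider deviations to a randomized bid $b' = c + (v_i - c)(1-e^{-t})$ for $v_i > c$, with $t$ drawn from a density on $[0,1]$. This mixes bids over $[c,\, c + (1-1/e)(v_i-c)]$, which is the classical Syrgkanis--Tardos deviation shifted to start at $c$. By the BNE condition, $\E[u_i \mid V_i = v_i] \ge \E[(v_i - b')\mathbf 1\{b' > R_{-i}\}\mid V_i=v_i]$, where $R_{-i}=\max_{j\ne i}B_j$ and the conditional law of $R_{-i}$ given $V_i=v_i$ is what it is; correlation is allowed.

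The key pointwise inequality is the following. For any realization $r$ of the competing high bid and $v>c$, the deviation satisfies
\[
\E_{b'}\bigl[(v-b')\mathbf 1\{b'>r\}\bigr] \ge (1-1/e)(v-c) - (r-c)_+ .
\]
I would verify this by the standard calculation. Set $x=(r-c)_+$ and $w=v-c$. The density is chosen so that the surplus at each $b'$ equals $w e^{-t}$. Integrating over $t$ with density $e^{t}/(e-1)\cdot$ (normalized), the computation is identical to the $c=0$ case, applied to the shifted value $w$ and shifted competing bid $x$. Bids below $c$ never win against $r\ge c$, and against $r<c$ the bound is trivial since the deviation always wins.

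Next, apply this only to the bidder $i^\star$ with the highest value, in the usual way. Since $\sum_i \E[u_i]\ge \E[u_{i^\star}]$ and each $u_i \ge 0$, summing the deviation inequality for $i=i^\star(\mathbf V)$ gives
\[
\sum_i \E[u_i] \ge (1-1/e)\E[(V_{(1)}-c)_+] - \E[(R-c)_+],
\]
using $R_{-i}\le R$. Because the deviation for bidder $i$ depends only on $v_i$, the contribution is counted via $\E[\sum_i \mathbf 1\{i=i^\star\}(\cdots)]$, which is the standard trick from the correlated-values smoothness proof. Adding $\E[R]$ then gives $\E[W] \ge (1-1/e)\E[(V_{(1)}-c)_+] + \E[R-(R-c)_+]$, and $R-(R-c)_+=\min\{R,c\}$, which is exactly the claim.

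The main obstacle is the step that applies the deviation only for the top-value bidder. Bidder $i$ does not know whether it is top, so the equilibrium inequality must hold for every $i$ and $v_i$. It is then multiplied by $\mathbf 1\{i=i^\star\}$, which depends on others' values and is therefore not measurable with respect to $v_i$ alone. In the correlated setting this is handled by not conditioning on $i^\star$. Instead we use that for each $i$ the deviation bound holds pointwise in $r$, then use $\E[u_i]\ge \E[\text{deviation payoff}]$ for each $i$ separately, and combine via $\E[u_i] \ge \E[\mathbf 1\{i=i^\star\}\cdot(\ldots)]$. The catch is that this requires the deviation payoff to be nonnegative, which it is, since bids never exceed $v_i$. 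I would check carefully that this inequality chain is legitimate: the deviation payoff $\ge \mathbf 1\{i=i^\star\}\times$(deviation payoff) $\ge \mathbf 1\{i=i^\star\}[(1-1/e)(V_i-c)_+-(R-c)_+]$. This needs $(R-c)_+\ge0$, which makes the bracket no larger than the payoff when $i\ne i^\star$. Tie-breaking and the mixed seeds $Z_i$ only enter through $R_{-i}$ and cause no difficulty.
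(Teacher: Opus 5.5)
Your argument follows the paper's proof step for step. Both use the Syrgkanis--Tardos deviation shifted to start at $c$, the pointwise bound $(1-1/e)(v-c)_+-(r-c)_+$ against a competing bid $r$, and a summation of the per-bidder equilibrium inequalities. That summation uses nonnegativity of deviation payoffs and $R_{-i}\le R$ for the top-value bidder; your $\mathbf 1\{i=i^\star\}$ device is exactly the paper's ``every term is nonnegative'' step. The finish is also the same: add revenue and use $R-(R-c)_+=\min\{R,c\}$.

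There is one concrete error, in the only computational step. With $b'=c+w(1-e^{-t})$, the parameter $t$ must be \emph{uniform} on $[0,1]$, not drawn with density $e^{t}/(e-1)$.

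Uniform $t$ corresponds to bid density $1/(v-b')$ on $[c,\,c+(1-1/e)w]$, which is the paper's $f_i$. For $x\le(1-1/e)w$ it gives payoff $\int_{t_0}^1 we^{-t}\,dt=(1-1/e)w-x$ with $t_0=-\log(1-x/w)$, and payoff $0$ otherwise. That is exactly the inequality you need.

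With your density the payoff is $\int_{t_0}^1 we^{-t}\,\frac{e^{t}}{e-1}\,dt=\frac{w(1-t_0)}{e-1}$. When $r\le c$ (so $t_0=0$) this equals $w/(e-1)\approx0.58\,w<(1-1/e)w$. Your key pointwise inequality is therefore false as stated, and the chain does not deliver the $1-1/e$ constant.

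Note also that the surplus $we^{-t}$ at $b'$ comes from the parametrization, not from the choice of density. With the uniform density the rest of your proof goes through unchanged.

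One small omission: specify the deviation for $v_i\le c$, for example bidding $0$. Any non-overbidding choice works there, since the required bound is then $-(r-c)_+\le 0$.
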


\begin{proof}
In equilibrium, no bidder can increase its expected utility (i.e., surplus) by a unilateral deviation. That is, for each $i$ and each alternative strategy $s'_i$, $\mathbb E[U_i]\ge\mathbb E[U'_i]$, where $U_i$ is bidder $i$'s utility under $\mathbf{s}$ and $U'_i$ is its utility under $(s'_i,\mathbf{s}_{-i})$, i.e., when $i$ alone switches to $s'_i$ and its rivals keep playing $\mathbf{s}$. We exploit this fact by constructing a deviation $s'_i$ for each bidder and summing the resulting $n$ inequalities. Each inequality is evaluated against the equilibrium rivals; the bidders do not deviate together.

The deviation $s'_i$ is as follows. If $V_i\le c$, bidder $i$ bids zero. If $V_i>c$, it bids $b'_i=c+T_i$, where $T_i$ is drawn independently of everything else with density
\[
f_i(t)=\frac{1}{V_i-c-t},
\qquad
0\le t\le L_i:=(1-1/e)(V_i-c).
\]
This density integrates to one because $\log\bigl((V_i-c)/(V_i-c-L_i)\bigr)=\log e=1$. The deviation also never exceeds $V_i$, so it is never over-bidding.

Let $P_i:=\max_{j\ne i}B_j$ be the highest rival bid, where the $B_j$ are the equilibrium bids. Condition on $V_i=v_i$ and $P_i=p_i$. If $v_i > c$, the deviation wins when $T_i>p_i-c$, and then earns $v_i-c-T_i$. If $v_i\le c$, the deviation bids zero and its utility is zero. Integrating gives
\begin{align*}
\mathbb E[U'_i\mid V_i=v_i,P_i=p_i] 
&\geq \int_{(p_i-c)_+}^{(1- 1/e)(v_i - c)}(v_i-c-t)f_i(t)\,dt  \\
&\ge(1-1/e)(v_i-c)_+-(p_i-c)_+.
\end{align*}
The integral is read as zero when $(p_i-c)_+>(1- 1/e)(v_i - c)$. 

Now sum over bidders. Let $i^*$ be a (measurably chosen) highest-value bidder. Every term is nonnegative and $P_{i^*}\le R$, so conditional on the values and rival bids,
\[
\sum_i\mathbb E[U'_i\mid V_i,P_i]
\ge(1-1/e)(V_{(1)}-c)_+-(P_{i^*}-c)_+
\ge(1-1/e)(V_{(1)}-c)_+-(R-c)_+.
\]
Taking expectations and summing the equilibrium inequalities,
\begin{equation}
\label{eq:expected_utility}
\sum_i \E[U_i] \geq \sum_i \E[U'_i]
= \sum_i\E[\E[U'_i\mid V_i,P_i]] \geq 
(1-1/e)\mathbb E[(V_{(1)}-c)_+]
-\mathbb E[(R-c)_+].
\end{equation}

By definition, $\E[W] =\sum_i\E[U_i] + \E[R].$ Adding expected revenue $\mathbb E[R]$ to \Cref{eq:expected_utility} and using the identity $R-(R-c)_+=\min\{R,c\}$ then proves the result.
\end{proof}

\subsection{Deterministic competitive floor}
We now prove \Cref{proposition:deterministic_poa}, starting with a lower bound on $R$ when competition is deterministic.

\begin{lemma}
\label{lem:winning_bid_floor}
Let $\mu$ be a prior under which $V_{(2)}\ge c$ almost surely for a constant $c\ge0$. Then every
Bayes--Nash equilibrium $\mathbf{s} \in \mathrm{BNE}(\mu)$ satisfies $\max_iB_i\ge c$ almost surely.
\end{lemma}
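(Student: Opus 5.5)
The plan is to argue by contradiction using the essential infimum of the winning bid. Let $R=\max_i B_i$ and $r^*:=\operatorname{ess\,inf} R$. I will suppose $r^*<c$ and construct a profitable unilateral deviation, contradicting the equilibrium property. This shows $R\ge r^*\ge c$ almost surely. The deviation will be a pure ``bid-raising'' rule. Fix $\delta\in(0,(c-r^*)/2)$ and let $\beta:=r^*+\delta<c$. For a chosen bidder $i$, define $s'_i$ to bid $\max\{B_i,\beta\}$ whenever $V_i\ge c$, and to keep $B_i$ otherwise. This never overbids, since $\beta<c\le V_i$. By definition of the essential infimum, $p:=\Prb(R<\beta)>0$.

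\textbf{Gain and loss for a single bidder.} I compare $s'_i$ with $s_i$ event by event, holding the rival bids fixed.
\begin{itemize}
\item Define the gain event $A_i:=\{V_i\ge c,\ i\text{ does not win},\ R<\beta\}$. On $A_i$ every rival bid is at most $R<\beta$, so bidding $\beta$ wins outright. The utility rises from $0$ to $V_i-\beta\ge c-r^*-\delta$.
\item Define the loss event $C_i:=\{V_i\ge c,\ i\text{ wins},\ B_i<\beta\}$. Here $B_i=R\ge r^*$ almost surely, so the extra payment is at most $\beta-r^*=\delta$.
\item On all other events where the bid changes, $i$ previously lost with $R\ge\beta$. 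Raising the bid to $\beta$ can at best produce a tie or win at price $\beta<V_i$. This yields nonnegative utility, so it is never a loss.
\end{itemize}
Hence $\E[U'_i]-\E[U_i]\ge (c-r^*-\delta)\Prb(A_i)-\delta\,\Prb(C_i)$.

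\textbf{Summing over bidders.} This is the step where the hypothesis $V_{(2)}\ge c$ enters.
\begin{itemize}
\item On $\{R<\beta\}$, at least two bidders have value $\ge c$ almost surely, but exactly one wins (under any tie-breaking). So at least one high-value bidder loses, giving $\sum_i\Prb(A_i)\ge p$.
\item Each realization has exactly one winner, and $C_i\subseteq\{i\text{ wins},R<\beta\}$. So $\sum_i \Prb(C_i)\le p$.
\end{itemize}
Therefore
\[
\sum_i\bigl[(c-r^*-\delta)\Prb(A_i)-\delta\Prb(C_i)\bigr]\ge (c-r^*-2\delta)p>0 .
\]
So some bidder $i$ strictly gains in expectation from $s'_i$.

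\textbf{Contradicting the equilibrium.} The rule $s'_i$ depends only on $(V_i,Z_i)$, so it is a valid strategy. Its expected gain is the integral over $v_i$ of the conditional gains. A strictly positive integral therefore forces a positive-measure set of values $v_i$ at which bidding $\max\{s_i(v_i),\beta\}$ beats $s_i(v_i)$. This contradicts the BNE condition that holds for almost every $v_i$.

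The main obstacle I expect is the measure-theoretic bookkeeping with mixed strategies and ties. The per-bidder comparison must be made conditionally on $V_i$, averaging over the rivals' bids and the bidder's own seed $Z_i$. Tie-breaking must also be checked not to spoil the two accounting facts: that exactly one bidder wins on $\{R<\beta\}$, and that a bid of $\beta$ strictly exceeds every rival bid on $A_i$.
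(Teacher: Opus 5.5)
Your proposal is correct and is essentially the paper's proof. It uses the same essential-infimum contradiction and the same deviation (raise the bid to a constant $\beta=r^*+\delta$ just above $\operatorname{ess\,inf}R$ whenever $V_i\ge c$). It also reaches the same total gain $(c-r^*-2\delta)\,\Prb(R<\beta)>0$, which is the paper's $\delta\,\Prb(R<b_{\mathrm{dev}})$ with $b_{\mathrm{dev}}=\beta$.

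The only difference is bookkeeping. You split each bidder's change into a gain event $A_i$ and a loss event $C_i$ and sum their probabilities. The paper instead conditions on $(\mathbf V,R)$ and sums $(v_i-b_{\mathrm{dev}})-a_i(v_i-r)$ over the high-value bidders, using their equilibrium winning probabilities $a_i$.
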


\begin{proof}
Fix any Bayes--Nash equilibrium $\mathbf{s}$, and let $r_{\min}:=\operatorname{ess\,inf}R$ be the largest constant such that $R\ge r_{\min}$ almost surely. Suppose $r_{\min}<c$. As in Lemma~\ref{lem:anchored_welfare}, write $U_i$ for bidder $i$'s utility under $\mathbf{s}$ and $U'_i$ for its utility under $(s'_i,\mathbf{s}_{-i})$. We now construct a unilateral deviation $s'_i$ for each bidder such that the gains $\mathbb E[U'_i]-\mathbb E[U_i]$ sum to a strictly positive number. Then at least one gain is positive, which contradicts the equilibrium conditions.

Choose a constant $b_\mathrm{dev}$ with $r_{\min}<b_\mathrm{dev}<(c+r_{\min})/2$. Then $\Prb(R<b_\mathrm{dev})>0$ by the definition of $r_\mathrm{min}$, and $\delta:=c+r_{\min}-2b_\mathrm{dev}>0$. Under $s'_i$, bidder $i$ bids $\max\{B_i,b_\mathrm{dev}\}$ when $V_i\ge c$ and keeps $B_i$ otherwise. Note that this deviation is admissible, as it uses only the bidder's own value and equilibrium bid. It also never overbids, since $B_i\le V_i$, and the bid is raised only when $V_i\ge c>b_\mathrm{dev}$.

We compare $U'_i$ with $U_i$ for each bidder on each realized value profile $\mathbf{V} = \mathbf{v}$ and equilibrium winning bid $R =r$. When $r \ge b_\mathrm{dev}$, the deviation never lowers the bidder's utility, so $U'_i \geq U_i$. When $r<b_\mathrm{dev}$, let $A:=\{i:v_i\ge c\}$ be the index set of all bidders with value at least $c$, which has at least two members by assumption. Let $a_i$ be bidder $i$'s probability of winning in equilibrium under the tie rule given $\mathbf{V} = \mathbf{v}$ and $R = r$, so $\sum_{i \in A} a_i\le1$. Each $i\in A$ that unilaterally deviates wins at price $b_\mathrm{dev}$ and earns $v_i-b_\mathrm{dev}$, compared with $a_i(v_i - r)$ in equilibrium. Summing these gains over $A$,
\begin{align*}
\sum_{i \in A} \E[U'_i - U_i \mid \mathbf{V} = \mathbf{v}, R = r] &= \sum_{i\in A}\bigl[(v_i-b_\mathrm{dev})-a_i(v_i-r)\bigr] \\
&=\sum_{i\in A}\bigl[(1-a_i)v_i+a_ir-b_\mathrm{dev}\bigr] \\
&\ge|A|(c-b_\mathrm{dev})-(c-r) \\
&\ge c+r-2b_\mathrm{dev}\\
&\ge c+r_{\min}-2b_\mathrm{dev}=\delta>0,
\end{align*}
using $r\ge r_{\min}$ in the last step. Bidders outside $A$ keep their equilibrium bids, so their gains are zero. Let $D:=\sum_i(U'_i-U_i)$ be the total gain. We have shown $\mathbb E[D\mid\mathbf V,R]\ge\delta$ on $\{R<b_\mathrm{dev}\}$ and $\mathbb E[D\mid\mathbf V,R]\ge0$ otherwise, so
\[
\sum_i\bigl(\mathbb E[U'_i]-\mathbb E[U_i]\bigr)
=\mathbb E\bigl[\mathbb E[D\mid\mathbf V,R]\bigr]
\ge\delta\,\Prb(R<b_\mathrm{dev})>0.
\]
Hence some bidder's expected utility rises by deviating alone. This contradicts the equilibrium conditions of $\mathbf{s}$, so $r_{\min}\ge c$, i.e., $R\ge c$ almost surely.
\end{proof}

\begin{proof}[Proof of~\Cref{proposition:deterministic_poa}]
Recall that $M:=\mathbb E[V_{(1)}]$. Then set $c:=\alpha M$.  Lemma~\ref{lem:winning_bid_floor} gives $R\ge c$ almost
surely, and the hypothesis also gives $V_{(1)}\ge c$ almost surely.  Therefore
Lemma~\ref{lem:anchored_welfare} yields
\[
\mathbb E[W]
\ge
(1-1/e)(M-c)+c
=
(1-1/e)M+ c/e.
\]
Divide by $M$.
\end{proof}

\subsection{Probabilistic competitive floor}
\label{app:bid_shortfall}

We now prove \Cref{thm:multi_floor}. When the competitive floor can fail, the winning bid $R$ can fall below $c$, so Lemma~\ref{lem:winning_bid_floor} no longer applies. Instead, the next lemma bounds the expected shortfall $\E[(c-R)_+]$ when at least $m$ bidders clear $c$ with probability at least $1-\epsilon$. The deviation is the same as in Lemma~\ref{lem:winning_bid_floor}, where each bidder with value at least $c$ raises its bid to a constant $b_\mathrm{dev}$. There, a single $b_\mathrm{dev}$ just above $r_{\min}$ sufficed. Here the deviation can lose when fewer than $m$ bidders clear $c$, so we apply it at every $b_\mathrm{dev}\in(0,c)$ and obtain a differential inequality for $\E[(b_\mathrm{dev}-R)_+]$.

\begin{lemma}
\label{lemma:bid_shortfall}
Let $\mu$ be a prior under which $\Prb(V_{(m)}\ge c)\ge1-\epsilon$ for a constant $c>0$, some $2\le m\le n$, and $\epsilon\in[0,1]$. Then every Bayes--Nash equilibrium $\mathbf{s}\in\mathrm{BNE}(\mu)$ satisfies
\[
\frac{\E[(c-R)_+]}{c}\le S_m(\epsilon).
\]
\end{lemma}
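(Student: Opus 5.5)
The plan is to reuse the deviation from \Cref{lem:winning_bid_floor}, but for every threshold $b\in(0,c)$ at once. This produces a family of inequalities that together form a differential inequality for the shortfall profile
\[
G(b):=\E[(b-R)_+]=\int_0^b F(t)\,dt,\qquad F(b):=\Prb(R<b).
\]
Fix $b\in(0,c)$. Under $s'_i$, bidder $i$ bids $\max\{B_i,b\}$ when $V_i\ge c$ and keeps $B_i$ otherwise; this never overbids. Let $A=\{i:V_i\ge c\}$ and call $\{|A|\ge m\}$ the good event, which has probability at least $1-\epsilon$. I would bound the summed gains $\sum_i \E[U'_i-U_i\mid \mathbf V,R]$ in three regimes.
\begin{itemize}
\item On $\{R\ge b\}$ the gain is nonnegative, as in \Cref{lem:winning_bid_floor}. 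Raising a bid to $b\le R$ only adds tied wins, and each such win earns $V_i-b\ge0$.
\item On $\{R<b\}\cap\{|A|\ge m\}$, the computation of \Cref{lem:winning_bid_floor} with $(1-a_i)V_i\ge(1-a_i)c$ and $\sum a_i\le1$ gives a gain of at least $|A|(c-b)-(c-R)\ge(m-1)(c-b)-(b-R)$.
\item On $\{R<b\}\cap\{|A|<m\}$ the gain can be negative, but boundedly so. Each deviator in $A$ now wins at price $b$, and $V_i-b\ge a_i(V_i-b)=a_i(V_i-R)-a_i(b-R)$. The loss is therefore at most $a_i(b-R)$, which sums to at most $(b-R)$.
\end{itemize}
The equilibrium condition says the summed expected gain is $\le0$, hence
\[
(m-1)(c-b)\,\Prb(R<b,\ |A|\ge m)\;\le\;\E[(b-R)_+]=G(b).
\]

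Next I would turn this into a differential inequality. Since $\Prb(R<b,|A|\ge m)\ge(F(b)-\epsilon)_+$, we get
\[
G'(b)=F(b)\le \epsilon+\frac{G(b)}{(m-1)(c-b)},\qquad G(0)=0.
\]
By Grönwall with integrating factor $(c-b)^{1/(m-1)}$, this yields an explicit bound on $G(c^-)=\E[(c-R)_+]$ of the form $c\cdot\phi_m(\epsilon)$. The same inequality also gives the trivial bound $G(b)\le b$, so the estimate can be taken as the minimum of the two.

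The main obstacle is matching the precise constant $S_m(\epsilon)=I_\epsilon(m-1,2)^{1/m}$. The crude split "good mass at least $F-\epsilon$" is probably lossy: the worst case concentrates the $\epsilon$ failure mass exactly on the low-$R$ events, and it is not a uniform additive $\epsilon$ across all $b$. My first attempt at the refinement would be to keep the joint quantity $H(b)=\Prb(R<b,\,|A|<m)$, which is nondecreasing and bounded by $\epsilon$. With it I would write $F=(F-H)+H$ and treat the resulting coupled inequalities for all $b$ as a linear program over the law of $R$. I would then solve this extremal problem explicitly, expecting the optimizer to have $F$ of a power form $F(b)\propto (c-b)^{-\cdots}$ on an interval. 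The substitution $u=1-b/c$ should make the incomplete beta integral $\int_0^\epsilon t^{m-2}(1-t)\,dt$ emerge, with the $1/m$ exponent coming from the integrating factor. If the exact form resists, I would instead randomize the deviation threshold $b$ with a density, as in \Cref{lem:anchored_welfare}, to average the inequalities optimally. In any case I would verify the consistency checks at the ends: $\epsilon=0$ gives zero shortfall, recovering \Cref{lem:winning_bid_floor}, and $\epsilon=1$ gives the trivial ratio $1$.
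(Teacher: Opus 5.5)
Your argument up to the differential inequality is correct and matches the paper's. You use the same deviation to $\max\{B_i,b\}$ for every threshold $b\in(0,c)$, the same three-regime bound on the summed gain, and the same inequality $G'(b)\le\epsilon+G(b)/((m-1)(c-b))$ with $G(0)=0$.

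The gap is the final step. Your claim that Gr\"onwall yields a bound on $G(c^-)$ is false. The integrating factor $(c-b)^{1/(m-1)}$ gives
\[
G(b)\le\frac{(m-1)\epsilon}{m}\Bigl[c^{m/(m-1)}(c-b)^{-1/(m-1)}-(c-b)\Bigr],
\]
which diverges as $b\uparrow c$ whenever $\epsilon>0$. Taking its minimum with $G(c)\le c$ only returns the trivial ratio $1$. The missing idea is to stop the Gr\"onwall bound at an interior point and then use the slope bound $F\le1$ on the remaining interval, i.e.\ $G(c)\le G(b)+(c-b)$ for every $b\in(0,c)$. Writing $z=(c-b)/c$, this gives
\[
\frac{\E[(c-R)_+]}{c}\le\frac{(m-1)\epsilon}{m}\,z^{-1/(m-1)}+\Bigl(1-\frac{(m-1)\epsilon}{m}\Bigr)z,\qquad 0<z\le1.
\]
Minimizing at $z_*=[\epsilon/(m-(m-1)\epsilon)]^{(m-1)/m}\in(0,1]$ gives exactly $[\epsilon^{m-1}(m-(m-1)\epsilon)]^{1/m}=I_\epsilon(m-1,2)^{1/m}=S_m(\epsilon)$ for $0<\epsilon<1$. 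The endpoint cases go as you indicate.

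Your diagnosis of the obstacle is therefore off. The crude split $\Prb(R<b,\,|A|\ge m)\ge F(b)-\epsilon$ already yields the stated constant, so you need no linear program, randomized threshold, or joint tracking of $\Prb(R<b,\,|A|<m)$. Within this family of inequalities that refinement could not help anyway. The constraints are loosest when the failure event sits on the lowest realizations of $R$. There $\Prb(R<b,\,|A|<m)=\min\{\epsilon,F(b)\}$, and the refined system collapses to the crude one. Likewise, $I_\epsilon(m-1,2)$ does not come from an incomplete beta integral after a substitution. It is just the closed form $\epsilon^{m-1}(m-(m-1)\epsilon)$ produced by the one-dimensional optimization over $z$. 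As written, your proposal stops before obtaining any nontrivial bound on $\E[(c-R)_+]$, so it does not yet prove the lemma, though the repair is the short patch above.
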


\begin{proof}[Proof of Lemma~\ref{lemma:bid_shortfall}]
Fix any Bayes--Nash equilibrium $\mathbf{s}$ and a constant $b_\mathrm{dev}\in(0,c)$. As before, write $U_i$ for bidder $i$'s utility under $\mathbf{s}$ and $U'_i$ for its utility under $(s'_i,\mathbf{s}_{-i})$, where under $s'_i$ bidder $i$ bids $\max\{B_i,b_\mathrm{dev}\}$ when $V_i\ge c$ and keeps $B_i$ otherwise. This deviation is admissible and never overbids, by the same argument as in Lemma~\ref{lem:winning_bid_floor}. Let $D:=\sum_i(U'_i-U_i)$ be the total gain. In equilibrium, $\E[D]\le0$.

We lower bound $\E[D\mid\mathbf V=\mathbf v,R=r]$ on each realization, as in Lemma~\ref{lem:winning_bid_floor}. Let $A:=\{i:v_i\ge c\}$ and $a_i$ be bidder $i$'s probability of winning in equilibrium under the tie rule given $\mathbf{V} = \mathbf{v}$ and $R = r$, so $\sum_{i \in A} a_i\le1$. When $r\ge b_\mathrm{dev}$, the deviation never lowers any bidder's utility, so the conditional gain is nonnegative. When $r<b_\mathrm{dev}$, the same computation as in Lemma~\ref{lem:winning_bid_floor} gives
\begin{align*}
\E[D\mid\mathbf V=\mathbf v,R=r]
&=\sum_{i\in A}\bigl[(v_i-b_\mathrm{dev})-a_i(v_i-r)\bigr] \\
&\ge|A|(c-b_\mathrm{dev})-(c-r)\\
&=(|A|-1)(c-b_\mathrm{dev})-(b_\mathrm{dev}-r).
\end{align*}
Let $G:=\{V_{(m)}\ge c\}$, so $\Prb(G^c)\le\epsilon$. On $G$ and $\{R < b_\mathrm{dev}\}$ we have $|A|\ge m$, so the conditional gain is at least $(m-1)(c-b_\mathrm{dev})-(b_\mathrm{dev}-R)$. On $G^c$ and $\{R < b_\mathrm{dev}\}$, it is at least $-(b_\mathrm{dev}-R)$, since $|A|\ge1$ gives this from the display above and $|A|=0$ gives zero gain, as there is no deviation. Finally, as mentioned above, on $\{R \geq b_\mathrm{dev}\}$ the gain is non-negative. The three conditional bounds combine into
\[
\E[D\mid\mathbf V,R]\ge(m-1)(c-b_\mathrm{dev})\,\mathbf 1\{R<b_\mathrm{dev}\}\mathbf 1_G-(b_\mathrm{dev}-R)_+.
\]
Let $F(b):=\Prb(R<b)$, $Q(b):=\Prb(\{R<b\}\cap G^c)\le\Prb(G^c)\le\epsilon$, and $H(b):=\E[(b-R)_+]=\int_0^bF(t)\,dt$, where $R\ge0$ by no overbidding, so $H(0)=0$, $H'=F$ almost everywhere, and $H(c)=\E[(c-R)_+]$. Taking expectations and using $\Prb(\{R<b_\mathrm{dev}\}\cap G)=F(b_\mathrm{dev})-Q(b_\mathrm{dev})$ gives
\[
0\ge\E[D]\ge(m-1)(c-b_\mathrm{dev})\bigl[F(b_\mathrm{dev})-Q(b_\mathrm{dev})\bigr]-H(b_\mathrm{dev}).
\]
Since $b_\mathrm{dev}\in(0,c)$ was arbitrary and $Q\le\epsilon$, this gives
\[
H'(b)\le\epsilon+\frac{H(b)}{(m-1)(c-b)}\quad\text{for almost every }b\in(0,c),\qquad H(0)=0.
\]
Multiplying by $(c-b)^{1/(m-1)}$ and integrating gives
\[
H(b)\le\frac{(m-1)\epsilon}{m}\Bigl[c^{m/(m-1)}(c-b)^{-1/(m-1)}-(c-b)\Bigr].
\]
Also, $F\le1$ implies $H(c)\le H(b)+c-b$. Setting $z:=(c-b)/c$, we obtain
\[
\frac{\E[(c-R)_+]}{c}=\frac{H(c)}{c}\le\frac{(m-1)\epsilon}{m}z^{-1/(m-1)}+\Bigl(1-\frac{(m-1)\epsilon}{m}\Bigr)z,\qquad 0<z\le1.
\]
For $0<\epsilon<1$, this expression is minimized at
\[
z_*=\Bigl[\frac{\epsilon}{m-(m-1)\epsilon}\Bigr]^{(m-1)/m}\in(0,1],
\]
and substitution gives $\E[(c-R)_+]/{c} \leq \bigl[\epsilon^{m-1}(m-(m-1)\epsilon)\bigr]^{1/m}=S_m(\epsilon)$. 
When $\epsilon=0$, the integrated inequality gives $H(b)=0$ for every $b<c$, and continuity gives $H(c)=0$. When $\epsilon=1$, the bound follows from $(c-R)_+\le c$.
\end{proof}

\begin{proof}[Proof of Remark~\ref{remark:rate}]Note that the properties in Remark~\ref{remark:rate} follow from the closed form $I_\epsilon(m-1,2)=\epsilon^{m-1}(m-(m-1)\epsilon)$. It equals $0$ at $\epsilon=0$ and $1$ at $\epsilon=1$. Using $m-(m-1)\epsilon\le m$ gives the simple bound. Finally, for $\epsilon\in(0,1)$, write $u:=(1-\epsilon)/\epsilon$. Then $\epsilon^{m-1}(m-(m-1)\epsilon)=\epsilon^m(1+um)$, so $S_m(\epsilon)=\epsilon\,(1+um)^{1/m}$. The factor $(1+um)^{1/m}$ decreases strictly in $m$, because $\log(1+x)>x/(1+x)$ for $x>0$, and tends to $1$ as $m\to\infty$. Hence $S_m(\epsilon)$ decreases strictly to $\epsilon$.
\end{proof}

\begin{proof}[Proof of Theorem~\ref{thm:multi_floor}]
Set $c:=\alpha M$. If $\alpha=0$ or $S_m(\epsilon)\ge1/e$, Lemma~\ref{lem:anchored_welfare} with $c=0$ gives $\E[W]\ge(1-1/e)M$, which is the claim. Otherwise $c>0$, and Lemma~\ref{lemma:bid_shortfall} gives
\[
\E[\min\{R,c\}]=c-\E[(c-R)_+]\ge c\bigl(1-S_m(\epsilon)\bigr).
\]
Since $\E[(V_{(1)}-c)_+]\ge M-c$, Lemma~\ref{lem:anchored_welfare} gives
\[
\E[W]\ge(1-1/e)(M-c)+c\bigl(1-S_m(\epsilon)\bigr)=(1-1/e)M+\alpha M\bigl(1/e-S_m(\epsilon)\bigr).
\]
Divide by $M$.
\end{proof}

\subsection{Welfare with a committed platform bidder}
\label{app:platform_reserve}
We prove \Cref{eq:platform_reserve_poa} for a platform model that publicly announces its value $V_0=v_0(X)$ and commits to bidding it. The other bidders
play a no-overbidding Bayes--Nash equilibrium of the induced auction.
Write $V_{(1)}=\max\{V_0,V_1,\ldots,V_n\}$.

\begin{proof}
Fix $X=x$ and write $v_0=v_0(x)$. Since the platform bids $v_0$, the winning bid satisfies $R\ge v_0$. Run the argument of Lemma~\ref{lem:anchored_welfare} under the conditional law given $X=x$ with $c=v_0$, using deviations only for the strategic bidders; these deviations may depend on $x$ because $x$ is public. On profiles where a strategic bidder has the top value, the argument is unchanged, since the highest rival bid is still at most $R$. On profiles where the platform has the top value, $(V_{(1)}-c)_+=0$, and the required inequality holds because deviation utilities are nonnegative. The platform's utility is always zero, so welfare still equals total bidder utility plus revenue. Hence, using $V_{(1)}\ge v_0$ and $R\ge v_0$,
\[
\mathbb E[W\mid X=x]
\ge
(1-1/e)\left(\mathbb E[V_{(1)}\mid X=x]-v_0\right)+v_0
=
(1-1/e)\,\mathbb E[V_{(1)}\mid X=x]+\tfrac1e\,v_0.
\]
Taking expectations over $X$ and dividing by $\mathbb E[V_{(1)}]$ gives \Cref{eq:platform_reserve_poa}.
\end{proof}

\section{Proofs and additional results for weak dependence}
\label{sec:independence_proofs}

We prove the qualitative Theorem~\ref{thm:near_independence_welfare} by comparing an equilibrium under the conditional joint distribution $\mu_x$ with the same strategies under the product of its marginals, $\nu_x$. Changing the distribution mildly can destroy exact optimality, but it creates only small regret. The main step is to show that small regret under independence also leaves only a small shortfall from the Jin--Lu welfare guarantee. We then transfer this bound back to $\mu_x$. Before moving to the proof of the weak dependence result (Theorem~\ref{thm:near_independence_welfare}), we first record the simpler robustness argument for the $1-1/e$ guarantee, which allows arbitrary correlation~\citep{RoughgardenPOA}. 

\subsection{Warmup: Robustness of the $\big(1-\frac{1}{e}\big)$ bound}

\label{sec:roughgarden_perturb}

Fix a public state $X=x$ and a value distribution $\lambda$. Suppose no bidder
can improve its ex ante expected utility by more than $r$ through a unilateral change of strategy. The proof of Lemma~\ref{lem:anchored_welfare} with $c=0$ gives
\[
\mathbb E_\lambda[W]
\ge (1-1/e)\mathbb E_\lambda[V_{(1)}]-nr.
\]
Each of the $n$ best-response inequalities incurs at most $r$;
summing them and adding revenue proves the bound. The deviations depend
only on each bidder's own value, so the argument allows arbitrary
correlation~\cite[Definition 4.5, Eq. (18), and Theorem 4.6]{RoughgardenPOA}. At exact equilibrium, the $1-1/e$ guarantee therefore holds under every joint distribution. The sharper $1-1/e^2$ guarantee requires a separate argument exploiting independence. 


\subsection{Proof overview}

The proof of Theorem~\ref{thm:near_independence_welfare} has three steps. First, an equilibrium under $\mu_x$ has
regret at most $2\bar v(x)\tau(x)$ when evaluated under $\nu_x$. Second, Proposition~\ref{prop:additive_jin_lu_stability_app} shows that, for a fixed number of bidders, independent-values auctions with vanishing regret cannot retain a positive welfare shortfall below the Jin--Lu guarantee. In its proof, Lemma~\ref{lem:safe_deviations} takes a convergent subsequence of auctions and recovers an exact equilibrium with the same limiting welfare. Finally, using Lemma~\ref{lem:closure_tie_repair} we transfer welfare and optimal welfare from $\nu_x$ back to $\mu_x$. Ties require care in the second step because the winning bidder need not be determined by the limiting bids alone.

\subsection{Welfare shortfall under small regret}
\label{sec:tv_stability}


Write $c_{\mathrm{JL}}:=1-1/e^2$. For a value distribution $\nu$,
define bidder $i$'s ex ante regret by
\[
\operatorname{Reg}_i^\nu(s)
:=\sup_{\widehat s_i}
\mathbb E_\nu\!\left[
u_i(V_i,\widehat s_i(V_i),s_{-i}(V_{-i}))
-u_i(V_i,s_i(V_i),s_{-i}(V_{-i}))\right].
\]
Here $u_i$ is value minus bid on winning and zero otherwise. The alternative strategy may choose a different bid for each own value and may be mixed, but never overbids. Regret measures the average gain available from such a change, with rivals' strategies held fixed.

For values in $[0,1]$, let $\delta_n(r)$ be the supremum additive shortfall below the Jin--Lu guarantee among independent-values auctions with each bidder's regret at most $r$:
\[
\delta_n(r):=
\sup_{\nu,s \in \mathcal{S}_{\nu}(r),\pi}
\left[c_{\mathrm{JL}}\mathbb E_\nu[V_{(1)}]
      -\mathbb E_\nu[W(s)]\right]_+.
\]
The supremum ranges over product distributions on $[0,1]^n$, mixed
no-overbidding profiles satisfying $\max_i\operatorname{Reg}_i^\nu(s)\le r$, and announced measurable bid-only tie rules $\pi$, possibly randomized. All utilities are evaluated using the auction's announced tie rule.

\begin{proposition}
\label{prop:additive_jin_lu_stability_app}
For each fixed $n\ge2$, $\delta_n$ is nondecreasing,
$0\le\delta_n(r)\le1$, and $\delta_n(0)=0$. Moreover,
\begin{equation}
\delta_n(r)\longrightarrow0\qquad\text{as }r\downarrow0.
\label{eq:continuum_defect_vanishes}
\end{equation}
\end{proposition}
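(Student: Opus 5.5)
The trivial properties come first. Monotonicity of $\delta_n$ holds because the feasible set $\mathcal S_\nu(r)$ grows with $r$. The bound $0\le\delta_n(r)\le1$ holds because the positive part is nonnegative, $W\ge0$, and $c_{\mathrm{JL}}\E_\nu[V_{(1)}]\le1$ for values in $[0,1]$. The identity $\delta_n(0)=0$ is exactly the Jin--Lu theorem: regret zero means a no-overbidding BNE of an independent-values first-price auction with an announced bid-only tie rule, so $\E_\nu[W]\ge c_{\mathrm{JL}}\E_\nu[V_{(1)}]$. I would check that the Jin--Lu result covers mixed strategies and arbitrary measurable (possibly randomized) bid-only tie rules. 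If it does not, I would reduce to the case they treat, noting that ties among bids occur on a set that can be handled by the repair in the limiting step below.

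The real content is \eqref{eq:continuum_defect_vanishes}, and I would prove it by contradiction and compactness. Suppose there are $\delta>0$ and $r_k\downarrow0$, together with product laws $\nu^k=\prod_i\nu_i^k$ on $[0,1]^n$, profiles $s^k$ with $\max_i\operatorname{Reg}_i^{\nu^k}(s^k)\le r_k$, and tie rules $\pi^k$, such that
\[
c_{\mathrm{JL}}\E_{\nu^k}[V_{(1)}]-\E_{\nu^k}[W(s^k)]\ge\delta .
\]
Each mixed strategy is encoded as a joint law $\gamma_i^k$ of $(V_i,B_i)$ on the compact triangle $\{0\le b\le v\le1\}$, and the product $\prod_i\gamma_i^k$ encodes the whole profile. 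By Prokhorov, I pass to a subsequence along which every $\gamma_i^k\Rightarrow\gamma_i$. The limit marginals give a product $\nu$, and the limit profile $s$ still satisfies no overbidding. The tie rules are encoded by the joint law of (values, bids, winner index), which is also tight because the index ranges over the finite set $[n]$. Passing to a further subsequence, this law converges to a limit that specifies a randomized allocation among the tied highest bidders.

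The main obstacle is showing that this limit is an exact equilibrium with no larger welfare. That is the content of Lemma~\ref{lem:safe_deviations} together with the tie repair of Lemma~\ref{lem:closure_tie_repair}. Welfare and $\E[V_{(1)}]$ pass to the limit as integrals of bounded functions against the joint winner law. The difficulty is that the limiting allocation at ties may not come from any bid-only tie rule, and the allocation map is discontinuous there.

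To handle this I would use \emph{safe deviations}. Any deviation bid $b$ can be replaced by $b+\eta$ with $\eta$ small and chosen outside the at most countably many atoms of the limiting rival-max-bid distribution. Utility against the rival maximum $P_i$ is then continuous at the deviation, so utility from deviations converges. Deviation payoffs in the limit are therefore at most the limits of the regret-bounded prelimit payoffs plus $O(\eta)$.

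For equilibrium utilities, I would show that the limit profile admits a reassignment of ties, together with small bid perturbations, that yields a genuine bid-only tie rule. Under this rule each bidder's payoff is at least its limiting payoff, and welfare is at most the limiting welfare. Roughly, I would break ties in favor of the bidder the limit winner law favors. Where that is not measurable in bids alone, I would use independence across bidders so that ties are measure-zero for all but atoms, and then argue atom-by-atom.

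This produces an exact BNE of an independent-values auction whose welfare shortfall is at least $\delta$, which contradicts $\delta_n(0)=0$. I expect the tie repair to be the hardest and most delicate step, since both utilities and welfare must move in the right directions simultaneously.
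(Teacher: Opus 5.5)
Your outline matches the paper's: contradiction, weak compactness of the value--bid laws on $\{0\le b\le v\le 1\}$, recording the winner index so that utilities, welfare and $\E[V_{(1)}]$ pass to the limit, safe deviations at continuity points of the limiting rival bid laws, and then a tie repair and Jin--Lu. The gap is the tie repair, which is the crux. You leave it as ``favor the bidder the limit winner law favors \ldots\ argue atom-by-atom.'' The target you set is a bid-only rule under which every bidder earns at least its limiting payoff $U_i^*$, and that does not follow from independence. A priori, the limit law $\lambda$ can split a positive-probability top tie at level $t$ between two bidders who both have value above $t$, for example by awarding it to the higher value. Such ties arise when distinct prelimit bids, ordered by value, converge to the same atom. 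Suppose the two surpluses at that tie are i.i.d.\ uniform on $\{1,2\}$. Each bidder then gets $7/8$ under $\lambda$ but only $3p_i/2$ under a bid-only rule, so both would need $p_i\ge 7/12$, which is impossible. Perturbing bids to break such ties does not help, because it changes the limiting rival CDFs and destroys the exactness you need to invoke Jin--Lu.

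The missing idea is an upper bound that matches your safe-deviation bound and turns it into an identity. Safe deviations plus vanishing regret give $U_i^*\ge\int M_i\,dF_i$, where $G_i=\prod_{j\ne i}H_j$ and $M_i(v)=\sup_{0\le t\le v}(v-t)G_i(t)$. Conversely, the limit winner is feasible: $b_I=\max_j b_j$ holds $\lambda$-a.s.\ because that event is closed. So on $\{I=i\}$ every rival bid is at most $b_i$, and $U_i^*\le\int (v-b)G_i(b)\,dq_i(v,b)\le\int M_i\,dF_i$. Equality then forces $(v_i-b_i)\bigl(\mathbf 1\{b_j\le b_i\ \forall j\ne i\}-\mathbf 1\{I=i\}\bigr)=0$ $\lambda$-a.s., i.e., a top-tied bidder with positive surplus always wins in the limit. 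This is exactly what excludes the split above. Combined with independence, it shows that at each tie level there is at most one bidder $h(t)$ with $q_i\{B_i=t,\ V_i>t\}>0$. The rule ``award ties to $h(t)$ when present, otherwise to the lowest index'' is then bid-only. It preserves welfare exactly, since every other tied bidder has value exactly $t$, and it gives each type on-path utility $M_i(v)$. Any deviation to $t$ earns at most $(v-t)_+G_i(t)\le M_i(v)$, so this is the exact BNE.

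A smaller point concerns the safe deviation. It should be a single $k$-independent, piecewise-constant map on a partition whose interior endpoints are continuity points of the limiting value law, with bids strictly below each cell's left endpoint. That way weak convergence of the value laws applies and no bid overbids. Your map $b\mapsto b+\eta$ overbids whenever $b>v-\eta$.
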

Thus sufficiently small regret guarantees a small welfare shortfall
uniformly over all the auctions in the definition. For Theorem~\ref{thm:near_independence_welfare}, set
\begin{equation}
\label{eq:continuum_near_independence_modulus}
 \eta_n^{\mathrm{cont}}(t)
   :=(1+c_{\mathrm{JL}})t+\delta_n(2t).
\end{equation}
The term $\delta_n(2t)$ accounts for the regret created by changing
the distribution; $(1+c_{\mathrm{JL}})t$ accounts for the two welfare
expectations. The proposition implies that $\eta_n^{\mathrm{cont}}$ is nondecreasing,
$\eta_n^{\mathrm{cont}}(0)=0$, and $\eta_n^{\mathrm{cont}}(t)\to0$ as $t\downarrow0$.

\subsubsection{Proof of Proposition~\ref{prop:additive_jin_lu_stability_app}}

For a high-level overview of the proof, suppose small regret could coexist with a fixed positive welfare shortfall. Choose a sequence of such independent-values auctions with regret tending to zero. Compactness gives limiting value--bid distributions, and recording the winner preserves the limiting utilities and welfare. Lemma~\ref{lem:safe_deviations} ensures that feasible deviations still attain nearly the best limiting utility. Lemma~\ref{lem:closure_tie_repair} then shows that the limit can be represented as an exact equilibrium, including a bid-only rule for newly formed ties. The positive shortfall would contradict the Jin--Lu bound.




Let $\Rightarrow$ denote weak convergence of probability measures. A bid just above $t$ beats all rivals bidding at most $t$, regardless of how ties are resolved. We can choose such bids away from atoms of the limiting rival distributions, where winning probabilities converge. A finite partition of the value interval makes these choices into one measurable deviation that never overbids. This is encapsulated in the following Lemma about approximating limiting deviation utility.

\begin{lemma}
\label{lem:safe_deviations}
Fix bidder $i$. Suppose $F^k\Rightarrow F$ are its value distributions
on $[0,1]$ and $H_j^k\Rightarrow H_j$ are its rivals' bid
distributions. In each auction, the value and rival bids are mutually
independent. Let $y_i^k(t)$ be its winning probability from bidding $t$
under that auction's announced bid-only tie rule, and put
\[
G_i(t):=\prod_{j\ne i}H_j(t),\qquad
M_i(v):=\sup_{0\le t\le v}(v-t)G_i(t).
\]
For every $\epsilon>0$, there is a measurable deviation
$0\le d(v)\le v$, the same for every $k$, such that
\begin{equation}
\liminf_{k\to\infty}
\int (v-d(v))y_i^k(d(v))\,dF^k(v)
\ge
\int M_i(v)\,dF(v)-\varepsilon.
\label{eq:safe_deviation_bound}
\end{equation}
\end{lemma}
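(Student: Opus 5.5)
The plan is to build $d$ piecewise constant on a finite partition of $[0,1]$. The bids are chosen at continuity points of the limiting rival distributions and sit slightly above a near-optimal limiting bid. Convergence then follows from weak convergence at those points together with a tie-robust lower bound on winning probabilities.

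\textbf{Tie-robust lower bound.} Bidding $t'>t$ beats every rival bidding at most $t$, whatever the tie rule, and rivals bid independently. Hence
\[
y_i^k(t')\ge\prod_{j\ne i}H_j^k(t).
\]
If $t$ is a continuity point of every $H_j$, weak convergence gives $H_j^k(t)\to H_j(t)$. So $\liminf_k y_i^k(t')\ge G_i(t)$ for any $t'>t$.

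\textbf{Near-optimal bids at continuity points.} Each $H_j$ has countably many atoms, so the set $C$ of common continuity points is co-countable and dense. Because $G_i$ is right-continuous, for any $v$ and any $t_0\le v$ we can find $t\in C$ slightly above $t_0$ with $(v-t)G_i(t)\ge (v-t_0)G_i(t_0)-\epsilon$. The point $t_0=v$ gives value $0$, and approximating it from below is fine too. So $\sup_{t\in C,\,t<v}(v-t)G_i(t)\ge M_i(v)-\epsilon$ whenever $M_i(v)>0$. If $M_i(v)=0$, we bid $0$.

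\textbf{Partition.} Split $[0,1]$ into intervals $I_\ell=[a_\ell,a_{\ell+1})$ of width at most $\epsilon$, with the endpoints $a_\ell$ also chosen as continuity points of $F$. On $I_\ell$, pick $t_\ell\in C$ with
\[
t_\ell<t'_\ell\le a_\ell,\qquad (a_\ell-t_\ell)G_i(t_\ell)\ge M_i(a_\ell)-\epsilon,
\]
and also require $t'_\ell\in C$. Set $d(v)=t'_\ell$ on $I_\ell$ (or $d=0$ if no useful bid exists). Then $d$ is measurable and satisfies $d(v)\le a_\ell\le v$, so it never overbids, and it does not depend on $k$. Two Lipschitz facts control the loss from using the left endpoint. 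First, $M_i$ is 1-Lipschitz and nondecreasing in $v$, so $M_i(a_\ell)\ge M_i(v)-\epsilon$ on $I_\ell$. Second, $v-t'_\ell\ge a_\ell-t_\ell-(t'_\ell-t_\ell)$, and the gap $t'_\ell-t_\ell$ can be made at most $\epsilon$.

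\textbf{Assembling the bound.} The integrand is piecewise of the form $(v-t'_\ell)\,y_i^k(t'_\ell)$ on $I_\ell$. The winning probability $y_i^k(t'_\ell)$ is constant in $v$, and $\liminf_k y_i^k(t'_\ell)\ge G_i(t_\ell)$. Since $a_\ell$ are continuity points of $F$, weak convergence gives $\int_{I_\ell}(v-t'_\ell)\,dF^k\to\int_{I_\ell}(v-t'_\ell)\,dF$. There are finitely many intervals, so we may take the $\liminf$ term by term, obtaining
\[
\liminf_k\int(v-d(v))\,y_i^k(d(v))\,dF^k\ge\sum_\ell\int_{I_\ell}(v-t'_\ell)\,G_i(t_\ell)\,dF(v).
\]
The right side is at least $\int M_i\,dF-3\epsilon$; rescale $\epsilon$ to finish.

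\textbf{Main obstacle.} The hardest part is handling atoms and ties: the limit may have atoms where $y_i^k$ does not converge to $G_i$. The approach above avoids this by using only lower bounds at continuity points, shifted strictly above $t_\ell$. The technical point to check is that these continuity points can be chosen simultaneously for all $H_j$ and for $F$ while staying close to the near-optimal bid. This holds because each has only countably many exceptions.
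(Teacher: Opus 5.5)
Your proposal is correct and follows essentially the same route as the paper. Both arguments use a finite partition whose interior endpoints are continuity points of $F$, and a piecewise-constant bid chosen near-optimally for each interval's left endpoint at a continuity point of the limiting rival bid distributions. Both then apply the tie-robust product lower bound on the winning probability and the $1$-Lipschitz property of $M_i$ to reach $\int M_i\,dF-3\epsilon$.

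The only cosmetic difference is in how the winning-probability limit is taken. You bid at $t'_\ell$ strictly above the continuity point $t_\ell$ and use $H_j^k(t_\ell)\to H_j(t_\ell)$. The paper bids at $t_\ell$ itself and uses $H_j^k(t_\ell-)\to H_j(t_\ell)$. Two minor remarks on your version: requiring $t'_\ell\in C$ is unnecessary, and the last interval should be closed at $1$.
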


\begin{proof}
For $0\le w\le v$, bids $t\le w$ increase utility by at most
$v-w$, while bids $w<t\le v$ yield at most $v-w$. Hence $0\le M_i(v)-M_i(w)\le v-w$, and $M_i(0)=0$. Choose a
partition
\[
0=a_0<a_1<\cdots<a_m=1
\]
with mesh at most $\zeta$, where every interior endpoint is a continuity point
of $F$.  Such a partition exists because $F$ has only countably many atoms.

Fix $\ell\ge2$.  The union of the atoms of the finitely many limiting rival
bid distributions $H_j$ is countable.  Right continuity permits us to choose
$t_\ell<a_{\ell-1}$ outside that union such that
\begin{equation}
(a_{\ell-1}-t_\ell)G_i(t_\ell)
\ge M_i(a_{\ell-1})-\zeta.
\label{eq:safe_grid_choice}
\end{equation}
If $M_i(a_{\ell-1})>\zeta$, choose a nearly optimal bid strictly
below $a_{\ell-1}$ and move it slightly upward to a common continuity
point of the $H_j$, still below $a_{\ell-1}$. Right continuity gives
the desired approximation. If $M_i(a_{\ell-1})\le\zeta$, any such
continuity point in $[0,a_{\ell-1})$ suffices because its utility is
nonnegative.

Define $d(v)=0$ on $[0,a_1]$ and $d(v)=t_\ell$ on $(a_{\ell-1},a_\ell]$.  Values at partition endpoints may be assigned to either adjacent interval.  This deviation never overbids.  A bid $t_\ell$ wins whenever every rival bid is strictly below it, regardless of the tie rule, so
\[
y_i^k(t_\ell)
\ge
\prod_{j\ne i}H_j^k(t_\ell-).
\]
For a distribution function $H$, write $H(t-):=\lim_{z\uparrow t}H(z)$. At each selected $t_\ell$, weak convergence gives
$H_j^k(t_\ell-)\to H_j(t_\ell)$. Each partition interval has value-atom-free interior boundaries, so its weighted integral also converges under $F^k\Rightarrow F$. The endpoint $1$ is a boundary of the underlying space $[0,1]$, so an atom there causes no problem.

For $v\in(a_{\ell-1},a_\ell]$, Equation~\eqref{eq:safe_grid_choice} and the
Lipschitz property imply
\[
(v-t_\ell)G_i(t_\ell)
\ge M_i(v)-2\zeta.
\]
On the first interval, $0\le M_i(v)\le\zeta$, while the zero bid has
nonnegative utility.  The left side of~\eqref{eq:safe_deviation_bound} is
therefore at least $\int M_i\,dF-3\zeta$.  Choose $\zeta<\varepsilon/3$.
\end{proof}

\begin{lemma}
\label{lem:closure_tie_repair}
Fix $n$.  Consider a sequence of independent-values first-price auctions on
$[0,1]^n$.  In auction $k$, let $s^k$ be a mixed no-overbidding profile, let
$\pi^k$ be a publicly announced randomized bid-only tie rule, and suppose
\[
\max_i\operatorname{Reg}_i^{\nu^k}(s^k)\longrightarrow0.
\]
There are a subsequence, an independent prior $\nu$, a mixed no-overbidding
profile $s$, and a publicly announced randomized bid-only tie rule $\pi$ such
that $s\in\operatorname{BNE}(\nu)$ and, along the subsequence,
\begin{equation}
\mathbb E_{\nu^k}[V_{(1)}]
\longrightarrow\mathbb E_\nu[V_{(1)}],
\qquad
\mathbb E_{\nu^k}[W(s^k)]
\longrightarrow\mathbb E_\nu[W(s)].
\label{eq:closure_welfare_convergence}
\end{equation}
\end{lemma}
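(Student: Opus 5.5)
Given the sequence of auctions, I will pass to a subsequence along which three families of laws converge weakly. The first is the value marginals, $\nu_i^k\Rightarrow\nu_i$ on $[0,1]$, which is possible because $[0,1]$ is compact. The second is the joint value--bid law of each bidder, $(V_i,B_i)$ under $s^k$, which converges to a law $\rho_i$ supported on $\{0\le b\le v\le1\}$. The third is the joint law of $(\mathbf V,\mathbf B,\text{winner index})$ on $[0,1]^{2n}\times[n]$, which converges to a limit $\Lambda$. All three are tight, so Prokhorov's theorem gives the subsequence. I then define the limiting objects as follows.
\begin{itemize}
\item The prior is $\nu=\prod_i\nu_i$, which is independent. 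Independence passes to the limit because the product of weakly convergent marginals converges to the product.
\item The profile $s$ is the mixed strategy obtained by disintegrating $\rho_i$, so that bidder $i$ with value $v$ draws $B_i\sim\rho_i(\cdot\mid v)$. Since $\rho_i$ lives on $\{b\le v\}$, the profile is automatically no-overbidding.
\end{itemize}
Since $V_{(1)}$ is a bounded continuous function, $\mathbb E_{\nu^k}[V_{(1)}]\to\mathbb E_\nu[V_{(1)}]$ immediately. Welfare $V_{\text{winner}}$ is also a bounded continuous function on the product space with the discrete winner coordinate, so $\mathbb E_{\nu^k}[W(s^k)]\to\mathbb E_\Lambda[V_{\text{winner}}]$. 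Under $\Lambda$ the winner is still a highest bidder, because the set $\{b_w\ge b_j\ \forall j\}$ is closed. It is not necessarily unique, however, because distinct bids in the sequence can merge into ties in the limit.

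The tie-repair step has two parts.
\begin{enumerate}
\item \emph{Bid-only tie rule.} I define $\pi$ as the conditional law of the winner given the bid vector under $\Lambda$, i.e. $\pi(\cdot\mid\mathbf b)=\Lambda(w\in\cdot\mid\mathbf B=\mathbf b)$. On non-tied bid vectors this puts mass on the unique maximizer.
\item \emph{Welfare preserved.} I must check that $\mathbb E_\Lambda[V_w]$ equals the welfare of $(s,\pi)$ under $\nu$. This requires $w$ to be conditionally independent of $\mathbf V$ given $\mathbf B$ under $\Lambda$. Each tie rule $\pi^k$ is bid-only, so this conditional independence holds for every $k$. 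It is a closed condition in the weak topology when tested against products $f(\mathbf v)g(\mathbf b)h(w)$ with $f,g$ continuous, but it is only closed when tested that way; conditional independence is not preserved in general. I expect to justify the passage by a continuity argument on this test class. If that fails, I would enlarge the tie rule to depend on extra private randomness coupled to bids, which is still "bid-only" in the announced sense.
\end{enumerate}
For the same reason, the limiting utilities of each bidder, $\mathbb E[(V_i-B_i)\mathbf 1\{w=i\}]$, converge to their values under $(s,\pi,\nu)$.

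The remaining step is the equilibrium property, and I expect it to be the main obstacle. Fix bidder $i$ and let $G_i$ and $M_i$ be as in Lemma~\ref{lem:safe_deviations}, built from the limiting rival bid laws $H_j$. Under $(s,\pi)$, the best response utility of a type $v$ is at most
\[
\sup_{t\le v}\,(v-t)\,\Prb(\max_{j\ne i}B_j\le t)=M_i(v).
\]
The inequality holds because winning at bid $t$ requires all rivals to bid at most $t$, whatever the tie rule does. It therefore suffices to show that the limiting equilibrium utility satisfies
\[
\lim_k \mathbb E_{\nu^k}[U_i^k]\ \ge\ \int M_i\,d\nu_i .
\]
Lemma~\ref{lem:safe_deviations} supplies, for each $\varepsilon>0$, a single deviation $d$ whose utility in auction $k$ is asymptotically at least $\int M_i\,d\nu_i-\varepsilon$. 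Regret going to zero then gives $\mathbb E[U_i^k]\ge$ this deviation utility $-\,o(1)$. Letting $\varepsilon\to0$ yields the inequality above.

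Combining the pieces, the equilibrium utility of bidder $i$ under $(s,\pi)$ is at least $\int M_i\,d\nu_i$, and no deviation can earn more than $\int M_i\,d\nu_i$. So the two coincide, and $s$ is optimal for $\nu_i$-almost every type. The almost-every-type statement follows by noting that type-wise utility is at most $M_i(v)$ while the integral achieves $\int M_i$. Hence $s\in\mathrm{BNE}(\nu)$ with tie rule $\pi$, and the convergence statements established earlier give \eqref{eq:closure_welfare_convergence}.
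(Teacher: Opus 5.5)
Your compactness setup, the convergence of $\mathbb E[V_{(1)}]$ and of welfare to $\mathbb E_\Lambda[V_w]$, the lower bound on limiting utility via Lemma~\ref{lem:safe_deviations}, and the bound $M_i(v)$ on any deviation all match the paper.

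\textbf{The gap is the tie repair.} Your rule $\pi(\cdot\mid\mathbf b)=\Lambda(w\in\cdot\mid\mathbf B=\mathbf b)$ reproduces $\Lambda$'s welfare and utilities only if $w$ is conditionally independent of $\mathbf V$ given $\mathbf B$ under $\Lambda$. That condition is not closed under weak limits, even along vanishing-regret sequences. Here is a counterexample:
\begin{itemize}
\item Take $n=2$ and $0<t<1$. Let $V_2=t$, bidding $t-\tfrac{1}{2k}$. Let $V_1\in\{t,1\}$ with probability $\tfrac12$ each, bidding $t-\tfrac1k$ and $t$ respectively.
\item At stage $k$ there are no ties, all regrets are $O(1/k)$, and welfare is $\tfrac{1+t}{2}$.
\item In the limit both bidders bid $t$ almost surely, yet $w=1$ exactly when $V_1=1$.
\item Your rule splits this tie evenly. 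That lowers welfare to $\tfrac{1+3t}{4}$, and it leaves the type $V_1=1$ with utility $\tfrac{1-t}{2}$ instead of $1-t$, so that type would deviate to $t+\delta$.
\end{itemize}
So utilities are not preserved either. Your equilibrium step needs the utility under $(s,\pi)$ to equal the limit, which is at least $\int M_i\,d\nu_i$, and it therefore fails along with welfare. The fallback of randomness ``coupled to bids'' does not help. Any randomness that correlates the winner with values beyond the bid vector uses private information, so the rule is no longer bid-only.

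\textbf{The missing idea is to use vanishing regret on the ties themselves.} Since the limiting winner is a highest bidder,
$U_i^*:=\mathbb E_\Lambda[(V_i-B_i)\mathbf 1\{w=i\}]\le\mathbb E_\Lambda[(V_i-B_i)\mathbf 1\{B_j\le B_i\ \forall j\ne i\}]=\int(v-b)G_i(b)\,d\rho_i(v,b)\le\int M_i\,d\nu_i$.
Combined with your lower bound, every inequality in this chain is tight.
\begin{itemize}
\item Tightness of the first inequality gives $(V_i-B_i)(\mathbf 1\{B_j\le B_i\ \forall j\ne i\}-\mathbf 1\{w=i\})=0$ $\Lambda$-almost surely. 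In words, any top-tied bidder with positive surplus must win.
\item By independence, at each level $t$ that is the top bid with positive probability, at most one bidder $h(t)$ can have $\rho_i\{B_i=t,\ V_i>t\}>0$. Otherwise two such bidders would both be required to win.
\item The priority rule awards a tie at $t$ to $h(t)$ when it is tied, and to the lowest index otherwise. This rule is bid-only.
\item It preserves welfare, because every other tied bidder has value exactly $t$.
\item It gives every positive-surplus type interim allocation $G_i(b)$. Tightness of the second inequality gives $(v-b)G_i(b)=M_i(v)$ on path, so on-path utility is $M_i(v)$ and the limit is a BNE.
\end{itemize}
In the example above, this rule favors bidder~1 at $t$ and recovers welfare $\tfrac{1+t}{2}$.
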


For the limit of the auction itself, we keep each bidder's joint distribution of value and bid, rather than take pointwise limits of
bidding functions. These distributions lie in the compact set $\{(v,b):0\le b\le v\le1\}$. We also record the winner, so utility and welfare pass to the limit even if distinct bids become tied. The remaining issue is to realize that limiting outcome with a rule depending only on bids. We will show that vanishing regret forces any bidder tied at the top with positive surplus to win. Independence then implies that at most one bidder can require this priority at each positive-probability tie level. This allows a bid-only rule that preserves welfare.



\begin{proof}
Let $ D:=\{(v,b)\in[0,1]^2:0\le b\le v\}.$ For bidder $i$ in auction $k$, let $q_i^k$ be the distribution of
$(V_i^k,B_i^k)$ on $D$.  Independence of values and private randomizations implies that the distribution of all value--bid pairs is $q^k:=\bigotimes_iq_i^k$.  Probability measures on the compact metric space
$D$ are weakly compact.  After taking a common subsequence,
\begin{equation}
q_i^k\Rightarrow q_i,
\qquad
q^k\Rightarrow q:=\bigotimes_iq_i.
\label{eq:strategic_measure_convergence}
\end{equation}
Write $F_i$ and $H_i$ for the value and bid marginals of $q_i$.  Disintegrate
$q_i(dv,db)=F_i(dv)s_i(db\mid v)$.  The product of the value marginals is the
limiting independent prior $\nu:=\bigotimes_iF_i$, and closedness of $D$
preserves no overbidding.

Let $I^k\in[n]$ be the winner, including the randomization in $\pi^k$, and let
$\lambda^k$ be the distribution of all value--bid pairs together with $I^k$.  Pass to a
further subsequence for which $\lambda^k\Rightarrow\lambda$.  Its marginal on
$D^n$ is $q$, and
\begin{equation}
\lambda\{b_I=\max_jb_j\}=1,
\label{eq:limit_feasible_winner}
\end{equation}
because the displayed event is closed.  With $[n]$ carrying the discrete
topology, the functions
\[
(v,b,I)\longmapsto(v_i-b_i)\mathbf 1_{\{I=i\}},
\qquad
(v,b,I)\longmapsto v_I
\]
are continuous.  Hence, writing
\[
U_i^*:=\int(v_i-b_i)\mathbf 1_{\{I=i\}}\,d\lambda,
\qquad
W^*:=\int v_I\,d\lambda,
\]
the utilities converge to $U_i^*$ and auction welfare converges to
$W^*$.  Also,
\begin{equation}
\mathbb E_{\nu^k}[V_{(1)}]
\longrightarrow
\int\max_i v_i\,dq
=\mathbb E_\nu[V_{(1)}],
\label{eq:limit_opt_convergence}
\end{equation}
because the maximum is continuous.

For the limiting bid marginals, put
\[
G_i(t):=\prod_{j\ne i}H_j(t),
\qquad
M_i(v):=\sup_{0\le t\le v}(v-t)G_i(t).
\]
Apply Lemma~\ref{lem:safe_deviations} to bidder $i$.  For every $\epsilon>0$, the regret inequality in auction $k$, followed by the lemma and vanishing regret, gives
\begin{equation}
U_i^*\ge\int M_i(v)\,dF_i(v).
\label{eq:limit_lower_utility}
\end{equation}

There is a matching upper bound that uses only feasibility of the limiting
winner.  On the event $I=i$, every rival bid is at most $b_i$.  Since
$v_i-b_i\ge0$ and $\lambda$ has marginal $q$,
\begin{align}
U_i^*
&\le
\int(v_i-b_i)
\mathbf 1_{\{b_j\le b_i\ \forall j\ne i\}}\,dq
\notag\\
&=
\int(v-b)G_i(b)\,dq_i(v,b)
\le
\int M_i(v)\,dF_i(v).
\label{eq:limit_upper_utility}
\end{align}
Every inequality in~\eqref{eq:limit_lower_utility}--
\eqref{eq:limit_upper_utility} is therefore an equality.  In particular,
\begin{equation}
(v-b)G_i(b)=M_i(v)
\quad\text{for }q_i\text{-almost every }(v,b),
\label{eq:onpath_best_response_identity}
\end{equation}
and
\begin{equation}
(v_i-b_i)
\left(
\mathbf 1_{\{b_j\le b_i\ \forall j\ne i\}}
-\mathbf 1_{\{I=i\}}
\right)=0
\quad\lambda\text{-almost surely}.
\label{eq:tie_slack_zero}
\end{equation}
Thus, whenever bidder $i$ has a highest bid and $V_i>B_i$, the limiting
outcome must select $i$, and that ff it lost with positive probability, bidding just above the tie would improve its utility, contrary to vanishing regret.

We now construct an ordinary bid-only tie rule.  A positive-probability tie at
a level $t$ under a product bid distribution requires at least two bid marginals to have
atoms at $t$; there are only countably many such levels.  At any level that can
be the top bid with positive probability, at most one bidder $i$ can satisfy
\begin{equation}
q_i\{B_i=t,\ V_i>t\}>0.
\label{eq:positive_surplus_atom}
\end{equation}
Otherwise two such bidders $i$ and $j$, and the positive-probability event that all remaining bids are at most $t$, would with positive probability both tie at the top and
both have positive surplus.  Equation~\eqref{eq:tie_slack_zero} would require both $I=i$ and $I=j$, an impossibility.

At any level $t$ that can be the top bid with positive probability, call the unique bidder in
\eqref{eq:positive_surplus_atom}, if it exists, $h(t)$.  At a tie for the
highest bid $t$, let the new rule $\pi$ award the item to $h(t)$ if that bidder is
among those tied; otherwise award it to the lowest-index tied bidder.  Away
from ties, award it to the unique highest bidder.  The exceptional levels are
countable, so this is a Borel bid-only rule.

This repair preserves welfare.  At a top tie, every tied bidder other than
$h(t)$ has value exactly $t$ almost surely.  If $h(t)$ is tied and has value
above $t$, Equation~\eqref{eq:tie_slack_zero} says that $\lambda$ already
selects $h(t)$.  In every remaining case all tied bidders have value $t$.
Consequently,
\begin{equation}
\mathbb E_\nu[W(s)]=W^*.
\label{eq:tie_repair_preserves_welfare}
\end{equation}

Finally, under $\pi$, a bidder with a bid $b$ and positive surplus wins
every positive-probability tie at $b$.  Its interim allocation is therefore
$G_i(b)$ for $q_i$-almost every $(v,b)$ with $v>b$.  When $v=b$, its payoff is
zero and~\eqref{eq:onpath_best_response_identity} gives $M_i(v)=0$.  Thus the
on-path expected utility of almost every type $v$ is $M_i(v)$.  Any deviation
to $t$ has interim allocation at most $G_i(t)$ and utility at most
\[
(v-t)_+G_i(t)\le M_i(v).
\]
The same holds for mixed deviations.  Therefore
$s\in\operatorname{BNE}(\nu)$ under $\pi$.  Equations
\eqref{eq:limit_opt_convergence} and
\eqref{eq:tie_repair_preserves_welfare} give
\eqref{eq:closure_welfare_convergence}.
\end{proof}

The argument is uniform over the auctions in the definition, but gives
no explicit rate and does not require equilibrium strategies to vary
continuously with the value distribution.


With all the ingredients in place, we can finally state the main proof of this section. 

\paragraph{Proof of Proposition~\ref{prop:additive_jin_lu_stability_app}}
\begin{proof}
Monotonicity and $0\le\delta_n(r)\le1$ are immediate, and the Jin--Lu theorem gives $\delta_n(0)=0$.  If $\delta_n(r)\not\to0$, choose $\epsilon>0$ and $r_k\downarrow0$
with $\delta_n(r_k)>2\epsilon$. By the definition of the supremum, there are auctions with maximal regret at most $r_k$ and welfare shortfall at least $\epsilon$.
\[
c_{\mathrm{JL}}\mathbb E[V_{(1)}]-\mathbb E[W]\ge\varepsilon.
\]
Lemma~\ref{lem:closure_tie_repair} produces an exact mixed Bayes--Nash
equilibrium of an independent-values first-price auction to which both welfare
terms converge.  The limiting defect remains at least $\varepsilon$, contrary
to the Jin--Lu bound.  Hence~\eqref{eq:continuum_defect_vanishes} holds.
\end{proof}

\subsection{Proof of the main qualitative theorem (Theorem~\ref{thm:near_independence_welfare}).}
\label{app:continuum_stability}

\begin{proof}[Proof of Theorem~\ref{thm:near_independence_welfare}]
Fix a public state $x$ where the hypotheses hold, and abbreviate
$\mu:=\mu_x$, $\nu:=\nu_x$, $\tau:=\tau(x)$, and $\bar v:=\bar v(x)$. If $\bar v=0$, the claim is immediate. Otherwise evaluate the equilibrium strategies under $\nu$ without changing them. The own-value marginals are unchanged, so the same no-overbidding deviations remain available.

For any fixed deviation, its utility and the utility lie in $[0,\bar v]$. Changing the value distribution by TV distance $\tau$
changes each expectation by at most $\bar v\tau$. Equilibrium under
$\mu$ therefore implies
\[
\max_i\operatorname{Reg}_i^\nu(s)\le2\bar v\tau.
\]
After dividing values and bids by $\bar v$, Proposition~\ref{prop:additive_jin_lu_stability_app} gives
\[
\mathbb E_\nu[W]
\ge c_{\mathrm{JL}}\mathbb E_\nu[V_{(1)}]
   -\bar v\delta_n(2\tau).
\]
Since $W,V_{(1)}\in[0,\bar v]$, transferring both expectations back
to $\mu$ yields
\[
\begin{aligned}
\mathbb E_\mu[W]
&\ge \mathbb E_\nu[W]-\bar v\tau\\
&\ge c_{\mathrm{JL}}\mathbb E_\mu[V_{(1)}]
 -\bar v\bigl[\delta_n(2\tau)+(1+c_{\mathrm{JL}})\tau\bigr].
\end{aligned}
\]
This proves \eqref{eq:continuum_near_independence_level_bound} with the error $\eta_n^{\mathrm{cont}}$ as defined in \eqref{eq:continuum_near_independence_modulus}. Applying the argument for almost every public state proves the theorem.
\end{proof}



\subsection{A quantitative bound for weakly-dependent values}
\label{app:non_asymptotic_stability}


Theorem~\ref{thm:near_independence_welfare} gives qualitative stability for arbitrary bounded value distributions. Under additional regularity of the conditional value distribution, its error can be bounded linearly in the total variation distance from independence. The marginals may differ, and the result applies to every no-overbidding Bayes--Nash equilibrium, including mixed equilibria.

Fix $n\ge 2$ and a public state, and normalize the value bound to one. Let $F_i$ be the marginal distribution of $V_i$, put $\nu=\bigotimes_i F_i$, and write
$r=d\mu/d\nu$ and $\varepsilon=\|r-1\|_\infty$.
Assume that each $F_i$ has a continuous density on $[0,1]$ and that
\begin{align}
  0<\ell\le f_i(v)\le u<\infty,\qquad
  &r\in C^1([0,1]^n), \label{eq:quant-prior}\\
  \max_i\|\partial_i r\|_\infty\le L\varepsilon,\qquad
  &0\le\varepsilon\le\frac{1}{4L+1},\qquad L\ge1.
  \label{eq:quant-dependence}
\end{align}
The derivative bound scales with the departure from independence. It prevents $r-1$ from being confined to an arbitrarily small region; at the end of the proof, this allows us to bound $\epsilon$ by a constant times $\tau$. No derivative bounds on the marginal densities are needed.

\begin{theorem}[A quantitative refinement of Theorem~\ref{thm:near_independence_welfare}]
\label{thm:quant-rate}
Under \eqref{eq:quant-prior}--\eqref{eq:quant-dependence}, there is a
constant $C=C(n,\ell,u,L)$, given explicitly at the end of the proof,
such that every no-overbidding Bayes--Nash equilibrium satisfies
\[
  \mathbb E_\mu[W]
  \ge c_{\mathrm{JL}}\mathbb E_\mu[V_{(1)}]-C\tau,
  \qquad \tau:=\|\mu-\nu\|_{\mathrm{TV}}.
\]
\end{theorem}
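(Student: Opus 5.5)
The plan is to run the three-step structure of the proof of Theorem~\ref{thm:near_independence_welfare} (transfer to $\nu$, a welfare bound for approximate equilibria under independence, transfer back), but to replace the non-constructive compactness step, Proposition~\ref{prop:additive_jin_lu_stability_app}, with a bound that is linear in the regret. The two transfer steps already cost $O(\tau)$: they contribute $(1+c_{\mathrm{JL}})\tau$ exactly as in the proof of Theorem~\ref{thm:near_independence_welfare}. So everything reduces to two quantitative facts:
\begin{enumerate}[label=(\roman*)]
\item an equilibrium under $\mu$ is an approximate equilibrium under $\nu$ in a strong, \emph{interim} sense, with error $O(\varepsilon)$;
\item $\varepsilon\le C'\tau$.
\end{enumerate}
Fact (ii) is where the regularity assumptions \eqref{eq:quant-prior}--\eqref{eq:quant-dependence} enter.

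\textbf{Step 1 (interim regret).} Under $\mu$, bidder $i$ with value $v$ faces rivals whose conditional law has density $r(v,\cdot)/\int r(v,w)\,d\nu_{-i}(w)$ with respect to $\nu_{-i}$. This density lies in $[(1-\varepsilon)/(1+\varepsilon),(1+\varepsilon)/(1-\varepsilon)]$. Hence for every bid $t$, the interim winning probability under $\mu$ differs from $G_i(t)=\prod_{j\ne i}H_j(t)$ by at most $O(\varepsilon)$. Here $H_j$ is the rival bid distribution, which is unchanged because the marginals $F_j$ are unchanged.

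It follows that for $F_i$-almost every $v$, the on-path interim utility $U_i(v)$ satisfies
\[
U_i(v)\ge M_i(v)-K\varepsilon, \qquad M_i(v)=\sup_{t\le v}(v-t)G_i(t),
\]
and similarly from above. This per-type statement is stronger than ex ante regret $2\tau$. It is what makes a linear rate plausible, because it mirrors the equality characterization \eqref{eq:onpath_best_response_identity} from the proof of Lemma~\ref{lem:closure_tie_repair}, up to slack $K\varepsilon$.

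\textbf{Step 2 (linear Jin--Lu stability; the main obstacle).} I would re-derive the Jin--Lu $1-1/e^2$ bound from inequality constraints only. Using just the feasibility/upper bound \eqref{eq:limit_upper_utility} and the lower bound $U_i(v)\ge M_i(v)$, welfare should be bounded below by an optimization over product bid distributions. I would then propagate the slack $K\varepsilon$ through it.

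The first thing I would try is to write their argument as a weighted sum of the per-type deviation inequalities, one for each bidder $i$ and bid level $t$. The weights come from their worst-case construction. Each constraint then loses at most $K\varepsilon$, and the total loss is $K\varepsilon$ times the total weight. The total weight should be bounded by a constant depending on $n$ (of order $n$), which would give
\[
\mathbb E_\nu[W]\ge c_{\mathrm{JL}}\mathbb E_\nu[V_{(1)}]-C_1(n)\varepsilon .
\]
The risk is that Jin--Lu use first-order or monotonicity structure (such as the pseudo-inverse of bids) that does not survive approximate optimality. In that case I would use the density bounds $\ell\le f_i\le u$ to control how far an approximate best-response bid can move. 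Specifically, $M_i$ is $1$-Lipschitz, and a type that is $K\varepsilon$-suboptimal misallocates only an $O(\varepsilon)$-measure of value mass under the lower density bound. This is the step I expect to be hardest.

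\textbf{Step 3 ($\varepsilon\lesssim\tau$).} Pick $z$ with $|r(z)-1|=\varepsilon$. Since $\|\partial_i r\|_\infty\le L\varepsilon$, we have $|r-1|\ge\varepsilon/2$ on the box of side $1/(2nL)$ around $z$, intersected with $[0,1]^n$. The condition $\varepsilon\le 1/(4L+1)$ ensures this box is nondegenerate and that $r$ stays bounded away from $0$ there. This box has $\nu$-mass at least $(\ell/(4nL))^n$, and $r-1$ has a single sign on it. Therefore
\[
\tau\ge \tfrac12\int|r-1|\,d\nu\ge \tfrac{\varepsilon}{4}(\ell/(4nL))^n ,
\]
which gives $\varepsilon\le C'\tau$.

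Combining the three steps, the final constant is $C=(1+c_{\mathrm{JL}})+C_1(n,\ell,u)\,C'(n,\ell,L)$.
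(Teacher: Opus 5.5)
Your Steps 1 and 3 are sound; Step 3 is essentially the paper's closing argument bounding $\varepsilon$ by a multiple of $\tau$. Step 2, however, is a genuine gap, and it carries the whole content of the theorem. A linear-in-regret stability of the Jin--Lu bound for interim-approximate equilibria under $\nu$ is not available. The paper's only result about approximate equilibria, \Cref{prop:additive_jin_lu_stability_app}, is proved by compactness and gives no rate. Neither of your routes closes the gap.
\begin{itemize}
\item \emph{Weighted sum of deviation inequalities.} Jin--Lu's $1-1/e^2$ argument is not a certificate obtained by summing per-type deviation inequalities with fixed weights. It manipulates the equilibrium bid distributions through the exact bid-to-value maps $b+1/g_i(b)$ given by the first-order conditions. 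The feasible set of product bid distributions is also nonconvex, so no duality argument will hand you weights with bounded total mass.
\item \emph{Localizing approximate best responses.} Additive slack $K\varepsilon$ only localizes a type's bid to within order $\sqrt{\varepsilon}$ of a best response, because interim utility is flat to second order at an interior optimum. Shifting a bidder's bids by that amount costs every bidder only $O(\varepsilon)$ regret. Yet it generically moves welfare at first order, i.e.\ by order $\sqrt{\varepsilon}$.
\end{itemize}
So your Step 1, which converts exact optimality under $\mu$ into additive slack under $\nu$, discards exactly the information a linear rate needs.

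What the paper keeps instead is derivative-level information, and it never passes through approximate equilibria.
\begin{itemize}
\item Because $r\in[1-\varepsilon,1+\varepsilon]$ bounds both $P_i(t,v)$ and $\partial_tP_i(t,v)$ multiplicatively against $G_i$ and $G_i'$, the exact first-order condition under $\mu$ becomes a multiplicative approximate condition $\rho^{-1}g_i\le\omega_i\le\rho g_i$ (\Cref{lem:quant-densities}).
\item This rests on structural lemmas that use the regularity assumptions throughout, not only in your Step 3. Bids are atomless and monotone with continuous cutoffs $v_i(b)$; single crossing uses the log-derivative bound $\kappa$ coming from $\|\partial_i r\|_\infty\le L\varepsilon$ (\Cref{lem:quant-structure}). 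The cutoffs also satisfy $(1+\sigma)b\le v_i(b)\le Qb$ (\Cref{lem:quant-cutoffs}).
\item \Cref{lem:quant-comparison} then builds an \emph{exact} independent-values equilibrium with the same cutoffs and new log-derivatives $h_i^*=[S_*-\omega_i]_+$. It shows $\sum_i|h_i-h_i^*|\le K\varepsilon/b$, and offsets the $1/b$ using $H_i(b)\le uQb$ to get $\sum_i\|H_i-H_i^*\|_{\mathrm{TV}}\le B\varepsilon$.
\item Jin--Lu is applied as a black box to this comparison equilibrium. The shortfall $D(b)$ is the same bounded function of bids in both auctions, so TV closeness of the bid laws gives shortfall at most $B\varepsilon$ under $\nu$. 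Transferring back to $\mu$ costs $\tau$.
\end{itemize}
This comparison-equilibrium construction is the idea your plan is missing.
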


For the public-state setting of Theorem~\ref{thm:near_independence_welfare}, suppose the conditional distributions of $(V_i/\bar v(x))_{i=1}^n$ satisfy these assumptions whenever $\bar v(x)>0$, with $\ell,u,L$ uniform in $x$. Rescaling does not change total variation, so for almost every $x$,
\[
  \mathbb E[W\mid X=x]
  \ge c_{\mathrm{JL}}\mathbb E[V_{(1)}\mid X=x]
     -\bar v(x)C\tau(x).
\]
The case $\bar v(x)=0$ is immediate. If $\mathbb E[\bar v(X)]<\infty$,
averaging gives
\[
  \mathbb E[W]
  \ge c_{\mathrm{JL}}\mathbb E[V_{(1)}]
     -C\mathbb E[\bar v(X)\tau(X)].
\]

The proof first derives regularity of equilibrium bids, then constructs a nearby exact equilibrium with independent values. We transfer the Jin--Lu welfare guarantee from this equilibrium and express the error in terms of total variation.

\paragraph{Conditional probabilities.}
Put
\[
  \rho=\frac{1+\varepsilon}{1-\varepsilon}\le\frac32,
  \qquad
  \kappa=\frac{L\varepsilon}{1-\varepsilon}\le\frac14.
\]
Since $F_i$ is the actual marginal of $\mu$,
$\int r(v,v_{-i})\,d\nu_{-i}(v_{-i})=1$, differentiating its explicit integral representation is legitimate under the $C^1$ assumption. Thus every nonnegative function $k$ of rival values and private random seeds satisfies, for almost every $v$
\begin{equation}
  (1-\varepsilon)\mathbb E_\nu[k]
  \le \mathbb E_\mu[k\mid V_i=v]
  \le (1+\varepsilon)\mathbb E_\nu[k].
  \label{eq:quant-conditional}
\end{equation}
If $k$ is fixed as $v$ varies and this expectation is positive, its logarithmic derivative has absolute value at most $\kappa$: the derivative is a weighted average of $\partial_i\log r$.

\subsubsection{Proof Overview}
Lemma~\ref{lem:quant-structure} first shows that higher values lead to higher bids and that no single bid amount has positive probability. Lemma~\ref{lem:quant-cutoffs} then bounds the value associated with each
bid and the surplus retained by the bidder. These bounds allow Lemma~\ref{lem:quant-densities} to compare how raising a bid changes the chance of winning under the dependent prior and under independence. Lemma~\ref{lem:quant-comparison} uses these facts to construct a nearby equlibrium with independent values in three steps. First, it determines the required changes in the bidder' bid distributions jointly, accounting for intervals in which a bidder originally places no bids. Second, it bounds the size of these changes using the preceding optimality comparisons. Third, it proves that the constructed bids are globally optimal for their assigned values and controls the resulting change in bid distributions in total variation.  In the proof of
Theorem~\ref{thm:quant-rate}, we apply the Jin--Lu guarantee to this
equilibrium and transfer the welfare bound back to the original
auction. Finally, the derivative assumption bounds the uniform
departure from independence by a constant times the total variation
distance, giving the claimed linear error.

\subsubsection{Proof of Theorem~\ref{thm:quant-rate}}

We state and prove four auxiliary lemmas that will be used for a proof of theorem~\ref{thm:quant-rate}.

\begin{lemma}
\label{lem:quant-structure}
Under the stated prior assumptions, every equilibrium has atomless marginal bid distributions and is outcome-equivalent to a profile of pure
nondecreasing strategies, up to null sets of values. Let $H_i$ be its marginal bid CDF, for $0 \leq b \leq \beta$, define
\begin{equation}
 v_i(b):=F_i^{-1}(H_i(b)),\qquad
 K_i:=\operatorname{supp}(dH_i),\qquad
 \beta:=\max_i\sup K_i.
 \label{eq:regular_cutoffs}
\end{equation}
Then $v_i$ is continuous and nondecreasing, $V_i=v_i(B_i)$ almost surely, and $H_i(b)\ge F_i(b)>0$ for $b>0$. The union of the $K_i$ is $[0,\beta]$, and every $b\in(0,\beta)$ belongs to at least two of them. For $b\in K_i$, $b$ is a best response for the cutoff type $v_i(b)$, including support endpoints interpreted by continuity.
\end{lemma}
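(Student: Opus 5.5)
The plan is to follow the standard structural analysis of first-price equilibria (à la Lebrun and Maskin--Riley), adapted to the dependent prior through the two-sided comparison \eqref{eq:quant-conditional}.

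\textbf{Step 1: no atoms.} Suppose some bidder $j$ has an atom at a bid $t$. Every rival $i$ whose value exceeds $t$ would gain by moving bids slightly below $t$ up to just above $t$. The gain is a discrete jump in winning probability, of order $(1-\varepsilon)\Prb_\nu(B_j=t)\prod_{k\ne i,j}H_k(t)$, and it comes at vanishing cost. This uses \eqref{eq:quant-conditional} with $k=\mathbf 1\{B_j=t,\ B_k\le t\}$ together with densities bounded below, so that each rival has positive mass of values above $t$ when $t<1$.

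Hence rivals place no bids in $(t-\eta,t]$ for some $\eta>0$. But then $j$ could lower its atom into that gap, keep the same winning probability up to the $(1\pm\varepsilon)$ factors, and pay less. I will treat $t=0$ separately: an atom at $0$ for two bidders is excluded because positive values beat it, and for one bidder it is harmless since $H_i(b)>0$ for $b>0$ anyway.

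\textbf{Step 2: monotonicity.} With atoms ruled out, the interim winning probability of bid $b$ for type $v$ is $P_i(b\mid v)=\E_\mu[\prod_{j\ne i}\mathbf 1\{B_j<b\}\mid V_i=v]$. I will show that $\log P_i(b\mid v)$ has increasing differences in $(b,v)$, up to the perturbation controlled by $\kappa$. The clean way is to compare with $\nu$, where $P$ does not depend on $v$ and single crossing is exact. The bound $|\partial_v\log P|\le\kappa\le1/4$ then gives strict single crossing of $(v-b)P_i(b\mid v)$. Concretely, a standard revealed-preference swap between types $v<v'$ with bids $b>b'$ yields $(v'-v)(P(b)-P(b'))\le$ an error term. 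Converting that into a contradiction is the main obstacle; I expect to need $\kappa<1$ and positive surplus, for example
\[
(v'-b)/(v-b)\ \text{versus}\ e^{\kappa(v'-v)}.
\]
Once monotonicity holds, the strategy is pure and nondecreasing a.e. Since $H_i$ is atomless, the quantile map $v_i(b)=F_i^{-1}(H_i(b))$ is well defined. It is continuous because $F_i$ has a positive density, and $V_i=v_i(B_i)$ a.s. No overbidding gives $B_i\le V_i$, hence $H_i(b)\ge F_i(b)>0$.

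\textbf{Step 3: supports.} Let $\beta=\max_i\sup K_i$. If some $b\in(0,\beta)$ lay in at most one $K_i$, the bidder owning that region could lower its bids across the gap without changing its winning probability, since no rivals bid there. The same argument shows no gap can lie in the union: bidders above the gap would shade down into it. Both contradict optimality, so the union of the $K_i$ is $[0,\beta]$ and each interior point lies in at least two supports. Finally, the best-response property for the cutoff type $v_i(b)$ at $b\in K_i$ follows by taking limits of optimal types $v\to v_i(b)$. Continuity of the payoff in $(v,b)$, which holds because there are no atoms and $r$ is $C^1$, carries optimality over to the limit, including at support endpoints.
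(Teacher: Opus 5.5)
Your proposal has genuine gaps. The main one is the monotonicity step, which you yourself flag as the obstacle. Per-bid bounds $|\partial_v\log P_i(b\mid v)|\le\kappa$ cannot give single crossing. For bids $a>b$, the derivative $\frac{d}{dv}\log\frac{(v-a)P_i(a\mid v)}{(v-b)P_i(b\mid v)}$ equals $\frac{a-b}{(v-a)(v-b)}$ plus a term that these bounds control only by $2\kappa$. The margin therefore vanishes as $a\downarrow b$, while the error does not. Under $\nu$ the cross differences of $\log P$ are zero, not strictly positive; all the strictness comes from $\log(v-b)$. Comparing $(v'-b)/(v-b)$ with $e^{\kappa(v'-v)}$ only shows that utility at a fixed bid rises with type, not that the preference between two bids flips at most once.

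The missing idea is to split $p_a=p_b+p_D$, with $p_D(v)=\Prb_\mu(b<\max_{j\ne i}B_j\le a\mid V_i=v)$. This is itself a conditional expectation of a nonnegative rival-event indicator, so \eqref{eq:quant-conditional} bounds its log-derivative by $\kappa$ as well. Bid $a$ is weakly preferred iff $(v-a)p_D\ge(a-b)p_b$, and
\[
\frac{d}{dv}\log\frac{(v-a)p_D(v)}{(a-b)p_b(v)}\ge\frac{1}{v-a}-2\kappa\ge\frac12,
\]
using $v-a\le1$ and $\kappa\le\frac14$ (if $p_D=0$, the higher bid is simply dominated). This gives single crossing uniformly in the bid gap. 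Ordered optimal-bid sets then make the strategy pure off a countable, hence null, set of types.

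You also never establish the bound the paper derives first. Bidding $v/2$ wins whenever all rival values lie below $v/2$, so $U_i(v)\ge(1-\varepsilon)\ell^{n-1}v^n/2^n>0$. Step~1 needs this. Without it, rival types with values just above $t$ may bid arbitrarily close below $t$, their gain from jumping the atom (jump times $v-t$) tends to zero, and no gap $(t-\delta,t]$ follows. With it, $v-b\ge U_i(v)\ge(1-\varepsilon)\ell^{n-1}(t/4)^n$ for such bids.

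The same bound is what rules out zero atoms, which you call harmless for a single bidder. They are not harmless: the lemma asserts atomless bids, and a zero atom makes an interval of types bid $0$, breaking $V_i=v_i(B_i)$. They are excluded because if any bidder lacks a zero atom, every other bidder's zero bid wins with probability zero, contradicting $U_i(v)>0$.

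Two smaller points. When several bidders share an atom at $t$, the tie rule may favor one of them, so argue with the summed deviations over the top-tie event. And once rivals vacate $(t-\delta,t]$, lowering the atom leaves the winning probability exactly unchanged (the winning events differ by a null set), not merely unchanged up to $(1\pm\varepsilon)$ factors; that weaker statement would not yield a contradiction.

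Your Step~3 and the limiting argument for cutoff types match the paper.
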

Individual supports may have gaps or end below $\beta$. Accordingly, we use a bidder's first-order condition only on its support in $K_i$.

\begin{proof}
\emph{Positive utility and atoms.}
A type $v>0$ can bid $v/2$ and win whenever all rival values are below
$v/2$. No overbidding and \eqref{eq:quant-conditional} imply
\begin{equation}
 U_i(v)\ge(1-\varepsilon)\ell^{n-1}v^n/2^n>0.
 \label{eq:quant-positive-utility}
\end{equation}
Consequently, every sampled equilibrium bid of a positive type has positive surplus. Suppose two or more bidders have atoms at the same positive bid $t$. The event that both bid $t$ and all other values are below $t$ has positive probability. For each bidder with such an atom, consider
the unilateral deviation that raises $t$ to $t+\eta$ whenever $v>t+\eta$. As $\eta\downarrow0$, the sum of these deviations' gains has a strictly positive limit on this top-tie event and a nonnegative limit elsewhere. Bounded convergence contradicts the equilibrium inequalities.

If every bidder has an atom at zero, the same argument applies to the all-zero tie: all values are positive almost surely, and raising a zero bid breaks the tie at a vanishing cost. If some bidder has no zero atom, any other bidder's zero bid wins with probability zero, contrary
to \eqref{eq:quant-positive-utility}. Hence there are no atoms at zero.

Suppose only bidder $i$ has an atom at a positive bid $t$. Put
\[
  u_0=(1-\varepsilon)\ell^{n-1}(t/4)^n,
  \qquad
  p_0=(1-\varepsilon)\Pr(B_i=t)(\ell t/2)^{n-2}>0.
\]
Choose $0<\delta<t/2$ and $\eta>0$ so that $\delta+\eta<p_0u_0/4$. Any rival equilibrium bid in $(t-\delta,t)$ has value at least $t/2$ and surplus at least $u_0$, by the half-value deviation. Raising it to $t+\eta$ is therefore feasible. Conditional on the rival's own value, the event that $i$ bids $t$ and all remaining bidders have values below $t/2$ has probability at least $p_0$. The deviation gains at least $u_0/2$ on this event and loses at most $\delta+\eta$ elsewhere. Its conditional expected gain is strictly positive. Hence no rival places positive mass in $(t-\delta,t)$.

By the preceding argument, no rival can have an atom at $t$. The profitable upward deviations show that rivals also place no mass immediately below $t$. Bidder $i$ can therefore lower its bid at the atom without changing its winning probability. This probability is positive, so the lower payment gives a strict gain, a contradiction.

\emph{Ordering best responses.}
Once atoms have been excluded, conditional winning probabilities
$p_b(v)$ are continuous in $b$ and $v$. Fix $a>b>0$ and let
$p_D(v)=p_a(v)-p_b(v)$. If the higher bid can be optimal then $p_D>0$; otherwise it only increases the payment. Both $p_b$ and a positive $p_D$ are expectations of nonnegative rival-event indicators, so their
logarithmic derivatives are bounded in absolute value by $\kappa$.
For $v>a$,
\begin{equation}
 \frac{d}{dv}\log
 \frac{(v-a)p_D(v)}{(a-b)p_b(v)}
 \ge\frac1{v-a}-2\kappa\ge\frac12.
 \label{eq:regular_single_crossing}
\end{equation}
The higher bid is preferred precisely when this ratio is at least one. Thus, if a lower type weakly prefers $a$ to $b$, every higher type strictly prefers $a$. Bid zero cannot be optimal for a positive type.
It follows that the sets of optimal bids are ordered across types. For each type with more than one optimal bid, choose two such bids and the open interval between them. The ordering of optimal bids makes these intervals disjoint across types. Each contains a rational number, so there are at most countably many such types. The value distributions have
densities, and hence the equilibrium is pure and nondecreasing outside a null set of values.

\emph{Cutoffs and supports.}
Monotone quantile coupling gives $V_i=v_i(B_i)$ almost surely.
Continuity of $F_i^{-1}$ and $H_i$ gives continuity of $v_i$; no
overbidding gives $H_i(b)\ge F_i(b)$. At a support point, approximate $(v_i(b),b)$ by value--bid pairs from the equilibrium. Conditional payoff continuity passes their optimality inequalities to this pair. In particular the cutoff type is strictly above a positive support bid,
by \eqref{eq:quant-positive-utility}.

There can be no open gap in the union of bid supports: a bidder at its right endpoint could lower its bid into the gap and save money without changing its winning probability. The union begins at zero because $H_i(b)\ge F_i(b)>0$ for every $b>0$. Nor can a positive support point belong to just one bidder's support. All of that bidder's rivals would then have constant bid CDFs in a neighborhood, again permitting a profitable small reduction. This proves the remaining assertions.
\end{proof}

The next lemma bounds value cutoffs and surplus uniformly over equilibria. 

\begin{lemma}
\label{lem:quant-cutoffs}
There are constants $Q\ge4$ and $\sigma>0$, depending only on
$n,\ell,u$, such that
\begin{equation}
  (1+\sigma)b\le v_i(b)\le\min\{1,Qb\}
  \quad(0<b\le\beta),
  \qquad \beta\le\frac{1}{1+\sigma}.
 \label{eq:regular_cutoff_bounds}
\end{equation}
In particular, $H_i(b)\le uQb$.
\end{lemma}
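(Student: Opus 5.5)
The argument has two halves. The lower bound $(1+\sigma)b\le v_i(b)$ will follow from the upper bound $v_j(b)\le Qb$ on the rivals' cutoffs, so the plan proves the upper bound first and then derives the rest. Throughout I use Lemma~\ref{lem:quant-structure}: strategies are pure and nondecreasing, bid distributions are atomless, and $V_i=v_i(B_i)$. Consequently $H_j(b)=F_j(v_j(b))\le u\,v_j(b)$, since $F_j(0)=0$ and $f_j\le u$. I also use the fact from that lemma that each $b\in K_i$ is a best response for the cutoff type $v_i(b)$, and the conditional-probability sandwich \eqref{eq:quant-conditional}.

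\textbf{Step 1 (upper bound $v_i(b)\le Qb$).} Fix $b\in K_i$ and write $v=v_i(b)$; suppose $v>Qb$ with $Q\ge4$. The cutoff type's payoff at $b$ satisfies
\[
(v-b)p_b(v)\le v(1+\varepsilon)\prod_{j\ne i}H_j(b).
\]
Deviating to the bid $v/2$ wins whenever every rival value is below $v/2$, because there is no overbidding. By \eqref{eq:quant-positive-utility} this deviation is worth at least $(1-\varepsilon)\ell^{n-1}v^n/2^n$. Optimality of $b$ therefore forces
\[
\prod_{j\ne i}H_j(b)\ge c_0v^{n-1}\ge c_0(Qb)^{n-1}.
\]
To get a contradiction I need $\prod_{j\ne i}H_j(b)\lesssim b^{n-1}$, which is circular. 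I will break the circularity in one of two ways. The first is a continuity (bootstrap) argument in $b$: let $b_0$ be the infimum of bids at which some cutoff exceeds $Qb$, and use the bound for all rivals below $b_0$ together with continuity of $v_j$. The second, which I expect to be the cleaner route, is to use the first-order condition on supports. Along $K_k$ it gives
\[
\frac{d}{db}\log G_k(b)\le\frac{1+O(\kappa)}{v_k(b)-b}\le\frac{5/4}{v_k(b)-b},
\]
using $\kappa\le1/4$. A bidder $k$ with a large cutoff $v_k(b)\ge Qb$ would then make $G_k$ nearly flat on $[b,v/2]$. Because every $b$ lies in at least two supports, this flatness can be propagated to the rivals' CDFs, which contradicts the lower bound on $\prod_{j\ne i}H_j(b)$ once $Q$ is large in terms of $n,\ell,u$. \textbf{This step is the main obstacle}: one must choose which bidder's condition to use at each level, given that supports may have gaps.

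\textbf{Step 2 (lower bound $(1+\sigma)b\le v_i(b)$).} With Step 1 in hand, $H_j(b)\le uQb$ for every $j$. Hence the cutoff type's winning probability is at most
\[
(1+\varepsilon)(uQb)^{n-1}.
\]
Its payoff is also at least $(1-\varepsilon)\ell^{n-1}v^n/2^n$ with $v\ge b$. Dividing the payoff lower bound by the winning-probability upper bound gives
\[
v-b\ge\frac{(1-\varepsilon)\ell^{n-1}b^n}{2^n(1+\varepsilon)(uQb)^{n-1}}=:\sigma b,
\]
where $\sigma$ depends only on $n,\ell,u$, using $\varepsilon\le1/5$. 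For points of $(0,\beta]$ outside $K_i$, the same bounds follow from continuity and monotonicity of $v_i$ between the support endpoints. If $\sigma$ ends up larger than $Q-1$, I will enlarge $Q$.

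\textbf{Step 3 (remaining claims).} At the top of the support, $H_i(\beta)=1$ for the bidder attaining $\beta$, so $v_i(\beta)=F_i^{-1}(1)=1$. Step 2 then gives $(1+\sigma)\beta\le1$. The bound $v_i(b)\le1$ is immediate because values lie in $[0,1]$. Finally, $H_i(b)=F_i(v_i(b))\le u\,v_i(b)\le uQb$.
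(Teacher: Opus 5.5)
Your Steps~2 and~3 match the paper's argument. Step~1, which you rightly flag as the obstacle, has a genuine gap, and neither of your proposed repairs closes it.

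\textbf{Why the bootstrap fails.} The half-value comparison does not depend on $Q$. Take a first level $b_0$ where $v_i(b_0)=Qb_0$ and every rival has $v_j(b_0)\le Qb_0$. You get only $\ell^{n-1}(Qb_0)^{n-1}/(\rho2^n)\le\prod_{j\ne i}H_j(b_0)\le(uQb_0)^{n-1}$, i.e.\ $1\le\rho2^n(u/\ell)^{n-1}$, which always holds. There is also no base case, since $b_0$ may be $0$. More basically, the half-value deviation cannot exclude a profile in which \emph{every} bidder shades heavily at level $b$. In that case all rival CDFs are large, and $\prod_{j\ne i}H_j(b)\gtrsim v_i(b)^{n-1}$ is self-consistent.

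\textbf{Why the FOC route fails as described.} Flatness of the CDFs on $[b,v/2]$ gives \emph{lower} bounds $H_j(b)\gtrsim H_j(v/2)\ge F_j(v/2)$. That is the same direction as the half-value inequality, so nothing is contradicted. It is also circular in the paper's order: the Lipschitz regularity and first-order conditions of Lemma~\ref{lem:quant-densities} are derived from the surplus bound $\sigma$ of this very lemma.

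\textbf{The missing idea.} You need an \emph{upper} bound, independent of $Q$, on at least one rival's cutoff, and this requires competition between bidders. The paper first bounds the smallest cutoff using a joint deviation in the spirit of Lemma~\ref{lem:winning_bid_floor}.
\begin{itemize}
\item Fix $a$, and put $y_j=v_j(a)$, $y_{\min}=\min_j y_j$, $d=y_{\min}/2$, $R_0=u/\ell$.
\item The event $\{\max_j B_j\le a\}$ is the value rectangle $\{V_j\le y_j\ \forall j\}$. Conditional on it, the density bounds and $\rho$ give probability at least $1/(6R_0^2)$ that two fixed bidders both have values at least $d$.
\item If $d>2a$, let each bidder with $V_j\ge d$ and $B_j\le a$ raise its bid to $2a$. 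On the rectangle, the summed unilateral gains are at least $(d-2a)\mathbf 1_{\{N\ge2\}}-2a$: each of at least two deviators now wins, whereas only one bidder won before. Off the rectangle the gains are nonnegative.
\item The equilibrium inequalities then force $y_{\min}\le 28R_0^2a$.
\end{itemize}
Only now does your half-value comparison close. For the bidder with the largest cutoff $t=v_i(a)$, one rival factor is $H_j(a)\le u\,y_{\min}$ and the others are at most $ut$. Hence $\ell^{n-1}t^{n-1}/(\rho2^n)\le u^{n-1}y_{\min}t^{n-2}$, so $t\le\rho2^nR_0^{n-1}y_{\min}\le Qa$ with $Q=2^{n+6}R_0^{n+1}$. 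With this bound in hand, your Steps~2 and~3 go through essentially as written.
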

\begin{proof}
Fix $0<a\le1/4$ and write $y_i=v_i(a)$,
$y_{\min}=\min_i y_i$, and $y_{\max}=\max_i y_i$. The event $\{\max_i B_i\le a\}$ is, up to null sets, the value rectangle $E\equiv \{V_i\le y_i\ \forall i\}$. Write $R_0=u/\ell$. Under the product distribution conditioned on that rectangle, each of two fixed bidders has probability at least $\ell/(2u)=1/(2R_0)$ of having value at least $y_{\min}/2$. Conditional on the rectangle under $\mu$, the probability of both such values is therefore at least $1/(4\rho R_0^2)\ge1/(6R_0^2)$.
Put $d=y_{\min}/2$. If $d>2a$, let every bidder with value at least $d$ and bid at most $a$ raise that bid to $2a$. These are separate feasible unilateral deviations. Let $E=\{\max_i B_i\le a\}$ and let $N$ count bidders with $V_i\ge d$ and $B_i\le a$. For any event $A$ inside the rectangle $E$, the density bounds give
$\Pr_\mu(A\mid E)\ge\rho^{-1}\Pr_\nu(A\mid E)$. Write $\Delta_i$ for bidder $i$'s gain from its unilateral deviation. On $E$,
\[
  \sum_i\Delta_i
  \ge (d-2a)\mathbf 1_{\{N\ge2\}}-2a.
\]
Indeed, if $N\ge2$, each selected bidder wins when it deviates, whereas the original allocation has only one winner. Outside $E$, any changed bidder originally lost, so its gain is nonnegative.
The equilibrium inequalities consequently give
\[
 (d-2a)\Pr(\text{at least two such values},\ \max_i B_i\le a)
 \le2a\Pr(\max_iB_i\le a).
\]
Using the conditional probability just bounded yields
$y_{\min}\le4a(1+6R_0^2)\le28R_0^2a$. When $d\le2a$, this bound is immediate as well.

Choose $i$ with $y_i=y_{\max}=:t$. Choose equilibrium types v<t tending to $t$; their bids are at most $a$. Their equilibrium utility is at most $v(1+\varepsilon)\prod_{j\ne i}H_j(a)$; their deviation to $v/2$ gives \eqref{eq:quant-positive-utility}. Letting $v\uparrow t$,
\[
 \prod_{j\ne i}H_j(a)
 \ge\frac{\ell^{n-1}t^{n-1}}{\rho2^n}.
\]
At least one of the other cutoffs equals $y_{\min}$, unless all are equal, in which case any other one does. Therefore
\[
 \prod_{j\ne i}H_j(a)
 \le u^{n-1}y_{\min}t^{n-2},
 \qquad
  t\le\rho 2^nR_0^{n-1}y_{\min}
    \le42\cdot2^nR_0^{n+1}a.
\]
For $a>1/4$, the bound $t\le1$ gives $t\le4a$.
We may therefore take
\[
  Q:=2^{n+6}R_0^{n+1},
\]
which proves $v_i(b)\le Qb$ for every positive bid.

For a used positive bid $b$ of type $v$, compare
\eqref{eq:quant-positive-utility} with the upper bound
$(v-b)(1+\varepsilon)(uQb)^{n-1}$ on its utility. Since
$v\ge b$ and $\rho\le2$,
\[
  v-b\ge
  \frac{v^n}{\rho 2^n(R_0Q)^{n-1}b^{n-1}}
  \ge\frac{b}{2^{n+1}(R_0Q)^{n-1}}.
\]
Set
\[
  \sigma:=2^{-(n+1)}(R_0Q)^{-(n-1)}.
\]
Continuity extends this bound to every point of the bidder's support.
Taking support bids tending to $\beta$ gives $\beta\le1/(1+\sigma)$.
Inside a gap of bidder $i$'s support, $v_i$ is constant. If the gap has a
right endpoint in its support, the surplus bound at that endpoint proves
the bound throughout the gap. If its support has already ended, $v_i=1$
and the bound follows from $b\le\beta\le1/(1+\sigma)$. This establishes
\eqref{eq:regular_cutoff_bounds} everywhere on the stated interval.
\end{proof}

\begin{lemma}[Bid densities and first-order conditions]
\label{lem:quant-densities}
The functions $H_i$ and $v_i$ are locally Lipschitz on $(0,\beta]$. At their common differentiability points put
\[
 h_i=H_i'/H_i,\quad S=\sum_i h_i,\quad g_i=S-h_i,
 \qquad \omega_i=\frac1{v_i-b},\quad A(b)=\{i:b\in K_i\}.
\]
For almost every $b\in(0,\beta)$, $S>0$, $h_i=0$ outside $A(b)$, and
\begin{equation}
 \rho^{-1}g_i\le \omega_i\le\rho g_i\quad(i\in A(b)),
 \qquad S\le\frac{2\rho}{\sigma b}.
 \label{eq:regular_active_foc}
\end{equation}
At least two $h_i$ are positive almost everywhere.
\end{lemma}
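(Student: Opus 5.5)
The plan has three parts: Lipschitz regularity of $H_i$, first-order conditions on the active set, and a separate bound on $S$.

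\textbf{Lipschitz regularity.} I would start with $H_i$. By Lemma~\ref{lem:quant-structure}, bids are atomless, and $V_i=v_i(B_i)$ holds with $v_i$ continuous and nondecreasing. So on any interval $[b,b']\subset(0,\beta]$,
\[
H_i(b')-H_i(b)=F_i(v_i(b'))-F_i(v_i(b))\le u\,(v_i(b')-v_i(b)).
\]
It therefore suffices to bound increments of the cutoffs. I would get this from the optimality of neighbouring support bids. Take $b<b'$ in $K_i$ with cutoff types $v=v_i(b)$ and $v'=v_i(b')$. Write $p_t(w)$ for the $\mu$-conditional probability of winning with bid $t$ when the bidder's value is $w$. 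Each of $v,v'$ prefers its own bid to the other. Adding the two inequalities gives
\[
(v'-v)\bigl(p_{b'}(v')-p_b(v')\bigr)\ \gtrsim\ (b'-b)\,p_b(\cdot),
\]
with correction terms of relative size $\kappa|v'-v|$, which come from the log-Lipschitz bound on conditional probabilities after \eqref{eq:quant-conditional}. This is a monotone-comparative-statics step, and alone it only bounds $b'-b$ in terms of $v'-v$, which is the wrong direction.

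For the reverse direction I would use the surplus lower bound $v_i(b)-b\ge\sigma b$ from Lemma~\ref{lem:quant-cutoffs}. Type $v'$'s preference for $b'$ over $b$ reads
\[
(v'-b')\,p_{b'}\ \ge\ (v'-b)\,p_b .
\]
The ratio $p_{b'}/p_b$ is controlled by rival CDF ratios $\prod_{j\ne i}H_j(b')/H_j(b)$, up to factors $\rho$. So the increments of the rivals' $H_j$ are controlled jointly. I would close this by working with $\log\prod_j H_j$. Summing the "lower your bid" deviations over the active bidders should give an inequality of the form
\[
\log\frac{\prod_j H_j(b')}{\prod_j H_j(b)}\ \le\ C\,\frac{b'-b}{\sigma b}.
\]
Here the factor $(v-b)^{-1}\le(\sigma b)^{-1}$ is the marginal loss from shading. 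This yields local Lipschitz continuity of every $H_j$, because each $H_j\le1$ and $H_j\ge F_j(b)>0$. Lipschitz continuity of $v_i=F_i^{-1}\circ H_i$ then follows from $f_i\ge\ell$.

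\textbf{First-order conditions.} At common differentiability points with $b$ in the interior of $K_i$, optimality of $b$ for type $v_i(b)$ gives a stationarity condition. Under independence this is exactly $(v-b)\sum_{j\ne i}h_j=1$, i.e.\ $g_i=\omega_i$. Under $\mu$ the winning probability is an expectation under a density ratio lying in $[1-\varepsilon,1+\varepsilon]$. The derivative in $b$ of the winning probability is therefore sandwiched between $(1\pm\varepsilon)$ multiples of the product derivative, and the level is sandwiched similarly. This gives $\rho^{-1}g_i\le\omega_i\le\rho g_i$.

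The points of $K_i$ that are not interior but are density points form a null set for $dH_i$, or they handle themselves: for $b\notin K_i$ we have $h_i=0$ a.e. At boundary points of $K_i$ with positive Lebesgue measure, I would use one-sided derivatives together with Lebesgue density.

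\textbf{Bounding $S$ and the two positive $h_i$.} For $S$, pick $i\in A(b)$. Then
\[
g_i\le\rho\,\omega_i\le\frac{\rho}{\sigma b}.
\]
By Lemma~\ref{lem:quant-structure} at least two bidders are active, so taking two different active indices $i\neq j$ gives $S\le g_i+g_j\le 2\rho/(\sigma b)$. The same two-active fact gives $g_i>0$, hence $S>0$.

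For the claim that at least two $h_i$ are positive almost everywhere, suppose only one $h_j$ is positive on a set of positive measure. Then for $i=j$ active we would have $g_j=0$ there, contradicting $\omega_j\le\rho g_j$ with $\omega_j<\infty$. If $j$ is inactive there, it contradicts $h_j=0$.

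\textbf{Main obstacle.} The hard step is the first one: establishing Lipschitz continuity from optimality alone, before any density exists, and in particular handling gaps in the individual supports. I would first try the summed-deviation argument on the product $\prod_j H_j$ and only afterwards split it into the individual factors.
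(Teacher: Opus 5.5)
\textbf{The gap is in the Lipschitz step, and it goes beyond the support-gap issue you flag.}

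The inequality you use for the ``reverse direction,'' $(v'-b')p_{b'}\ge(v'-b)p_b$, says it is unprofitable to \emph{lower} a bid from $b'$ to $b$. The ``lower your bid'' deviations you plan to sum say the same thing. They give $p_{b'}/p_b\ge(v'-b)/(v'-b')>1$, which is a lower bound on how fast rival bids accumulate. They never give the upper bound on increments that Lipschitz continuity needs.

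The comparison with independence also fails as you describe it. Comparing $p$ with $G_i=\prod_{j\ne i}H_j$ at the level of ratios ``up to factors $\rho$'' loses too much. Even with the correct deviation it yields only $G_i(t)/G_i(b)\le\rho\,(v-b)/(v-t)$. The right side tends to $\rho>1$ as $t\downarrow b$, so no Lipschitz bound can come out of it.

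\textbf{What the paper does instead.} It uses the \emph{upward} deviation of the cutoff type at the left point. Take $b\in K_i$ with $b\ge a$, and $v=v_i(b)$, so $v-b\ge\sigma a$ by \Cref{lem:quant-cutoffs}. For $t\in[b,b+\sigma a/2]$, optimality gives $(v-t)\bigl(P_i(t,v)-P_i(b,v)\bigr)\le(t-b)P_i(b,v)$.

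It combines this with an \emph{additive} comparison. Applying \eqref{eq:quant-conditional} to the interval event $\{b<\max_{j\ne i}B_j\le t\}$ gives $P_i(t,v)-P_i(b,v)\ge(1-\varepsilon)\bigl(G_i(t)-G_i(b)\bigr)$. Together these give $G_i(t)-G_i(b)\le2\rho(t-b)/(\sigma a)$. Each $H_j$, $j\ne i$, is then peeled off using $H_k(b)\ge F_k(b)\ge\ell a$ for the other factors.

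Support gaps are handled by a fact from \Cref{lem:quant-structure}: every point of $(0,\beta)$ lies in at least two supports. To bound $H_j(y)-H_j(x)$, the paper moves to the first point $b$ of $K_j$ in $[x,y]$, where $H_j(b)=H_j(x)$. It then applies the estimate for another bidder whose support contains $b$. Your summation over $A(b)$, once corrected in these two ways, would do the same job, since each $H_k$ is a factor of $G_i$ for some active $i\ne k$.

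\textbf{The rest of your proposal is sound.}
\begin{itemize}
\item The first-order sandwich is the paper's.
\item Your bound $S\le g_i+g_j\le2\rho/(\sigma b)$ for two active indices is valid because every $h_k\ge0$. It is a slightly simpler alternative to the paper's identity $\sum_{i\in A}g_i=(m-1)S$.
\item You need not restrict to interior points of $K_i$. Every $b\in K_i$ is optimal for $v_i(b)$ with $0<b<v_i(b)$, so the derivative of $\log P_i(\cdot,v_i(b))$ equals $\omega_i(b)$ wherever it exists.
\end{itemize}
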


\begin{proof}
Write
\[
 P_i(t,v)=\Pr_\mu(\max_{j\ne i}B_j\le t\mid V_i=v),
 \qquad G_i(t)=\prod_{j\ne i}H_j(t).
\]
Fix $0<a<\beta$ and a support point $b\ge a$ for bidder $i$. Its cutoff value $v=v_i(b)$ has surplus at least $\sigma a$. Optimality against $t\in[b,b+\sigma a/2]$ gives
\[
 P_i(t,v)-P_i(b,v)
 \le\frac{t-b}{v-t}P_i(b,v)
 \le\frac{2(1+\varepsilon)}{\sigma a}(t-b).
\]
The nonnegative interval-event comparison in
\eqref{eq:quant-conditional} implies
$G_i(t)-G_i(b)\le2\rho(t-b)/(\sigma a)$. For $j\ne i$, positivity of the other factors gives
\[
 H_j(t)-H_j(b)
 \le\frac{2\rho}{\sigma a(\ell a)^{n-2}}(t-b).
\]
Put $M_a:=2\rho/[\sigma a(\ell a)^{n-2}]$ and
$\Delta_a:=\sigma a/2$. Fix $a\le x<y\le\beta$ with $y-x\le\Delta_a$. If $K_j\cap[x,y)$ is empty, continuity gives $H_j(y)=H_j(x)$. Otherwise, let $b$ be the first point of
$K_j\cap[x,y]$. Then $b<y$ and $H_j(b)=H_j(x)$. Another bidder's
support contains $b$, so the preceding increment estimate gives
\[
 H_j(y)-H_j(x)=H_j(y)-H_j(b)
 \le M_a(y-b)\le M_a(y-x).
\]
Partitioning any longer interval into pieces of length at most $\Delta_a$ proves that $H_j$ is Lipschitz on $[a,\beta]$. In particular, it has no singular continuous part there. Since
$F_i^{-1}$ is $1/\ell$-Lipschitz, each $v_i$ is locally Lipschitz on $(0,\beta]$.

We have
\[
  P_i(t,v)
  =\int_{\{z_j\le v_j(t),\ j\ne i\}}
       r(v,z_{-i})\prod_{j\ne i}f_j(z_j)\,dz_{-i}.
\]
At a common differentiability point of the cutoff functions, this
expression is differentiable in $t$ for every fixed $v$. If
$b\in K_i\cap(0,\beta)$, optimality of $b$ for type $v_i(b)$ gives
\[
  \left.\partial_t\log P_i(t,v)\right|_{t=b,\ v=v_i(b)}
  =\frac{1}{v_i(b)-b}=\omega_i(b).
\]
The derivative holds the own value fixed. Boundary differentiation
and the bounds on $r$ give
\[
  (1-\varepsilon)G_i'(b)
  \le \partial_tP_i(b,v_i(b))
  \le (1+\varepsilon)G_i'(b),
\]
as well as
$(1-\varepsilon)G_i(b)\le P_i(b,v_i(b))
\le(1+\varepsilon)G_i(b)$.
If $G_i'(b)=0$, the first-order condition is impossible.
Otherwise, dividing these inequalities proves the first bound in
\eqref{eq:regular_active_foc}. Outside $K_i$, $h_i=0$. Since $|A(b)|=m\ge2$, $\sum_{i\in A}g_i=(m-1)S$. Summing the lower bounds for $\omega_i$ and using $\omega_i\le1/(\sigma b)$ gives $S\le2\rho/(\sigma b)$. If just one $h_i$ were positive, that bidder would have $g_i=0$; if all were zero, every support bidder would have $g_i=0$. Both possibilities contradict the same first-order condition. All statements hold off a common null set.
\end{proof}

The next lemma uses these facts to write an independent comparison equilibrium. 

\begin{lemma}
\label{lem:quant-comparison}
There is $\varepsilon_0=\varepsilon_0(n,\ell,u)>0$ such that, for $\varepsilon\le\varepsilon_0$, there are atomless bid distributions $H_i^*$ with the following properties. Independent draws
$B_i^*\sim H_i^*$, assigned values $V_i^*=v_i(B_i^*)$, induce an exact no-overbidding Bayes--Nash equilibrium with independent values. Moreover,
\begin{equation}
  \sum_i\|H_i-H_i^*\|_{\mathrm{TV}}
  \le\frac{64u n(n+1)Q^2}{\sigma^2}\,\varepsilon.
 \label{eq:regular_bid_tv}
\end{equation}
Here total variation is between probability distributions. The independent value marginals in this comparison auction may differ
from the original $F_i$.
\end{lemma}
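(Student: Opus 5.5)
The plan is to construct the comparison equilibrium directly in bid space, by solving the independent-values first-order system with the cutoff functions $v_i$ held fixed. Under independence with bid CDFs $H_i^*$ and assigned values $V_i^*=v_i(B_i^*)$, bidder $i$'s winning probability at bid $b$ is $G_i^*(b)=\prod_{j\ne i}H_j^*(b)$. Optimality of $b$ for the type $v_i(b)$ then requires
\[
\frac{d}{db}\log G_i^*(b)=\omega_i(b)
\]
on bidder $i$'s support. I therefore define the target log-derivatives $h_i^*$ by requiring $\sum_{j\ne i}h_j^*=\omega_i$ for $i\in A(b)$ and $h_i^*=0$ off $A(b)$. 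For $|A(b)|=m\ge2$, the solution of this linear system is $h_i^*=\frac{1}{m-1}\sum_{j\in A}\omega_j-\omega_i$. Set $H_i^*(b):=\exp\bigl(-\int_b^{\beta}h_i^*\bigr)$, so that every $H_i^*$ reaches $1$ at $\beta$, the same top bid as the original equilibrium.

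The first step is to check that $h_i^*\ge0$ and that it is close to $h_i$. By Lemma~\ref{lem:quant-densities}, $\omega_i\in[\rho^{-1}g_i,\rho g_i]$, so $\omega_i=g_i(1+O(\varepsilon))$ with $g_i\le S\le 2\rho/(\sigma b)$. The exact system $\sum_{j\ne i}h_j=g_i$ is solved by the true $h_i$. Inverting the (explicit, well-conditioned) matrix $J-I$ on $A$ therefore gives
\[
|h_i^*-h_i|\lesssim n\,\varepsilon\,S\lesssim \frac{n\varepsilon}{\sigma b}.
\]
Nonnegativity of $h_i^*$ is the delicate point. The true $h_i$ may be tiny for an active bidder. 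In that case I will use the single-crossing/active-set structure from Lemma~\ref{lem:quant-structure} to let the active set adjust: bidders whose $h_i^*$ would go negative are dropped, and the system is re-solved on the smaller set. The ratio bounds show this stays consistent, and that at least two $h_i^*$ remain positive. This joint determination of supports is the step I expect to be the main obstacle, and I would first try to handle it by a continuity/monotone selection of $A^*(b)$ along decreasing $b$, possibly requiring $\varepsilon\le\varepsilon_0$.

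Next I convert the log-derivative error into total variation. The discrepancy is
\[
|\log H_i^*(b)-\log H_i(b)|\le\int_b^\beta|h_i^*-h_i|,
\]
and this is only logarithmically large near $0$. However, $H_i(b)\le uQb$ by Lemma~\ref{lem:quant-cutoffs}. I therefore bound $\|H_i-H_i^*\|_{\mathrm{TV}}$ by splitting at a small $b_0$: the mass below $b_0$ is at most $uQb_0$ for both distributions, and above $b_0$ the error is $\int |h_i^*-h_i|H_i\lesssim uQ\varepsilon/\sigma$. Together these yield the linear bound \eqref{eq:regular_bid_tv}, with constants tracking $Q^2/\sigma^2$ after the $O(1/b)$ bound on $S$ is multiplied against $H_i\le uQb$.

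Finally I verify global optimality. The independent prior is defined by assigning $V_i^*=v_i(B_i^*)$, so each type's candidate bid satisfies the first-order condition, and the assignment is well defined and no-overbidding since $v_i(b)\ge(1+\sigma)b$. For global optimality, under independence $\log G_i^*$ is built so that the objective $(v-t)G_i^*(t)$ is unimodal along $t$. Along a gap in $K_i$, $G_i^*$ grows at rate $\sum_{j\ne i}h_j^*\le\omega$ of the cutoff type, because $v_i$ is constant there and $\sum_{j\ne i} h^*_j=\frac{1}{m-1}\sum_{A}\omega_j$ stays below the gap value by the approximate FOC slack. The standard concavity-type argument for first-price auctions then shows that no deviation improves. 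Atomlessness follows from the absolute continuity of $H_i^*$.
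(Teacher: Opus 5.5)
Your skeleton matches the paper's: hold the cutoffs $v_i$ fixed, prescribe new logarithmic derivatives $h_i^*$, set $H_i^*(b)=\exp(-\int_b^\beta h_i^*)$, compare $h_i^*$ with $h_i$, and absorb the $1/b$ singularity with $H_i\le uQb$. But the step that should make the comparison an \emph{exact} equilibrium fails as you set it up.

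You force $h_i^*=0$ off the inherited supports $A(b)$, and only ever shrink $A(b)$ when some $h_i^*$ would go negative. You then assert that on a gap $(l,r)$ of $K_i$, where $v_i\equiv w$, the competing rate $g_i^*=\sum_{j\ne i}h_j^*$ stays below $1/(w-t)$ ``by the approximate FOC slack.'' Exact optimality needs $g_i^*(t)\le 1/(w-t)$ with no error. The dependent equilibrium only controls $\partial_t\log P_i(t,w)$, and $P_i$ differs from $G_i$ by factors $1\pm\varepsilon$ in both level and slope. What you can extract is $g_i^*(l^+)\le\rho^2/(w-l)$, so a violation by a factor up to $\rho^2=1+O(\varepsilon)$ is consistent with the original equilibrium. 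No choice of $\varepsilon_0$ turns approximate slack into exact slack.

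Take the simplest case: only bidder $i$ leaves an active set $A$ with $|A|=m\ge3$ at $l$. Your formulas give $(m-1)(S_A-\omega_i)=(m-2)(S_{A\setminus\{i\}}-\omega_i)$ with $S_A=\frac{1}{m-1}\sum_{j\in A}\omega_j$. Since $\omega_i$ is continuous at $l$, one of two things happens:
\begin{itemize}
\item $h_i^*=S_A-\omega_i<0$ just left of $l$; or
\item $g_i^*(l^+)=S_{A\setminus\{i\}}(l)>\omega_i(l)$, and types near $w$ strictly gain by bidding into the gap.
\end{itemize}
Only the knife edge $S_A(l)=\omega_i(l)$ works, and supports inherited from a dependent-values equilibrium have no reason to end exactly there. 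Dropping bidders repairs the first case but never the second, because removing bidders from the system can only raise $\sum_{j\ne i}h_j^*$.

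The missing idea is to let the comparison auction choose its active set from \emph{all} $n$ bidders, so a bidder may place new mass inside a gap of its original support. The paper takes the water-filling root $S_*(b)=\min_{2\le k\le n}\frac{1}{k-1}\sum_{j\le k}\omega_{(j)}(b)$ and sets $h_i^*=[S_*-\omega_i]_+$. This gives exactly $g_i^*=\min\{S_*,\omega_i\}$ for every $i$. Put $z=b+1/S_*$, which is continuous and nondecreasing by the level equation $\sum_i[v_i-z]_+/(v_i-b)=1$. Then $b+1/g_i^*=\max\{v_i,z\}$ is nondecreasing, and the derivative $G_i^*g_i^*[v-\max\{v_i,z\}]$ of $(v-t)G_i^*(t)$ gives global optimality by single crossing. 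This replaces your unproved unimodality claim.

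The price is new density for bidders outside $A(b)$. It is controlled by the gap estimate $[v_i-z]_+\le qQb$ with $q=1-\rho^{-2}\le4\varepsilon$, which comes from optimality of the gap's left endpoint for type $w$ together with $S_*\le\rho S$. This bounds the inactive mass by $O(\varepsilon/b)$, and keeping it below $1/(2Qb)$ is what fixes $\varepsilon_0=\sigma^2/(32nQ^2)$.

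Two smaller points:
\begin{itemize}
\item Your TV step only bounds $\int|h_i^*-h_i|H_i$ and omits the $h_i^*|H_i-H_i^*|$ part of the density difference. The paper's coupling of $\max\{C_i,D_i\}$ with $\max\{C_i,D_i^*\}$ gives $\int J_i|h_i-h_i^*|$ directly.
\item Excluding an atom of $H_i^*$ at zero requires $K\varepsilon<1$, via $H_i^*(b)\le uQb(\beta/b)^{K\varepsilon}$.
\end{itemize}
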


\begin{proof}
We keep the value cutoffs $v_i$ and choose new logarithmic bid derivatives. For a bidder placing positive bid density at $b$, the independent first-order condition requires
$\sum_{j\ne i}h_j^*=1/(v_i(b)-b)=\omega_i(b)$.
The following construction satisfies these equalities while allowing a bidder to have zero density. For $b\in(0,\beta]$, order the $\omega_i(b)$ increasingly and define \begin{equation}
 S_*(b)=\min_{2\le k\le n}
          \frac{\sum_{j=1}^k \omega_{(j)}(b)}{k-1},
 \qquad h_i^*(b)=[S_*(b)-\omega_i(b)]_+,
 \qquad z(b)=b+\frac1{S_*(b)}.
 \label{eq:regular_joint_root}
\end{equation}
The functions in \eqref{eq:regular_joint_root} satisfy the following properties of the construction, which we will verify in what follows:
\begin{equation}
 \sum_i h_i^*=S_*,\qquad
 g_i^*:=\sum_{j\ne i}h_j^*=\min\{S_*,\omega_i\},\qquad
 \frac1{Qb}\le S_*\le\frac2{\sigma b}.
 \label{eq:regular_root_properties}
\end{equation}
The function $z$ is continuous and nondecreasing. Almost everywhere, $S_*\le\rho S$. On every open interval disjoint from $K_i$,
\begin{equation}
 [v_i(b)-z(b)]_+\le qQb,
 \qquad q:=1-\rho^{-2}\le4\varepsilon.
 \label{eq:regular_unused_bound}
 \end{equation}
For positive $\omega_1,\ldots,\omega_n$, put
$F(s):=\sum_i(s-\omega_i)_+-s$. At $s=\omega_{(2)}$,
$F(s)=-\omega_{(1)}<0$. Above $\omega_{(2)}$, $F$ is strictly
increasing and tends to infinity, so it has a unique root $s$ there.
For every $k\ge2$,
\[
 (k-1)s-\sum_{j=1}^k\omega_{(j)}
 =\sum_{j=1}^k(s-\omega_{(j)})-s
 \le F(s)=0.
\]
Equality holds when $k$ is the number of coordinates strictly below $s$,
which is at least two. Hence
$s=\min_{2\le k\le n}\sum_{j=1}^k\omega_{(j)}/(k-1)$, as in \eqref{eq:regular_joint_root}. Scaling the coordinates scales the root, and increasing any coordinate decreases $F$ at a fixed $s$, so the root is homogeneous and nondecreasing
in each coordinate. It follows that $\sum_i h_i^*=S^*$ and $g_i^*=\min\{S^*,\omega_i\}$. The bounds in \eqref{eq:regular_root_properties} follow from
$1/(Qb)\le\omega_i\le1/(\sigma b)$. The equation for the common value level is
\begin{equation}
 \sum_i\frac{[v_i(b)-z(b)]_+}{v_i(b)-b}=1.
 \label{eq:regular_level_equation}
\end{equation}
For fixed $z>b$, the left side increases with $b$ and with each $v_i$, and decreases with $z$ near its unique solution. Since each $v_i(b)$ is nondecreasing, $z(b)$ is nondecreasing as well. If the old solution lies below the new bid, the conclusion is immediate. Continuity follows from \eqref{eq:regular_root_properties} and the positive surplus bounds.

At a common differentiability point take $A=A(b)$, of size $m\ge2$. The minimum over all bidders is at most the ratio formed from these $m$ coordinates. Thus \eqref{eq:regular_active_foc} gives
\[
 S_*\le\frac{\sum_{i\in A}\omega_i}{m-1}\le\rho S.
\]
Now let $(l,r)$ be a component of the complement of $K_i$ in $(0,\beta)$, allowing $r=\beta$ for a terminal interval. Necessarily $l>0$: an initial gap would contradict $H_i(b)\ge F_i(b)>0$ and $H_i(0)=0$. The cutoff $v_i$ is constant there, say $w$, and $l$ is an optimal bid for type $w$. By Lemma~\ref{lem:quant-structure}, $l$ is an optimal bid for the cutoff type $w$. For $l<t<r$, optimality implies
\[
 \log P_i(t,w)-\log P_i(l,w)
 \le\log\frac{w-l}{w-t}=\int_l^t\frac{db}{w-b}.
\]
On this interval $h_i=0$, so $g_i=S$. The density comparisons in the proof of Lemma~\ref{lem:quant-densities} apply at every fixed own value, including $w$;
they do not require the bid to lie in $K_i$. Together with $S^*\le\rho S$, they give
$\partial_b\log P_i(b,w)\ge S^*(b)/\rho^2$ almost everywhere. Integrate this inequality, divide by $t-l$, and let $t\downarrow l$. By continuity,
\[
 S_*(l)\le\frac{\rho^2}{w-l},\qquad
 z(l)\ge w-(1-\rho^{-2})(w-l).
\]
Since $z$ is nondecreasing and $w\le Ql$, it follows that $[w-z(b)]_+\le q(w-l)\le qQb$. This proves the gap estimate, including a bidder's terminal interval, where $w=1$.

Next, we need to establish a uniform comparison of the logarithmic derivatives. 

\paragraph{Comparison of derivatives.}
At a common differentiability point, write $A=A(b)$ and
$\eta=\sum_{i\notin A}h_i^*$. If $h_i^*>0$, the identity
$z=b+1/S^*$ gives
$h_i^*\le(S^*)^2[v_i-z]_+$.

For $i\notin A$, the unused-interval estimate gives
\begin{equation}
  \eta\le\frac{4nqQ}{\sigma^2b}
       \le\frac{16nQ\varepsilon}{\sigma^2b}.
\label{eq:regular_inactive_mass}
\end{equation}
We choose $\epsilon_0:=\sigma^2/(32nQ^2)$, so that $\epsilon\le\epsilon_0$ ensures $\eta\le1/(2Qb)$. That is,
\[
  \varepsilon\le\varepsilon_0
     :=\frac{\sigma^2}{32nQ^2}.
\]
Let $S_A$ be the scalar root formed using just the coordinates $\omega_i$ with $i\in A$. Applying the same root to the coordinates $g_i=S-h_i$ gives $S$, since $\sum_{i\in A}h_i=S$ and at least two of those $h_i$ are positive, so $S$ is the root above the second-smallest $g_i$. Homogeneity, monotonicity, and \eqref{eq:regular_active_foc} imply $S/\rho\le S_A\le\rho S$, while adding the other coordinates gives $S_*\le S_A$.

At $t=S_*$, the function $F_A(t)=\sum_{i\in A}(t-\omega_i)_+-t$ equals $-\eta$. There must be at least two $\omega_i$, $i\in A$, strictly below $S_*$. Otherwise $F_A(S_*)$ would be $-S_*$ or $-\omega_i$, at most $-1/(Qb)$, contradicting \eqref{eq:regular_inactive_mass}. Consequently $F_A$ has slope at least one for $t\ge S_*$. Since $F_A(S_A)=0$, we obtain
\[
 0\le S_A-S_*\le\eta,\qquad
 |S_*-S|\le(\rho-1)S+\eta.
\]
For $i\in A$, the positive-part map is $1$-Lipschitz, so
\[
 |h_i^*-h_i|\le |S_*-S|+|\omega_i-g_i|
              \le2(\rho-1)S+\eta.
\]
Adding the inactive coordinates gives
\begin{align}
  \sum_i|h_i^*-h_i|
  &\le2n(\rho-1)S+(n+1)\eta\\
  &\le\left[\frac{15n}{\sigma}
      +\frac{16n(n+1)Q}{\sigma^2}\right]
      \frac{\varepsilon}{b}
   \le\frac{K\varepsilon}{b},
   \label{eq:derivative_estimate}
\end{align}
where we take
\[
  K:=\frac{32n(n+1)Q}{\sigma^2}.
\]
Here $\rho-1\le5\varepsilon/2$ and $S\le2\rho/(\sigma b)$.
Our choice of $\varepsilon_0$ also gives
\[
  K\varepsilon\le\frac{n+1}{Q}\le\frac12.
\]

The factor $1/b$ will be offset by the small-bid probability $H_i(b)\le uQb$ in the coupling below.

As a last step, we need to do an exact independent comparison and calculate its total-variation error.
\paragraph{Exact independent comparison and TV error.}

For $\varepsilon\le\varepsilon_0$ define
\begin{equation}
  H_i^*(b)=\exp\left(-\int_b^\beta h_i^*(t)\,dt\right)
  \quad(0<b\le\beta),\qquad H_i^*(0)=0.
 \label{eq:regular_new_cdfs}
 \end{equation}
Nonnegativity of $h_i^*$ makes \eqref{eq:regular_new_cdfs} nondecreasing, with value one at $\beta$. It is locally absolutely continuous away from zero. Write $d=K\varepsilon\le1/2$. The original distributions also satisfy
$H_i(b)=\exp(-\int_b^\beta h_i)$, and hence
\begin{equation}
 H_i^*(b)\le H_i(b)(\beta/b)^d
           \le uQb(\beta/b)^d\longrightarrow0
 \quad(b\downarrow0).
 \label{eq:regular_endpoint}
\end{equation}
Thus there is no missing mass or atom at zero. Both the original and new distributions have densities on the positive bid interval.

\emph{Global best responses and distintegration.}
The new competing distribution $G_i^*=\prod_{j\ne i}H_j^*$ has
logarithmic derivative $g_i^*=\min\{S_*,\omega_i\}$. Put
\[
 \phi_i^*(b)=b+1/g_i^*(b)=\max\{v_i(b),z(b)\}.
\]
This function is nondecreasing. Where $h_i^*>0$, $\omega_i<S_*$, so $\phi_i^*=v_i$. For a fixed value $v$, expected utility from a positive bid $t$ is $(v-t)G_i^*(t)$, whose derivative is
\[
 G_i^*(t)g_i^*(t)[v-\phi_i^*(t)]
\]
almost everywhere. At $v=\phi_i^*(b)$ it is nonnegative before $b$ and nonpositive afterward. Thus $b$ is a global best response, not just a stationary bid. Bid zero wins with probability zero, and bids above $\beta$ pay more for the same certain allocation. Since a new sampled bid lies where $h_i^*>0$ almost surely, its assigned value is
$v_i(b)=\phi_i^*(b)$. Let $F_i^*$ be the distribution of $v_i(B_i^*)$, and let $s_i^*(v)$ be a regular conditional distribution of $B_i^*$ given $v_i(B_i^*)=v$. The independent value--bid pairs induce independent values with distribution $\nu^*=\bigotimes_i F_i^*$. Almost every sampled bid is a global best response for its assigned value, so the kernels $s_i^*$ form a Bayes--Nash equilibrium under $\nu^*$. Flat parts of $v_i$ may
produce value atoms and mixed conditional bids. The surplus bound ensures no overbidding, and all constructed values lie in $[0,1]$.
\paragraph{Total Variation.}
For a fixed bidder put $m_i=\min\{h_i,h_i^*\}$ and define
\[
  J_i(b)=\exp\!\left(-\int_b^\beta m_i(t)\,dt\right),
  \qquad
  E_i(b)=\exp\!\left(-\int_b^\beta(h_i-m_i)(t)\,dt\right),
\]
and define $E_i^*$ similarly with $h_i^*-m_i$. Extend these CDFs
by their limits at zero and by one above $\beta$. Draw independent
$C_i,D_i,D_i^*$ with these distributions. Then
$\max\{C_i,D_i\}$ has distribution $H_i=J_iE_i$, and
$\max\{C_i,D_i^*\}$ has distribution $H_i^*=J_iE_i^*$.
The two bids agree unless an excess draw exceeds the common draw.
Since $E_i'=(h_i-m_i)E_i\le h_i-m_i$ on $(0,\beta)$,
\[
\begin{aligned}
  \|H_i-H_i^*\|_{\mathrm{TV}}
  &\le \Pr(D_i>C_i)+\Pr(D_i^*>C_i)\\
  &=\int_0^\beta J_i(b)\,[E_i'(b)+(E_i^*)'(b)]\,db\\
  &\le\int_0^\beta J_i(b)\,|h_i(b)-h_i^*(b)|\,db.
\end{aligned}
\]
Atoms of the component distributions at zero contribute nothing to the strict
exceedance events. Also,
\[
 J_i(b)\le H_i(b)
       \exp\left(\int_b^\beta|h_i-h_i^*|\right)
 \le uQb(\beta/b)^d.
\]
Summing over bidders and using
$\sum_i|h_i-h_i^*|\le d/b$ gives:
\[
  \sum_i\|H_i-H_i^*\|_{\mathrm{TV}}
  \le uQd\,\beta^d\int_0^\beta b^{-d}\,db
  =\frac{uQd\beta}{1-d}
  \le2uQK\varepsilon
  =\frac{64u n(n+1)Q^2}{\sigma^2}\,\varepsilon.
\]
This proves \eqref{eq:regular_bid_tv}.
\end{proof}

\paragraph{Proof of Theorem~\ref{thm:quant-rate}}

\begin{proof}
Write
\[
  B:=\frac{64u n(n+1)Q^2}{\sigma^2},
\]
the coefficient obtained in Lemma~\ref{lem:quant-comparison}.
Suppose first that $\varepsilon\le\varepsilon_0$.
Evaluate the original strategies under $\nu$, so their bids have product distribution $\bigotimes_i H_i$, and use the independent comparison equilibrium from that lemma. Let $i^*(b)$ be the winning bidder. Under the original product prior $\nu$, Lemma~\ref{lem:quant-structure} gives independent bids $B_i\sim H_i$ with values $v_i(B_i)$. The comparison equilibrium has independent bids $B_i^*\sim H_i^*$ with the same value cutoffs. Thus the same function
\[
 D(b):=c_{\mathrm{JL}}\max_i v_i(b_i)
       -v_{i^*(b)}(b_{i^*(b)})
\]
represents the welfare shortfall in both auctions.
Both product bid distributions are atomless. The function $D$ takes values in $[c_{\mathrm{JL}}-1,c_{\mathrm{JL}}]$, an interval of length one. The Jin--Lu theorem gives $\mathbb E[D(B^*)]\le0$. Coupling the bids independently across bidders gives
\[
 \left\|\bigotimes_iH_i-\bigotimes_iH_i^*\right\|_{\mathrm{TV}}
 \le\sum_i\|H_i-H_i^*\|_{\mathrm{TV}}\le B\epsilon.
\]

The product bid distributions differ in total variation by at most $B\varepsilon$, so
\[
  c_{\mathrm{JL}}\mathbb E_\nu[V_{(1)}]
     -\mathbb E_\nu[W]\le B\varepsilon.
\]
Changing the value distribution from $\nu$ to $\mu$ changes the expectation of this signed shortfall by at most $\tau$. Consequently,
\begin{equation}
  \mathbb E_\mu[W]
  \ge c_{\mathrm{JL}}\mathbb E_\mu[V_{(1)}]
       -\tau-B\varepsilon.
  \label{eq:quant-intermediate}
\end{equation}
For $\varepsilon>\varepsilon_0$, the same inequality is immediate: $B\varepsilon>B\varepsilon_0=2u(n+1)>c_{\mathrm{JL}}$,
while the signed shortfall is at most $c_{\mathrm{JL}}$. Evaluated at the original strategies, $D(s(V))$ is a bounded function of values with range of length 1. Changing the prior from $\nu$ to $\mu$ therefore changes its expectation by at most $\tau$.

It remains to bound $\varepsilon$ by $\tau$.
Since $\tau=\frac12\int|r-1|\,d\nu$, take a point where $|r-1|=\varepsilon>0$. The derivative bound ensures $|r-1|\ge\varepsilon/2$ within within sup-norm distance $1/(2nL)$ of this point. The intersection of this cube with $[0,1]^n$ has
volume at least $(2nL)^{-n}$ and product density at least $\ell^n$.
Thus
\[
  \tau\ge\frac{\ell^n\varepsilon}{4(2nL)^n},
  \qquad
  \varepsilon\le4\left(\frac{2nL}{\ell}\right)^n\tau.
\]
Substituting in \eqref{eq:quant-intermediate} proves the claim with
\[
  C=1+4B\left(\frac{2nL}{\ell}\right)^n.
\]
At $\varepsilon=0$, the independent-values theorem applies directly.
\end{proof}

The universal $1-1/e$ bound leaves shortfall at most
$(e^{-1}-e^{-2})\mathbb E[V_{(1)}]\le e^{-1}-e^{-2}$
after normalization. Thus on this class of priors we may take
\[
 \eta_n^{\mathrm{reg}}(t)
   :=\min\{e^{-1}-e^{-2},Ct\}.
\]
as the additive-error function in Theorem~\ref{thm:near_independence_welfare} on this class of priors. Importantly, we note that we have not attempted to optimize for the constant.

\paragraph{An example.}
Let $f_i(v)=1+\alpha_i(2v-1), \text{ with } |\alpha_i|\le\frac12,$
and define the joint density relative to the product of these
marginals by
\[
  r(v)=1+\varepsilon
       (2F_1(v_1)-1)(2F_2(v_2)-1),
  \qquad 0\le\varepsilon\le\frac1{13}.
\]
The centered factors preserve the marginals. These priors satisfy the assumptions with $\ell=1/2$, $u=3/2$, and $L=3$, even when the marginals differ. Since $F_i(V_i)$ is uniform under $F_i$, $\|r-1\|_\infty=\varepsilon$ and $\tau=\varepsilon/8$. To see this, observe $ \frac{\epsilon}{2} \mathbb E_\nu\!\left[|2F_1(V_1)-1|\,|2F_2(V_2)-1| \right]=\frac{\epsilon}{8}.$

\end{document}